\documentclass{acmart} 
\citestyle{acmauthoryear}

\usepackage{xpatch}
\makeatletter
\xpatchcmd{\ps@firstpagestyle}{Manuscript submitted to ACM}{}{\typeout{success}}{\typeout{fail}}
\@ACM@manuscriptfalse
\makeatother
\fancyfoot[RO,LE]{}

\usepackage{stmaryrd}  
  \SetSymbolFont{stmry}{bold}{U}{stmry}{m}{n} 
\usepackage{wasysym}   
\usepackage{xspace}
\usepackage{rotating}
\usepackage{paralist}  
\usepackage{listings}
\usepackage{newfloat}
  \DeclareFloatingEnvironment[within=none,fileext=lop]{listing}[Listing]
\usepackage{combelow}  
\usepackage{subfigure}

\usepackage{tikz}
  \usetikzlibrary{calc,shapes,arrows}

\newcommand{\citeetal}[1]{\citeauthor{#1}~\citeyearpar{#1}}


\renewcommand{\phi}{\varphi}
\renewcommand{\rho}{\varrho}

\newcommand{\set}[1]{\{#1\}}
\newcommand{\setx}[2]{\{#1\mathbin{\,|\,}#2\}}

\newcommand{\emap}{[\,]} 

\newcommand{\Nat}{\mathbb{N}}

\newcommand{\Rat}{\mathbb{Q}}
\newcommand{\Qpos}{\Rat_{\geq 0}}

\newcommand{\pto}{\nrightarrow}  
\newcommand{\pdef}{\operatorname{def}}


\newcommand{\Observations}{\mathit{Obs}}
\newcommand{\Val}{\mathit{val}}

\newcommand{\timestamp}{\operatorname{ts}}
\newcommand{\positions}{\mathit{pos}}
\newcommand{\istp}{\mathrm{tp}}
\newcommand{\tc}{\mathrm{mc}}

\newcommand{\tpp}{\mathit{tp}}
\newcommand{\tcp}{\mathit{mc}}
\newcommand{\pos}{\positions}
\newcommand{\sub}{\mathit{sub}}

\newcommand{\fPhi}[2]{\Phi_{#1}^{#2}}
\newcommand{\fPsi}[2]{\Psi_{#1}^{#2}}


\newcommand{\MTL}{\textup{MTL}}
\newcommand{\MTLdata}{{\MTL$^{\downarrow}$}\xspace}


\newcommand{\true}{\mathsf{t}}
\newcommand{\false}{\mathsf{f}}
\newcommand{\unknown}{\bot}

\newcommand{\freezequantifier}{\downarrow}
\newcommand{\freeze}[2]{\mathop{\freezequantifier^{\!#1}}{\!#2}\mathbin{\!.}}
\newcommand{\freezeshort}[1]{\mathop{\freezequantifier}{\!#1}\mathbin{\!.}}

\newcommand{\successor}{\operatorname{\fullmoon}}
\newcommand{\predecessor}{\operatorname{\newmoon}}

\newcommand{\since}{\mathbin{\mathsf{S}}}
\newcommand{\historically}{\operatorname{\text{\raisebox{-.03em}{\scalebox{1.2}{$\blacksquare$}}}}}
\newcommand{\once}{\operatorname{\text{\raisebox{-.15em}{\begin{turn}{45}\scalebox{1.1}{$\blacksquare$}\end{turn}}}}}

\newcommand{\until}{\mathbin{\mathsf{U}}}
\newcommand{\always}{\operatorname{\text{\raisebox{-.01em}{\scalebox{1.1}{$\Square$}}}}}
\newcommand{\eventually}{\operatorname{\text{\raisebox{-.15em}{\begin{turn}{45}\scalebox{1}{$\Square$}\end{turn}}}}}

\newcommand{\weakuntil}{\mathbin{\mathsf{W}}}



\newcommand{\Two}{\mathsf{2}}
\newcommand{\Three}{\mathsf{3}}

\newcommand{\omodels}{\mathrel|\joinrel\approx}

\newcommand{\sem}[2]{\llbracket{#1}\models{#2}\rrbracket}   
\newcommand{\esem}[2]{[{#1}\models{#2}]}                    
\newcommand{\osem}[2]{\llbracket{#1}\omodels{#2}\rrbracket} 
\newcommand{\eosem}[2]{[{#1}\omodels{#2}]}                  


\lstdefinelanguage{myML}{
  backgroundcolor = \color{lightgray!20!},
  firstnumber = 1,
  morekeywords={let,if,then,else,otherwise,and,or,not,return,switch,case,fun,raise,loop,for,each,foreach,do,with,while,proc,match,until,default,procedure},
  columns=fullflexible,
  sensitive=true,
  commentstyle = \itshape\color{blue},
  morecomment={[l]\#},
  morecomment={[l]//},  
  mathescape=true,
  basicstyle=\scriptsize,
  identifierstyle={\sffamily},
  escapechar = \&
}
\lstset{language=myML}

\newcommand{\ls}{\lstinline[basicstyle=\normalsize]}


\DeclareFontFamily{U}{mathb}{\hyphenchar\font45}
\DeclareFontShape{U}{mathb}{m}{n}{
      <5> <6> <7> <8> <9> <10> gen * mathb
      <10.95> mathb10 <12> <14.4> <17.28> <20.74> <24.88> mathb12
      }{}
\DeclareSymbolFont{mathb}{U}{mathb}{m}{n}

\DeclareMathSymbol{\sqsubset}{3}{mathb}{"80}
\DeclareMathSymbol{\sqsubseteq}{3}{mathb}{"84}
\DeclareMathSymbol{\sqsubsetneq}{3}{mathb}{"88}


\newcommand{\exampleendmark}{\hfill$\lhd$}

\allowdisplaybreaks

\begin{document}

\setcopyright{acmlicensed}
\acmJournal{TOCL}
\acmVolume{21}
\acmNumber{1}
\acmArticle{5}
\acmYear{2019}
\acmMonth{9}
\acmPrice{15.00}
\acmDOI{10.1145/3355609}

\theoremstyle{acmdefinition}
\newtheorem{remark}[theorem]{Remark}

\title{Runtime Verification over Out-of-order Streams}
\titlenote{Parts of the work described in this paper have been
  previously published in the conference papers
  \cite{Basin_etal:failureaware_rv} and \cite{Basin_etal:outoforder}.
  \\
  This is the authors' version of the work. It is posted here for your
  personal use.  Not for redistribution.  The definitive version was
  published in the ACM Transactions on Computational Logic,
  https://doi.org/10.1145/3355609.}

\author{David Basin}
\affiliation{%
  \institution{ETH Z\"urich}
  \department{Department of Computer Science}
  \streetaddress{Universit\"atstrasse 6}
  \postcode{8092}
  \city{Zurich}
  \country{Switzerland}}
\email{basin@inf.ethz.ch}

\author{Felix Klaedtke}
\affiliation{%
  \institution{NEC Laboratories Europe GmbH}
  \streetaddress{Kurf\"ursten-Anlage 36}
  \postcode{69115}
  \city{Heidelberg}
  \country{Germany}}
\email{felix.klaedtke@neclab.eu}
\orcid{0000-0002-7572-1158}

\author{Eugen Z\u{a}linescu}
\affiliation{%
  \institution{Technische Universit\"at M\"unchen}
  \department{Institut f\"ur Informatik}
  \streetaddress{Boltzmanstra\ss{}e 3}
  \postcode{85748}
  \city{Garching}
  \country{Germany}}
\email{eugen.zalinescu@in.tum.de}
\orcid{0000-0002-2312-5561}

\begin{abstract}
  We present an approach for verifying systems at runtime.  Our
  approach targets distributed systems whose components communicate
  with monitors over unreliable channels, where messages can be
  delayed, reordered, or even lost.  Furthermore, our approach handles
  an expressive specification language that extends the real-time
  logic MTL with freeze quantifiers for reasoning about data values.
  The logic's main novelty is a new three-valued semantics that is
  well suited for runtime verification as it accounts for partial
  knowledge about a system's behavior. Based on this semantics, we
  present online algorithms that reason soundly and completely about
  streams where events can occur out of order.  We also evaluate our
  algorithms experimentally.  Depending on the specification, our
  prototype implementation scales to out-of-order streams with
  hundreds to thousands of events per second.
\end{abstract}

\keywords{Runtime verification, temporal logic, Kleene logic,
  stream processing, distributed systems}

\begin{CCSXML}
<ccs2012>
<concept>
<concept_id>10003752.10003790.10002990</concept_id>
<concept_desc>Theory of computation~Logic and verification</concept_desc>
<concept_significance>500</concept_significance>
</concept>
<concept>
<concept_id>10003752.10003790.10003793</concept_id>
<concept_desc>Theory of computation~Modal and temporal logics</concept_desc>
<concept_significance>500</concept_significance>
</concept>
<concept>
<concept_id>10003752.10003790.10011192</concept_id>
<concept_desc>Theory of computation~Verification by model checking</concept_desc>
<concept_significance>500</concept_significance>
</concept>
<concept>
<concept_id>10003752.10003753.10003760</concept_id>
<concept_desc>Theory of computation~Streaming models</concept_desc>
<concept_significance>300</concept_significance>
</concept>
<concept>
<concept_id>10003752.10003753.10003765</concept_id>
<concept_desc>Theory of computation~Timed and hybrid models</concept_desc>
<concept_significance>100</concept_significance>
</concept>
</ccs2012>
\end{CCSXML}

\ccsdesc[500]{Theory of computation~Logic and verification}
\ccsdesc[500]{Theory of computation~Modal and temporal logics}
\ccsdesc[500]{Theory of computation~Verification by model checking}
\ccsdesc[300]{Theory of computation~Streaming models}
\ccsdesc[100]{Theory of computation~Timed and hybrid models}

\maketitle
\renewcommand{\shortauthors}{Basin et al.}

\section{Introduction}
\label{sec:intro}

Distributed systems are omnipresent and complex, and they can
malfunction for many reasons, including software bugs and hardware or
network failures.  Monitoring is an attractive option for verifying at
runtime whether a system behavior is correct with respect to a given
specification.  But distribution opens new challenges.  The monitors
themselves become components of the (extended) system and like any
other system component they may exhibit delays, finite or even
infinite, when communicating with other components.

Various runtime-verification approaches exist for different kinds of
systems, including distributed
systems~\cite{Barringer_etal:eagle,Bauer_etal:rv_tltl,MN04-signals,Meredith_etal:mop,Basin_etal:rv_mfotl,
  Sen_etal:decentralized_distributed_monitoring,Bauer_Falcone:decentralised_monitor,Falcone_etal:decentralized_monitor_regular,Mostafa_Bonakdarbour:decentralized_rv}.
The specification languages used in these approaches are typically
based on temporal logics or finite-state machines, which describe the
correct system behavior in terms of \emph{infinite} streams of system
actions.  However, at any point in time, a monitor has only partial
knowledge about the system's behavior.  In particular, a monitor can
at best only be aware of the actions the system performed so far,
which correspond to a finite prefix of the infinite action stream.
For this reason, many of the runtime-verification approaches rely on
an extension of the standard Boolean semantics of the linear-time
temporal logic LTL with a third truth value, as proposed by
\citeetal{Bauer_etal:ltl_rv}.  Namely, an LTL formula evaluates to the
Boolean truth value~$b$ on a finite stream of actions~$\sigma$ if the
formula evaluates to $b$ on all infinite streams that extend~$\sigma$;
otherwise, the formula's truth value is unknown on $\sigma$.

This three-valued semantics, however, only accounts for settings where
monitors are always aware of all previously performed actions.  It is
insufficient to reason soundly and completely about system behavior at
runtime when, for example, unreliable channels are used to inform the
monitors about the actions performed.  In fact, the existing
runtime-verification approaches are of limited use for distributed
systems where components might crash or network failures occur, for
example, when a component is temporarily unreachable and a monitor
therefore cannot learn the component's behavior during this time
period.  Even in the absence of failures, monitors can receive
messages about the system behavior in any order due to network delays.
A naive solution for coping with out-of-order message delivery is to
have the monitor buffer messages and reorder them prior to processing
them.  However, this can delay reporting a violation when the
violation is already detectable on some of the buffered messages.
This is undesirable for applications where one cannot afford to wait
and the monitor should promptly output its verdict. Moreover, the
verdict should remain correct when some of the monitor's knowledge
gaps are subsequently closed.
Another limitation concerns the expressivity of the specification
languages used by the existing runtime-verification approaches for
distributed systems.  It is not possible to express real-time
constraints, which are common requirements for distributed
systems. Such constraints specify, for example, deadlines to be met.
Furthermore, the supported specification languages cannot handle data
values.

In this paper, we present a runtime-verification approach that
overcomes these limitations.  Our approach handles specifications that
are given as formulas in an extension of the real-time logic
MTL~\cite{Koymans:realtime_properties,AlurHenzinger:realtimelogics_survey}.
Namely, we extend MTL with a freeze
quantifier~\cite{Henzinger:half_order} to extract data values from
events and bind these values to logical variables.  We call this
extension \MTLdata (pronounced ``MTL freeze''), where $\downarrow$ is
the symbol for the freeze quantifier.  Our runtime-verification
approach accounts for out-of-order message deliveries and soundly
operates in the presence of failures, such as components crashing.  We
also provide completeness guarantees for our approach, roughly meaning
that in the absence of failures but with arbitrary finite message
delays, violations and satisfactions of specifications are eventually
reported.
We build upon a timed model for distributed
systems~\cite{Cristian_Fetzer:timed_model}.  The system components use
their local clocks to timestamp observations, which they send to the
monitors. The monitors use these timestamps to determine the elapsed
time between observations, for example, to check whether real-time
constraints are met.  Furthermore, the timestamps totally order the
observations.  This is in contrast to a time-free
model~\cite{Fischer_etal:consensus}, where the events of a distributed
system can only be partially ordered, for example, using Lamport
timestamps~\cite{Lamport:clocks}.  However, since the accuracy of
existing clocks is limited, the monitors' conclusions might only be
valid for the provided timestamps. See Section~\ref{subsec:discussion},
where we elaborate on this point.

A cornerstone of our monitoring approach is a new three-valued
semantics for \MTLdata that is well suited to reason in settings where
system components communicate with the monitors over unreliable
channels.  Specifically, we define \MTLdata's semantics over the three
truth values~$\true$,~$\false$, and~$\unknown$.  We interpret these
truth values as in Kleene logic~\cite{Kleene:1950} and conservatively
extend the logic's standard Boolean semantics, where $\true$ and
$\false$ stand for ``true'' and ``false,'' respectively, and the third
truth value~$\unknown$ stands for ``unknown'' and accounts for the
monitor's knowledge gaps.  The models of \MTLdata are finite words
where knowledge gaps are explicitly represented.  Intuitively, a
finite word corresponds to a monitor's knowledge about the system
behavior at a given time and the knowledge gaps may result from
message delays, losses, crashed components, and the like.  Critically
in our setting, reasoning is monotonic with respect to the partial
order on truth values, where $\unknown$ is less than $\true$ and
$\false$, and $\true$ and $\false$ are incomparable.  This
monotonicity property guarantees that closing knowledge gaps does not
invalidate previously obtained Boolean truth values.

We also present online algorithms for verifying systems at runtime
with respect to \MTLdata specifications.  Our algorithms' output is
sound and complete for \MTLdata's three-valued semantics and with
respect to the monitor's partial knowledge about the actions performed
at each point in time.  In a nutshell, the algorithms work as follows.
They receive as input timestamped messages from the system components,
which describe the actions these components perform.  No assumptions
are made on the order in which these messages are received.
The algorithms update their state for each received message.  This
state comprises an acyclic graph structure for reasoning about the
system behavior, that is, computing verdicts about the monitored
specification's fulfillment.  The graph's nodes store the truth values
of the subformulas for the different times that data values are frozen
to quantified variables, including the times with no or only partial
knowledge.  The graph is refined when the monitor receives knowledge
about a specific point in time, whereby the nodes representing the
knowledge gap are split and instantiated.  In each such update, the
algorithms propagate data values down to the graph's leaves and
propagate Boolean truth values for subformulas up along the graph's
edges.  When a Boolean truth value is propagated to a root node of the
graph, the algorithms output a verdict.

Overall, our main contributions are as follows.  First, we define a
new three-valued semantics for a temporal logic, which is well suited
for runtime verification, in particular, for reasoning about
incomplete traces. Second, we present online algorithms to reason
soundly and completely about incomplete traces.  Moreover, these
algorithms output verdicts promptly. Third, we experimentally evaluate
the performance of our algorithms and explore the performance impact
on handling messages that arrive out of order.  Finally, we describe
the deployment of our online algorithms for verifying distributed
systems at runtime.

The remainder of this paper is structured as follows.  In
Section~\ref{sec:prelim}, we provide preliminaries.  In
Section~\ref{sec:mtl}, we present our new three-valued semantics for
monitoring. In Sections~\ref{sec:prop} and~\ref{sec:data}, we
present our monitoring algorithms, including a proof of their
correctness.  We evaluate our algorithms in Section~\ref{sec:eval}.
In Section~\ref{sec:app}, we describe the deployment of our
runtime-verification approach for distributed systems.  Finally, in
Sections~\ref{sec:related} and~\ref{sec:concl}, we discuss related
work and draw conclusions.

\section{Preliminaries}
\label{sec:prelim}

In this section, we recall standard notation and terminology that will
be used throughout the paper.

\subsubsection*{Intervals.}

An \emph{interval} $I$ is a nonempty subset of the positive
rationals~$\Qpos$ such that if $a,b\in I$ and $a\leq c\leq b$ then
$c\in I$, for all $a,b,c\in\Qpos$. We use standard notation and
terminology for intervals. For example, $(a,b]$ denotes the interval
that is left-open with bound~$a$ and right-closed with bound~$b$.
Note that an interval $I$ with cardinality $|I|=1$ is a singleton
$I=\set{\tau}=[\tau,\tau]$, for some $\tau\in\Qpos$.  An interval~$I$
is \emph{unbounded} if its right bound is~$\infty$, and \emph{bounded}
otherwise.  With less-than, $<$, we denote the partial order on
intervals, that is $I<J$ iff $I\cap J=\emptyset$ and $I$'s right bound
is not greater than $J$'s left bound.  Let
$I-J := \setx{\tau-\tau'}{\tau\in I\text{ and }\tau'\in J}\cap\Qpos$.

\subsubsection*{Partial Functions.}

For a partial function $f:A\pto B$, let
$\pdef(f):=\setx{a\in A}{f(a)\text{ is defined}}$.  If
$\pdef(f)=\{a_1,\dots,a_n\}$, for some $n\in\Nat$, then we also write
$[a_1\mapsto f(a_1),\dots, a_n\mapsto f(a_n)]$ for $f$, when $f$'s
domain $A$ and its codomain $B$ are irrelevant or clear from the
context.  Note that $\emap$ denotes the partial function that is
undefined everywhere.  We also carry over the notation for set
comprehension, for instance,
$[a\mapsto a+1\mathbin{|} a\geq0\text{ and $a$ is even}]$ denotes the
partial function that is defined on the nonnegative even integers and
returns their successor.
Furthermore, we write $f[a\mapsto b]$ to denote the update of a
partial function $f:A\pto B$ at $a\in A$, that is, $f[a\mapsto b]$
equals $f$, except that $a$ is mapped to $b$ if $b\in B$, and
$a\not\in\pdef(f[a\mapsto b])$ if $b\notin B$.
With $f[a\mapsto\bot]$ we denote the restriction of $f$ to the domain
$\pdef(f)\setminus\set{a}$.
Finally, for partial functions $f, g:A\pto B$, we write
$f\sqsubseteq g$ if $\pdef(f)\subseteq \pdef(g)$ and $f(a)=g(a)$, for
all $a\in\pdef(f)$.

\subsubsection*{Truth Values.}

Let $\Three$ be the set~$\set{\true,\false,\bot}$, where
$\true$~(true) and $\false$~(false) denote the standard Boolean
values, and $\unknown$ denotes the truth value ``unknown.''
Table~\ref{tab:connectives} shows the truth tables of some standard
logical operators over $\Three$.  Observe that these operators
coincide with their Boolean counterparts when restricted to the
set~$\Two:=\{\true,\false\}$.
\begin{table}[t]
  \caption{Truth tables for three-valued logical operators 
    (strong Kleene logic).}
  \label{tab:connectives}
  \centering
  \renewcommand{\arraystretch}{1}
  \begin{tabular}{c@{\ } | @{\ }c }
    $\neg$   &  \\
    \hline
    $\true$  & $\false$ \\
    $\false$ & $\true$ \\
    $\unknown$   & $\unknown$ \\
  \end{tabular}
  \qquad
  \begin{tabular}{ c@{\ } | @{\ }c  c  c }
    $\vee$   & $\true$ & $\false$ & $\unknown$ \\
    \hline
    $\true$  & $\true$ & $\true$  & $\true$\\
    $\false$ & $\true$ & $\false$ & $\unknown$\\
    $\unknown$   & $\true$ & $\unknown$   & $\unknown$\\
  \end{tabular}
  \qquad
  \begin{tabular}{ c@{\ } | @{\ }c  c  c}
    $\wedge$ & $\true$ & $\false$ & $\unknown$ \\
    \hline
    $\true$  & $\true$  & $\false$ & $\unknown$\\
    $\false$ & $\false$ & $\false$ & $\false$\\
    $\unknown$   & $\unknown$   & $\false$ & $\unknown$
  \end{tabular}
  \qquad
  \begin{tabular}{ c@{\ } | @{\ }c  c  c}
    $\rightarrow$ & $\true$ & $\false$ & $\unknown$ \\
    \hline
    $\true$  & $\true$ & $\false$ & $\unknown$  \\
    $\false$ & $\true$ & $\true$  & $\true$ \\
    $\unknown$   & $\true$ & $\unknown$   & $\unknown$  
  \end{tabular}
\end{table}
We partially order the elements in $\Three$ by their knowledge:
$\unknown\prec\true$, $\unknown\prec\false$, and $\true$ and $\false$
are incomparable as they carry the same amount of knowledge.  Note
that $(\Three,\prec)$ is a lower semilattice where $\curlywedge$
denotes the meet.
We remark that the operators in Table~\ref{tab:connectives} are
monotonic, which ensures that reasoning is monotonic in
knowledge. Intuitively, when closing a knowledge gap, represented
by~$\unknown$, with $\true$ or $\false$, we never obtain a truth value
that disagrees with the previous one.

\subsubsection*{Timed Words.}

Let $\Sigma$ be an alphabet.  A \emph{timed word} over $\Sigma$ is an
infinite
word~$(\tau_0,a_0)(\tau_1,a_1)\ldots\in(\Qpos\times\Sigma)^\omega$,
where the sequence of $\tau_i$s is strictly monotonic and nonzeno,
that is, $\tau_i<\tau_{i+1}$, for every $i\in\Nat$, and for every
$t\in\Qpos$, there is some $i\in\Nat$ such that $\tau_i>t$.  Note that
we use a dense time domain and assume a nonfictitious clock semantics,
that is, there is no stuttering of equal timestamps.

\subsubsection*{Metric Temporal Logic.}

Let $P$ be a finite set of predicate symbols, where $\iota(p)$ denotes
the arity of $p\in P$. Furthermore, let $V$ be a set of variables and
$R$ a finite set of registers.
The syntax of the real-time logic \MTLdata is given by the grammar:
\begin{equation*}
  \phi
  \mathbin{\ ::=\ }
  \true \mathbin{\,\big|\,} 
  p(x_1,\dots,x_{\iota(p)}) \mathbin{\,\big|\,}
  \freeze{r}{x}\phi \mathbin{\,\big|\,} 
  \neg\phi \mathbin{\,\big|\,}
  \phi\vee\phi \mathbin{\,\big|\,}
  \predecessor_I\phi \mathbin{\,\big|\,}
  \successor_I\phi \mathbin{\,\big|\,}
  \phi\since_I\phi \mathbin{\,\big|\,}
  \phi\until_I\phi 
  \,,
\end{equation*}
where $p\in P$, $x, x_1,x_2\dots,x_{\iota(p)}\in V$, $r\in R$, and $I$
is an interval.  We remark that \MTLdata extends the standard
propositional metric temporal logic
(MTL)~\cite{Koymans:realtime_properties,AlurHenzinger:realtimelogics_survey}
with a freeze quantifier~$\downarrow$.  We call a formula an \emph{MTL
  formula} if all the predicate symbols occurring in it have arity $0$
and the freeze quantifier does not occur in it.

A formula is \emph{closed} if each variable occurrence is bound by a
freeze quantifier.  A formula is \emph{temporal} if the connective at
the root of the formula's syntax tree is $\predecessor_I$,
$\successor_I$, $\since_I$, or $\until_I$.
We denote by~$\sub(\phi)$ the set of $\phi$'s subformulas.
We employ standard syntactic sugar. For example, $\phi\wedge\psi$
abbreviates $\neg(\neg\phi\vee\neg\psi)$, $\phi\rightarrow\psi$
abbreviates $\neg\phi\vee\psi$, and $\eventually_I\phi$
(``eventually'') and $\always_I\phi$ (``always'') abbreviate
$\true\until_I\phi$ and $\neg\eventually_I\neg\phi$, respectively.
The past-time counterparts $\once_I\phi$ (``once'') and
$\historically_I\phi$ (``historically'') are defined as expected.  The
nonmetric variants of the temporal connectives are also easily
defined, for example, $\always\phi:=\always_{[0,\infty)}\phi$.  We
also use standard conventions concerning the connectives' binding
strength to omit parentheses.  For example, $\neg$ binds stronger than
$\wedge$, which binds stronger than $\vee$, and the connectives
$\neg$, $\vee$, etc. bind stronger than the temporal connectives,
which bind stronger than the freeze quantifier.  Finally, to simplify
notation, we omit the superscript~$r$ in formulas like
$\freeze{r}{x}\phi$ whenever $r\in R$ is irrelevant or clear from the
context.

\begin{example}
  \label{ex:mtl}
  Before defining \MTLdata's semantics, we provide some intuition.
  The following formula formalizes the policy that whenever a customer
  executes a transaction that exceeds some threshold (e.g.,~\$2,000),
  then this customer must not execute any other transaction for a
  fixed time period (e.g., 3~days).
  \begin{equation*}
    \always 
    \freeze{\mathit{cid}}{c}\freeze{\mathit{tid}}{t}\freeze{\mathit{sum}}{a}
    \mathit{trans}(c,t,a)\wedge a>2000 \to
    \always_{(0,3]}\freeze{\mathit{tid}}{t'}\freeze{\mathit{sum}}{a'}
    \neg\mathit{trans}(c,t',a')
  \end{equation*}
  Note that in the formula, we take the liberty to deviate slightly
  from the given grammar, which does not include constant and function
  symbols.  Such an extension would be straightforward, but we omit it
  for the sake of brevity.  In particular, the formula contains the
  constant symbol~$2000$, interpreted as expected.  Furthermore, the
  binary predicate symbol~$>$, also with its expected rigid
  interpretation, is written in infix.

  We assume that the predicate symbol $\mathit{trans}$ is interpreted
  as a singleton relation or the empty set at any point in time.  For
  instance, the interpretation $\set{(\mathit{Alice},42,99)}$ of
  $\mathit{trans}$ at time~$\tau$ describes the action of
  $\mathit{Alice}$ executing a transaction with identifier~$42$ with
  the amount \$99 at time~$\tau$.  When the interpretation is the
  empty set, no transaction is executed.
  We further assume that when the interpretation of the predicate
  symbol $\mathit{trans}$ is nonempty, the registers $\mathit{cid}$,
  $\mathit{tid}$, and $\mathit{sum}$ store (a)~the transaction's
  customer, (b)~the transaction identifier, and (c)~the transferred
  amount, respectively.  If the interpretation is the empty set, then
  the registers store a dummy value, representing undefinedness.

  The variables $c$, $t$, $a$, $t'$, and~$a'$ are frozen to the
  respective register values.  For example, $c$ is frozen to the value
  stored in the register~$\mathit{cid}$ at each point in time and is
  used to identify subsequent transactions from this customer.  Also
  note that, for instance, the variables $t$ and~$t'$ are frozen to
  values stored in the registers~$\mathit{tid}$ at different times.
  The freeze quantifier can be seen as a weak form of the standard
  first-order quantifiers~\cite{Henzinger:half_order}. Since each
  register stores exactly one value at any time, it is irrelevant
  whether we quantify existentially or universally over a register's
  value.  \exampleendmark
\end{example}

Let $D$---the \emph{data domain}---be a nonempty set of values.
Furthermore, let $\Sigma$ be the set of the pairs $(\sigma,\rho)$,
where $\sigma$ is a function over $P$ with
$\sigma(p)\subseteq D^{\iota(p)}$ for $p\in P$ and $\rho$ is a
function over $R$ with $\rho(r)\in D$ for $r\in R$.  Intuitively,
$\sigma$ interprets the predicate symbols at the given time point and
$\rho$ provides the values of the registers in $R$.
\MTLdata's Boolean semantics is defined inductively over the formula
structure.  We define a function
$\phi\mapsto\sem{w,i,\nu}{\phi}\in\Two$, for a given timed word~$w$
over $\Sigma$, $i\in\Nat$, and a valuation~$\nu:V\to D$.  Let
$w=\big(\tau_0,(\sigma_0,\rho_0)\big)\,\big(\tau_1,(\sigma_1,\rho_1)\big)\dots$.
\begin{align*}
    \sem{w,i,\nu}{\true}
    :=\ &
    \true
    \\
    \sem{w,i,\nu}{p(\overline{x})}
    :=&
    \begin{cases}
      \true & \text{if $\nu(\overline{x})\in\sigma_i(p)$}
      \\
      \false & \text{otherwise}
    \end{cases}
    \\
    \sem{w,i,\nu}{\freeze{r}{x}\phi}
    :=\ &
    \sem{w,i,\nu[x\mapsto \rho_i(r)]}{\phi}
    \\[.1cm] 
    \sem{w,i,\nu}{\neg\phi}
    :=\ &
    \neg\sem{w,i,\nu}{\phi}
    \\
    \sem{w,i,\nu}{\phi\vee\psi}
    :=\ &
    \sem{w,i,\nu}{\phi}
    \vee
    \sem{w,i,\nu}{\psi}
    \\
    \sem{w,i,\nu}{\predecessor_I\phi}
    :=\ &
    i>0\wedge\tau_i-\tau_{i-1}\in I\wedge \sem{w,i-1,\nu}{\phi}
    \\
    \sem{w,i,\nu}{\successor_I\phi}
    :=\ &
    \tau_{i+1}-\tau_i\in I\wedge \sem{w,i+1,\nu}{\phi}
    \\
    \sem{w,i,\nu}{\phi\since_I\psi}
    :=\ &
    \bigvee_{j\in\Nat, j\leq i}
      \big(
      \tau_i-\tau_j\in I \wedge
      \sem{w,j,\nu}{\psi} \wedge
      \bigwedge_{j<k\leq i}\sem{w,k,\nu}{\phi}
      \big)
    \\
    \sem{w,i,\nu}{\phi\until_I\psi}
    :=\ &
    \bigvee_{j\in\Nat, j\geq i}
      \big(
      \tau_j-\tau_i\in I \wedge
      \sem{w,j,\nu}{\psi} \wedge
      \bigwedge_{i\leq k<j}\sem{w,k,\nu}{\phi}
      \big)
\end{align*}
Note that we abuse notation here and identify the logic's constant
symbol~$\true$ with the Boolean value $\true$, and the
connectives~$\neg$ and $\vee$ with the corresponding logical
operators.  Furthermore, we use standard conventions, for example,
$p(\overline{x})$ abbreviates $p(x_1,\dots,x_{\iota(p)})$ and
$\nu(\overline{x})\in\sigma_i(p)$ abbreviates
$\big(\nu(x_1),\dots,\nu(x_{\iota(p)})\big)\in\sigma_i(p)$.  Finally,
note that the disjunction in the $\until_I$ case is infinite.

\section{Metric Temporal Logic for Monitoring}
\label{sec:mtl}

In this section, we present a three-valued semantics for \MTLdata that
conservatively approximates the logic's standard Boolean semantics.
Our new semantics is defined with monitoring in mind in that it
accounts for knowledge gaps that arise during monitoring, which may be
fully or partially filled later.  We first introduce in
Section~\ref{subsec:observations} the models of our semantics, which
support reasoning about incomplete traces.  Afterwards, in
Sections~\ref{subsec:semantics} and ~\ref{subsec:properties}, we present the semantics and establish
basic properties about it.  We
conclude by defining correctness requirements for monitoring in
Section~\ref{subsec:requirements}.

\subsection{Observations}
\label{subsec:observations}

A monitor usually has only partial knowledge about the behavior of the
system it monitors.  For instance, for nonterminating systems, a
monitor is only aware of a finite prefix of the system's behavior.
Thus, when modeling this behavior as a timed word, the monitor only
knows a finite prefix of this word.  Moreover, when communication to
the monitor is unreliable or delayed, the monitor may not even have
the entire finite prefix, but only portions thereof.  In the
following, we introduce a notion of observations that supports
reasoning based on partial information about the system behavior.

Throughout this section, we fix an alphabet~$\Sigma$.  We require that
$\Sigma$ is partially ordered and denote the partial order
by~$\sqsubset$.  Intuitively, $a\sqsubset b$ means that $a$ carries
less information than $b$.  Furthermore, we require that $\Sigma$ has
a least element~$a_0$.
\begin{definition}
  \label{def:observation}
  The set of \emph{observations} $\Observations(\Sigma)$ is inductively
  defined.
  \begin{itemize}[--]
  \item The word $\big([0,\infty),a_0\big)$ of length~$1$ is
    in $\Observations(\Sigma)$.
  \item If the word $w$ is in $\Observations(\Sigma)$, then the word
    obtained by applying one of the following transformations to $w$
    is in $\Observations(\Sigma)$.
    \begin{enumerate}[(T1)]
    \item \label{enum:observation_split} Some letter $(I,a)$ of $w$,
      with $|I|>1$, is replaced by the three-letter word
      \begin{equation*}
        \big(I\cap [0,\tau),a\big)\ 
        \big(\set{\tau},a\big)\  
        \big(I\cap(\tau,\infty),a\big)
        \,,
      \end{equation*}
      where $\tau\in I$ and $\tau>0$.  If $\tau=0$, then
      $(I,a)$ is replaced by the two-letter word
      $\big(\set{\tau},a\big)\,\big(I\cap(\tau,\infty),a\big)$.
    \item \label{enum:observation_removal} Some letter
      $(I,a)$ of $w$, with $|I|>1$ and $I$ bounded, is removed.
    \item \label{enum:observation_data} Some letter $(I,a)$ of $w$,
      with $|I|=1$, is replaced by $(I,a')$ with $a\sqsubset a'$.
    \end{enumerate}
  \end{itemize}
\end{definition}
For an observation $w$ of length $n\in\Nat$, let
$\positions(w):=\{0,\dots,n-1\}$.  We call $i\in\positions(w)$ a
\emph{time point} in $w$ if the interval~$I_i$ of the letter at
position~$i$ in $w$ is a singleton.  In this case, the element of
$I_i$ is the \emph{timestamp} of the time point~$i$, denoted by
$\timestamp_w(i)$.

Given the inductive definition of the set $\Observations(\Sigma)$, the
partial order over $\Sigma$ naturally extends to a partial order on
observations. We thereby obtain the following refinement relation on
observations.
\begin{definition}
  For $w,w'\in\Observations(\Sigma)$, let $w\sqsubset_1 w'$ iff $w'$
  is obtained from $w$ by one of the transformations
  (T\ref{enum:observation_split}), (T\ref{enum:observation_removal}),
  or (T\ref{enum:observation_data}).
  The observation $w'$ \emph{refines} the observation $w$ if
  $w\sqsubseteq w'$, where $\sqsubseteq$ is the reflexive-transitive
  closure of $\sqsubset_1$.
\end{definition}

\begin{example}
  \label{ex:trans-observation}
  Recall the set of predicates symbols $P=\{\mathit{trans}\}$ and the
  set of registers $R=\{\mathit{cid},\mathit{tid},\mathit{sum}\}$ from
  Example~\ref{ex:mtl}. For brevity, we ignore here the rigid
  interpretations of the constant symbol $2000$ and the binary
  predicate symbol $\geq$.  Furthermore, recall the data domain $D$
  that contains all customers and the positive integers.  Let $\Sigma$
  be the alphabet consisting of the pairs $(\sigma,\rho)$ with
  $\sigma:P\pto 2^{D\times D\times D}$ and $\rho:R\pto D$.  Note that
  the partial orders on the two sets of partial functions extend to a
  partial order on $\Sigma$ and that $(\emap,\emap)$ is $\Sigma$'s
  least element.

  A monitor's knowledge can be represented by observations over
  $\Sigma$.
  A monitor's initial knowledge is represented by the observation
  $w_0 = \big([0,\infty),(\emap,\emap)\big)$.
  Suppose that a transaction of $\$99$ with identifier~$42$ from
  $\mathit{Alice}$ is executed at time~$3.0$.
  The monitor's initial knowledge $w_0$ is then updated by the
  transformations~(T\ref{enum:observation_split})
  and~(T\ref{enum:observation_data}) to
  $w_1 = \big([0,3.0),(\emap,\emap)\big)\,
  \big(\{3.0\},(\sigma,\rho)\big)\,\big((3.0,\infty),(\emap,\emap)\big)$,
  where $\sigma(\mathit{trans}) = \set{(\mathit{Alice}, 42, 99)}$ and
  $\rho = [\mathit{cid}\mapsto\mathit{Alice}, \mathit{tid}\mapsto 42,
  \mathit{sum}\mapsto 99]$. Note that $w_0\sqsubseteq w_1$.
  
  If the monitor also receives the information that no action occurred
  in the interval~$[0,3.0)$, then its updated knowledge is represented
  by the
  observation~$\big(\{3.0\},(\sigma,\rho)\big)\,
  \big((3.0,\infty),(\emap,\emap)\big)$, obtained from $w_1$ by the
  transformation~(T\ref{enum:observation_removal}).
  The information that no action has occurred in an interval can be
  communicated explicitly or implicitly by the monitored system to the
  monitor, for instance, by attaching a sequence number to each
  action.  See Section~\ref{subsubsec:sequencenumber} for details.
  \exampleendmark
\end{example}
We remark that the interval associated with the last letter of an
observation is always unbounded. This reflects that a monitor is
unaware of what it will observe in the future.  More generally, a
letter $(I,a)$ of an observation with $|I|>1$ represents a knowledge
gap of the monitor. In particular, $a$ is the alphabet's least
element~$a_0$, meaning that nothing is known about the interpretation
of the predicate symbols and the register values during the time
period $I$.  Finally, note that according to
Definition~\ref{def:observation}, knowledge gaps $(I,a)$ can
completely disappear~(T\ref{enum:observation_removal}), or can be
partially resolved by adding a new time point where the interval is
split~(T\ref{enum:observation_split}), where
(T\ref{enum:observation_data}) can add additional knowledge to the new
time point by replacing $a$ with a letter that is larger with respect
to the alphabet's partial order.
For simplicity, we do not include a transformation in
Definition~\ref{def:observation} that allows one to shrink
nonsingleton intervals, that is, a transformation that replaces a
letter $(I,a)$ with $|I|>1$ by a letter $(I',a)$ with $|I'|>1$,
$I'\subsetneq I$, and $I'$ is unbounded if $I$ is unbounded.

\subsection{Three-valued Semantics}
\label{subsec:semantics}

\enlargethispage{\baselineskip}  
                                
\MTLdata's models under the three-valued semantics are observations,
which represent a monitor's partial knowledge
about the system behavior at a given point in time.  This is in
contrast to the models for the standard Boolean semantics for MTL,
which are timed words and capture the complete system
behavior in the limit.

For defining \MTLdata's three-valued semantics, we fix a \emph{data
  domain} $D$, which is a nonempty set of values with
$\unknown\not\in D$.  Furthermore, let $\Sigma$ be the alphabet
consisting of the letters $(\sigma,\rho)$, where $\sigma$ and $\rho$
are partial functions, namely,
$\sigma:P\pto \bigcup_{p\in P}2^{D^{\iota(p)}}$ and $\rho:R\pto D$.
Note that $\Sigma$ is partially ordered and its least element is
$(\emap,\emap)$.
Analogous to the definition of \MTLdata's Boolean semantics in
Section~\ref{sec:prelim}, we define the logic's three-valued semantics
by a function $\phi\mapsto \osem{w,i,\nu}{\phi}\in\Three$, for a given
observation~$w\in\Observations(\Sigma)$, $i\in\positions(w)$, and a
partial valuation~$\nu:V\pto D$.  We define this function inductively
over the formula structure.  In the following, we assume that
$w=\big(I_0,(\sigma_0,\rho_0)\big)\dots\big(I_{n-1},(\sigma_{n-1},\rho_{n-1})\big)$
and abuse notation by identifying the logic's constant symbol~$\true$
with the Boolean value $\true\in\Three$, and the connectives~$\neg$
and $\vee$ with the corresponding three-valued logical operators in
Table~\ref{tab:connectives}.
The nontemporal cases are as expected.
\begin{align*}
  \osem{w,i,\nu}{\true}
  :=\ &
  \true
  \\
  \osem{w,i,\nu}{p(\overline{x})}
  :=\ &
  \begin{cases}
    \true & \text{if $\overline{x}\in\pdef(\nu)$,
      $p\in\pdef(\sigma_i)$, and $\nu(\overline{x})\in\sigma_i(p)$}
    \\      
    \false & 
    \text{if $\overline{x}\in\pdef(\nu)$,
      $p\in\pdef(\sigma_i)$, and $\nu(\overline{x})\not\in\sigma_i(p)$}
    \\
    \unknown & \text{otherwise}
  \end{cases}
  \\
  \osem{w,i,\nu}{\freeze{r}{x}\phi}
  :=\ &
  \begin{cases}
    \osem{w,i,\nu[x\mapsto \rho_i(r)]}{\phi} &
    \text{if $x\in\pdef(\rho_i)$}
    \\
    \osem{w,i,\nu[x\mapsto \unknown]}{\phi} &\text{otherwise}
  \end{cases}
  \\
  \osem{w,i,\nu}{\neg\phi}
  :=\ &
  \neg\osem{w,i,\nu}{\phi}
  \\
  \osem{w,i,\nu}{\phi\vee\psi}
  :=\ &
  \osem{w,i,\nu}{\phi}
  \vee
  \osem{w,i,\nu}{\psi}
\end{align*}

The temporal cases are less straightforward.  In particular, the
definition must account for  letters in $w$ where a
nonsingleton interval represents knowledge gaps that may either
disappear or may be replaced by multiple letters in a refinement.  We
make use of the auxiliary functions
$\istp_w:\positions(w)\rightarrow\Three$ and
$\tc_{w,I}:\positions(w)\times\positions(w)\rightarrow\Three$, which
are as follows for the observation $w$ and an interval $I$.
\begin{equation*}
  \istp_w(i):=\begin{cases}
    \true & \text{if $|I_i|=1$}
    \\
    \unknown & \text{otherwise}
  \end{cases}
  \qquad\text{and}\qquad
  \tc_{w,I}(i,j):=\begin{cases}
    \true & 
    \text{if $I_i-I_j\not=\emptyset$ and $I_i-I_j\subseteq I$}
    \\
    \false & 
    \text{if $(I_i-I_j)\cap I=\emptyset$}
    \\
    \unknown & 
    \text{otherwise}
  \end{cases}
\end{equation*}
We use $\istp_w$ to check whether a position is a \emph{time point}
(hence, the name ``$\istp$''), and we use $\tc_{w,I}$ to check whether the
\emph{metric constraint}~$I$ of a temporal connective is valid or
unsatisfiable between two positions in $w$ (hence, the name ``$\tc$'').
Note that if $\tc_{w,I}(i,j)=\unknown$, then the metric constraint
between the positions $i$ and $j$ could either be satisfied or
violated, depending on some timestamps $\tau\in I_i$ and
$\tau'\in I_j$.

The semantics of the temporal connectives $\since_I$ and $\until_I$
is defined as follows.
\begin{align*}
  \osem{w,i,\nu}{\phi\since_I\psi}
  :=\ &
  \bigvee_{j\in\positions(w), j\leq i}
  \Big(\istp_w(j) \wedge \tc_{w,I}(i,j) \wedge 
  \osem{w,j,\nu}{\psi} 
  \wedge
  \bigwedge_{j<k\leq i}\big(\istp_w(k)\rightarrow
  \osem{w,k,\nu}{\phi}\big)\Big)
  \\
  \osem{w,i,\nu}{\phi\until_I\psi}
  :=\ &
  \bigvee_{j\in\positions(w), j\geq i}
  \Big(\istp_w(j) \wedge \tc_{w,I}(j,i) \wedge 
  \osem{w,j,\nu}{\psi} 
  \wedge
  \bigwedge_{i\leq k<j} \big(\istp_w(k)\rightarrow
  \osem{w,k,\nu}{\phi}\big)\Big)
\end{align*}
We comment on the definitions for $\phi\since_I\psi$ and
$\phi\until_I\psi$. First, note that $j$ ranges over so-called
``anchor'' positions and $k$ ranges over so-called ``continuation''
positions. For a position $j$ to be a ``valid'' anchor position, $j$
must be a time point, which is the case when
$\istp_w(j)=\true$. Otherwise, $\istp_w(j)=\unknown$.  Using the truth
value~$\false$ instead of $\unknown$ would be incorrect since it is
not yet known whether a refinement of $w$ will contain a time point
with a timestamp in $I_j$.
Furthermore, note that the function $\tc_{w,I}$ returns $\unknown$ if
it is unknown in $w$ whether the formula's metric constraint is always
satisfied or never satisfied for the positions $i$ and $j$.
Finally, suppose that a position $k$ between $j$ and $i$ is an
``invalid'' continuation position, that is, $\phi$'s truth value at
$k$ is $\false$.  If the interval $I_k$ is not a singleton,
then $\istp_w(k)$ ``downgrades'' this truth value to $\unknown$, since it
will be irrelevant in refinements of $w$ that do not contain any time
points with a timestamp in $I_k$.

Finally, we define the semantics of the temporal connectives
$\predecessor_I$ and $\successor_I$ as
\begin{equation*}
  \osem{w,i,\nu}{\predecessor_I\phi}:=
  c_0\vee c_{-1}\vee c_{-2}
  \qquad\text{and}\qquad
  \osem{w,i,\nu}{\successor_I\phi}:=
  c_0\vee c_1\vee c_2
\end{equation*}
with
\begin{equation*}
  c_k := \begin{cases}
    \tc_{w,I}(i,i)\wedge\osem{w,i,\nu}{\phi}\wedge\neg\istp_w(i)
    &\text{if $k=0$ and $I\not=\{0\}$,}
    \\
    \tc_{w,I}(i,i-1)\wedge\osem{w,i-1,\nu}{\phi}\wedge\istp_w(i-1)\wedge\istp_w(i)
    &\text{if $k=-1$ and $i\geq1$,}
    \\
    \tc_{w,I}(i+1,i)\wedge\osem{w,i+1,\nu}{\phi}\wedge\istp_w(i+1)\wedge\istp_w(i)
    &\text{if $k=1$ and $i<n-1$,}
    \\
    \tc_{w,I}(i,i-2)\wedge\osem{w,i-2,\nu}{\phi}\wedge\neg\istp_w(i-1)
    &\text{if $k=-2$ and $i\geq2$,}
    \\
    \tc_{w,I}(i+2,i)\wedge\osem{w,i+2,\nu}{\phi}\wedge\neg\istp_w(i+1)
    &\text{if $k=2$ and $i<n-2$,}
    \\
    \false 
    &\text{otherwise.}
  \end{cases}
\end{equation*}
We comment on the definition for $\successor_I\phi$ with $i<n-2$;
the other cases are analogous or restricted cases of this one.  One
might expect that the conjunct~$c_1$ is already sufficient. However,
having only $c_1$ could result in the wrong truth value $\false$ for
$\successor_I\phi$ at $i$ when $\osem{w,i+1,\nu}{\phi}=\false$.  If,
for example, $|I_i|>1$ then it is still possible to satisfy
$\successor_I\phi$ when refining the observation $w$ at $I_i$.  A
refinement of $w$ may consist of two time points with the timestamps
$\tau$ and $\tau'$ in $I_i$, where $\tau'-\tau\in I$ and $\phi$ is
true at the time point with timestamp~$\tau'$. The conjunct $c_0$
takes care of such a refinement at $i$.  The conjunct $c_2$ is
necessary when $|I_{i+1}|>1$.  In this case $i$ and $i+2$ are time
points in $w$.  The observation~$w$ may be refined by removing the
letter at position~$i+1$, resulting in an observation where $w$'s
letter at position~$i+2$ is the successor of $w$'s letter at position
$i$.
Note that $c_0$ and $c_2$ can be $\false$ or $\unknown$ but never
$\true$ because of the negative $\istp_w$ literals occurring in $c_0$
and $c_2$. Furthermore, again because of the $\istp_w$ literals, we
have that $c_0=c_2=\false$ whenever $c_1=\true$.  Finally, observe
that $\tc_{w,\{0\}}(i,i)\not=\false$.  However, the metric constraint
$\{0\}$ is only satisfiable for time points that have equal timestamps
and we require that timestamps are strictly increasing. Hence, the
additional constraint $I\not=\{0\}$ is needed when $k=0$.

Observe that it may be the case that $\osem{w,i,\nu}{\phi}\in\Two$
when $i$ is not a time point in $w$.  A trivial example is when
$\phi=\true$.  In a refinement of $w$, it might turn out that there
are no time points with timestamps in $I_i$, and hence a monitor
should not output a verdict for the specification~$\phi$ at
position~$i$ in $w$.  We address this artifact by downgrading (with
respect to the partial order~$\prec$) a Boolean truth value
$\osem{w,i,\nu}{\phi}$ to $\unknown$ when $i$ is not a time point. To
this end, we introduce the following variant of the semantics.
\begin{definition}
  \label{def:topsem}
  For a formula $\phi$, an observation $w$, $\tau\in\Qpos$, and a
  partial valuation $\nu$, we define
  \begin{equation*}
    \eosem{w, \tau,\nu}{\phi}:=
    \begin{cases}
      \osem{w,i,\nu}{\phi}
      & \text{if $\tau$ is the timestamp of some time point
        $i\in\positions(w)$, and}
      \\
      \unknown & \text{otherwise.}
    \end{cases}
  \end{equation*}
\end{definition}

\subsection{Properties}
\label{subsec:properties}

Our first theorem shows that \MTLdata's three-valued semantics
conservatively approximates its standard Boolean
semantics. Intuitively speaking, if a formula~$\phi$ evaluates to a
Boolean value for an observation at time $\tau\in\Qpos$, then $\phi$
has the same Boolean value at time~$\tau$ for any timed word that
refines the observation.
To state the theorem, we need the following definitions.  A timed
word~$w'$ \emph{refines} an observation~$w$, written $w\sqsubseteq w'$
for short, if for every $j\in\Nat$, there is some $i\in\positions(w)$,
such that $\tau_j\in I_i$, $\sigma_i\sqsubseteq \sigma'_j$, and
$\rho_i\sqsubseteq\rho'_j$, where $(I_\ell,(\sigma_\ell,\rho_\ell))$
and $(\tau_k,(\sigma'_k,\rho'_k))$, for $\ell\in\positions(w)$ and
$k\in\Nat$, are the letters of $w$ and $w'$, respectively.
Furthermore, similar to Definition~\ref{def:topsem}, we define for
$\tau\in\Qpos$, a timed word~$w$, a valuation $\nu$, and a formula
$\phi$,
\begin{equation*}
  \esem{w,\tau,\nu}{\phi}:=
  \begin{cases}
    \sem{w,j,\nu}{\phi} & 
    \text{if the $j$th letter of $w$ is $(\tau,(\sigma,\rho))$, and}
    \\
    \unknown & \text{otherwise.}
  \end{cases}
\end{equation*}
\begin{theorem}
  \label{thm:conservative}
  Let $\phi$ be a formula, $\mu$ a partial valuation, $\nu$ a total
  valuation, $u$ an observation, $v$ a timed word, and
  $\tau\in\Qpos$. If $u\sqsubseteq v$ and $\mu\sqsubseteq \nu$, then
  $\eosem{u,\tau,\mu}{\phi} \preceq \esem{v,\tau,\nu}{\phi}$.
\end{theorem}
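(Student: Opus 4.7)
The plan is to strengthen the statement to an inductive claim on arbitrary (not necessarily time-point) positions: for every $u\sqsubseteq v$, every $i\in\positions(u)$ and $j\in\positions(v)$ with $\tau^v_j\in I_i$, and every partial $\mu\sqsubseteq\nu$ with $\nu$ total, one has $\osem{u,i,\mu}{\phi}\preceq \sem{v,j,\nu}{\phi}$. From this the theorem follows: if $\tau$ is not the timestamp of a time point in $u$, then $\eosem{u,\tau,\mu}{\phi}=\unknown$ and the conclusion is immediate; otherwise $\tau$ is the singleton of some $I_i$ with $i$ a time point in $u$, which by the natural correspondence implicit in $u\sqsubseteq v$ matches some position $j$ in $v$ with $\tau^v_j=\tau$, so that $\esem{v,\tau,\nu}{\phi}=\sem{v,j,\nu}{\phi}$ and the strengthened claim applies directly. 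I would then prove the strengthened claim by structural induction on $\phi$.

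The easy cases reduce to the basic ingredients of the refinement. For $\true$ and $p(\overline{x})$, $\mu\sqsubseteq\nu$ and $\sigma_i\sqsubseteq\sigma'_j$ force the relevant variable values and predicate interpretation to agree whenever the three-valued verdict is Boolean. For $\freeze{r}{x}\phi$, if $x\in\pdef(\rho_i)$ then $\rho'_j(r)=\rho_i(r)$ and hence $\mu[x\mapsto\rho_i(r)]\sqsubseteq\nu[x\mapsto\rho'_j(r)]$, so the induction hypothesis applies; if $x\notin\pdef(\rho_i)$ the three-valued semantics uses the restriction $\mu[x\mapsto\unknown]$, which still satisfies $\mu[x\mapsto\unknown]\sqsubseteq\nu[x\mapsto\rho'_j(r)]$. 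Negation and disjunction follow from the monotonicity of the Kleene operators with respect to $\preceq$ noted in Table~\ref{tab:connectives}.

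The bulk of the argument is in the temporal cases. For $\phi\since_I\psi$ with $\osem{u,i,\mu}{\phi\since_I\psi}=\true$, a witness anchor $i'\leq i$ must satisfy $\istp_u(i')=\true$, $\tc_{u,I}(i,i')=\true$, and $\osem{u,i',\mu}{\psi}=\true$; the $v$-position $j'$ with $\tau^v_{j'}=\tau^u_{i'}$ then witnesses the Boolean disjunct, using that every continuation $k'$ with $j'<k'\leq j$ in $v$ maps back to some $k$ in $u$ with $i'<k\leq i$ so that the induction hypothesis transfers $\osem{u,k,\mu}{\phi}=\true$ to $\sem{v,k',\nu}{\phi}=\true$. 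In the $\false$ case, for each $j'\leq j$ the $u$-position $i'$ covering $\tau^v_{j'}$ must falsify some conjunct; the subtle point is that $\istp_u(k)\to\osem{u,k,\mu}{\phi}=\false$ only when $\istp_u(k)=\true$, so such a $k$ is a genuine time point and corresponds to a $v$-position, exactly what is needed to transfer the $\false$ verdict. The $\until_I$ case is symmetric. The main obstacle will be the predecessor and successor cases: since $c_0$ and $c_{\pm 2}$ contain negated $\istp_u$ literals and hence never yield $\true$, when $i$ is a time point in $u$ one must carefully check that the three disjuncts together reproduce the Boolean verdict on $v$, depending on whether the $v$-position adjacent to $j$ corresponds to $i\pm 1$ in $u$ (the standard $c_{-1}/c_1$ situation) or to some further $u$-position reached by bridging a non-singleton interval $I_{i\pm 1}$ (the $c_{-2}/c_2$ situation); this case analysis on the shape of the $u$-neighborhood of $i$ is where the bookkeeping is most delicate.
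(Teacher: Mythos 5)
Your proposal is correct and follows essentially the same route as the paper's proof: the paper likewise strengthens the statement to a position-wise claim $\osem{u,\pi(j),\mu}{\phi}\preceq\sem{v,j,\nu}{\phi}$ via a monotonic covering map $\pi$ satisfying exactly your conditions $\tau_j\in I_{\pi(j)}$, $\sigma_{\pi(j)}\sqsubseteq\sigma'_j$, $\rho_{\pi(j)}\sqsubseteq\rho'_j$, proves it by structural induction, and handles the temporal case (there $\until_I$, the mirror of your $\since_I$) with the same anchor/continuation transfer and the same observation that a false continuation conjunct forces $\istp_u(k)=\true$. The paper also omits the detailed case analysis for $\predecessor_I/\successor_I$, deferring to the similarity with $\until_I$, so your flagged-but-unexecuted bookkeeping there matches its level of detail.
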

\begin{proof}
  Let $(I_i,(\sigma_i,\rho_i))$ and $(\tau_j,(\sigma'_j,\rho'_j))$,
  for $i\in\positions(u)$ and $j\in\Nat$, be the letters of $u$
  and~$v$, respectively.
  Since $u\sqsubseteq v$, there is a function
  $\pi:\Nat\to\positions(u)$ such that (R1)~$\tau_j\in I_{\pi(j)}$,
  (R2)~$\sigma_{\pi(j)}\sqsubseteq \sigma'_j$, and
  (R3)~$\rho_{\pi(j)}\sqsubseteq\rho'_j$, for every $j\in\Nat$.
  It is easy to see that $\pi$ is monotonic.

  We prove by structural induction on $\phi$ that for every
  $i'\in\Nat$ and partial valuations~$\mu$ and~$\nu$ with
  $\pdef(\nu)=V$ and $\mu\sqsubseteq\nu$, it holds that
  $\osem{u,\pi(i'),\mu}{\phi} \preceq \sem{v,i',\nu}{\phi}$.  The
  theorem easily follows from this statement.
  Let $i'\in\Nat$, and let $\mu$ and $\nu$ be partial valuations with
  $\pdef(\nu)=V$ and $\mu\sqsubseteq\nu$.
  Furthermore, let $i=\pi(i')$.
  Note that the statement clearly holds for
  $\osem{u,i,\mu}{\phi}=\bot$.  Hence, it suffices to show that
  $\osem{u,i,\mu}{\phi} = \sem{v,i',\nu}{\phi}$, provided that
  $\osem{u,i,\mu}{\phi}\in\Two$.

  \emph{Base cases.} The case $\phi=\true$ is trivial. Consider the
  case $\phi=p(\overline{x})$, for some $p\in P$. As
  $\osem{u,i,\nu}{p(\overline{x})}\in\Two$, it holds that
  $\overline{x}\in\pdef(\nu)$ and $p\in\pdef(\sigma_i)$. It follows
  from the theorem's premise that
  $\mu(\overline{x})=\nu(\overline{x})$ and from~(R2) that
  $\sigma_{i}(p)=\sigma'_{i'}(p)$. Thus
  $\osem{u,i,\nu}{p(\overline{x})} = \sem{v,i',\nu}{p(\overline{x})}$.

  \emph{Inductive cases.} The cases where $\phi$ is of the form
  $\neg\alpha$ or $\alpha\vee\beta$ are straightforward and are omitted.
  We also omit the cases for $\predecessor_I\alpha$,
  $\successor_I\alpha$, and $\alpha\since_I\beta$, since they are
  similar to the case $\alpha\until_I\beta$.

  First, assume that $\phi$ is of the form $\freeze{r}{x}\psi$.
  Let $\eta=\mu[x\mapsto \rho_i(r)]$ if $r\in\pdef(\rho_i)$, and
  $\eta=\mu[x\mapsto\unknown]$ otherwise. Similarly, let
  $\eta'=\nu[x\mapsto \rho'_{i'}(r)]$ if $r\in\pdef(\rho'_{i'})$, and
  $\eta'=\mu[x\mapsto\unknown]$ otherwise.
  By~(R3), we have that if $r\in\pdef(\rho_i)$, then
  $r\in\pdef(\rho'_{i'})$ and $\rho_i(r)=\rho'_{i'}(r)$, and thus
  $\eta(x)=\eta'(x)$. Furthermore, if $r\not\in\pdef(\rho_i)$, then
  $x\not\in\pdef(\eta)$. Hence, $\eta\sqsubseteq\eta'$.
  It follows from the induction hypothesis that
  $\osem{u,i,\eta}{\psi} \preceq \sem{v,i',\eta'}{\psi}$ and therefore
  $\osem{u,i,\mu}{\phi}=\sem{v,i',\nu}{\phi}$.

  Assume that $\phi$ is of the form $\alpha\until_I\beta$. We consider
  first the case $\osem{u,i,\mu}{\alpha\until_I\beta}=\true$. By
  definition, there is some $j\in\positions(u)$ with $j\geq i$ such that
  $\istp_u(j)=\true$, $\tc_{u,I}(i,j)=\true$,
  $\osem{u,j,\mu}{\beta}=\true$, and
  $\istp_u(k)\to\osem{u,k,\mu}{\alpha}=\true$, for all $k$ with
  $i \leq k < j$.
  As $j$ is a time point in $u$, $\pi(j')=j$, for some $j'\in\Nat$.
  From~(R1), we have that $\tau_{j'}=\timestamp_{u}(j)$.  As
  $\tc_{u,I}(i,j)=\true$ and $I_j=\set{\tau_{j'}}$, we have that
  $\tau_{j'}-\tau\in I$, for all $\tau\in I_i$. From (R1), we have
  that $\tau_{i'}\in I_i$. Thus, $\tau_{j'}-\tau_{i'}\in I$~(I1).
  From the induction hypothesis, 
  $\osem{u,j,\mu}{\beta} \preceq \sem{v,j',\nu}{\beta}$. Hence,
  $\sem{v,j',\nu}{\beta}=\true$~(I2).
  We also have that 
  $\osem{u,\pi(k'),\mu}{\alpha}\preceq\sem{v,k',\nu}{\alpha}$, for
  any $k'\in\Nat$.
  Let $k'\in\Nat$ such that $i'\leq k'< j'$, and let $k=\pi(k')$. By
  the monotonicity of $\pi$, we have that $i\leq k\leq j$. Since $j$ is
  a time point in $u$ we also have that $k<j$.
  As $\istp_u(k)\to\osem{u,k,\mu}{\alpha}=\true$ and $\istp_u$ is
  never $\false$ by definition, we have that
  $\osem{u,k,\mu}{\alpha}=\true$. Then
  $\sem{v,k',\mu}{\alpha}=\true$~(I3).
  Summing up, from (I1), (I2), (I3), and as $k'$ was chosen
  arbitrarily, we obtain that
  $\sem{v,i',\nu}{\alpha\until_I\beta}=\true$.
  
  The case $\osem{u,i,\mu}{\alpha\until_I\beta} = \false$ is as
  follows.  Note that each disjunct in the definition of
  $\osem{u,i,\mu}{\alpha\until_I\beta}$ is $\false$.
  We fix an arbitrary $j'\geq i'$ and let $j=\pi(j')$. It holds that
  $\istp_u(j) \wedge \tc_{u,I}(j,i) \wedge \osem{u,j,\mu}{\beta}
  \wedge \bigwedge_{i\leq k<j} (\istp_u(k)\rightarrow
  \osem{u,k,\mu}{\alpha}) = \false$.
  Since $\istp_u(j)\not=\false$, one of the remaining conjuncts must
  be $\false$.
  \begin{enumerate}[(1)]
  \item If $\tc_{u,I}(i,j)=\false$, then $\tau'-\tau''\not\in I$, for
    all $\tau''\in I_i$ and $\tau'\in I_j$. From~(R1),
    $\tau_{i'}\in I_i$ and $\tau_{j'}\in I_j$, it follows that
    $\tau_{j'}-\tau_{i'}\not\in I$.
  \item If $\osem{u,j,\mu}{\beta}=\false$, then
    $\sem{v,j',\nu}{\beta}=\false$, by the induction hypothesis.
  \item If $\istp_u(k)\to\osem{u,k,\mu}{\alpha}=\false$, for some $k$
    with $i\leq k< j$, then $\istp_u(k)=\true$ and
    $\osem{u,k,\mu}{\alpha}=\false$. It follows as before that there
    is a $k'$ with $i'\leq k'<j'$ such that
    $\sem{v,k',\nu}{\alpha}=\false$.
  \end{enumerate}
  We have thus obtained that either $\tau_{i'}-\tau_{j'}\not\in I$ or
  one of the conjuncts of
  $\sem{v,j',\nu}{\beta} \wedge \bigwedge_{i'\leq k'< j'}
  \sem{v,k',\nu}{\alpha}$ is $\false$. In other words, if $j'$ is
  such that $\tau_{j'}-\tau_{i'}\in I$, then
  $\sem{v,j',\nu}{\beta} \wedge \bigwedge_{i'\leq k'<j'}
  \sem{v,k',\nu}{\alpha} = \false$.
  As $j'$ was chosen arbitrarily, we conclude that
  $\sem{v,i',\nu}{\alpha\until_I\beta}=\false$. 
\end{proof}

The next theorem states that \MTLdata's three-valued semantics is
monotonic in $\sqsubseteq$ (on observations and partial valuations)
and $\preceq$ (on truth values). This property is crucial for
monitoring, since it guarantees that a verdict output for an
observation stays valid for refined observations.
\begin{theorem}
  \label{thm:monotonicity}
  Let $\phi$ be a formula, $\mu$ and $\nu$ partial valuations, $u$
  and $v$ observations, and $\tau\in\Qpos$. If $u\sqsubseteq v$ and
  $\mu\sqsubseteq\nu$, then
  $\eosem{u,\tau,\mu}{\phi} \preceq \eosem{v,\tau,\nu}{\phi}$.
\end{theorem}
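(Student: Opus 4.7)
My plan is to reduce to a single refinement step and then perform structural induction on $\phi$, adapting the scheme used for Theorem~\ref{thm:conservative}. Since $\sqsubseteq$ on observations is the reflexive-transitive closure of $\sqsubset_1$, $\mu\sqsubseteq\nu$ factors into single-variable extensions, and $\preceq$ is transitive, it suffices to treat the case where $v$ is obtained from $u$ by exactly one of (T\ref{enum:observation_split}), (T\ref{enum:observation_removal}), or (T\ref{enum:observation_data}), or $\nu$ extends $\mu$ at one variable. If $\tau$ is not the timestamp of any time point in $u$, then $\eosem{u,\tau,\mu}{\phi}=\unknown$ and the claim is vacuous. Otherwise, the key observation is that none of the three transformations can destroy a time point: (T\ref{enum:observation_split}) only splits a non-singleton interval (possibly introducing a new time point); (T\ref{enum:observation_removal}) removes only a non-singleton letter; and (T\ref{enum:observation_data}) enriches data at an already-singleton letter. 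Hence the time point $i$ of $u$ with $\timestamp_u(i)=\tau$ corresponds to a time point $i'$ in $v$ with $\timestamp_v(i')=\tau$, and it suffices to show $\osem{u,i,\mu}{\phi}\preceq\osem{v,i',\nu}{\phi}$.

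To support the induction, I would define, for each single-step transformation, a monotone position map $\pi\colon\positions(v)\to\positions(u)$: the identity for~(T\ref{enum:observation_data}), a shift past the deleted index for~(T\ref{enum:observation_removal}), and a collapse sending all two or three split positions back to the original index $k$ for~(T\ref{enum:observation_split}). By construction, $I_{\pi(j)}\supseteq I_j$, $\sigma_{\pi(j)}\sqsubseteq\sigma'_j$, and $\rho_{\pi(j)}\sqsubseteq\rho'_j$ for every $j\in\positions(v)$, which in turn yields the pointwise monotonicity $\istp_u(\pi(j))\preceq\istp_v(j)$ and $\tc_{u,I}(\pi(j_1),\pi(j_2))\preceq\tc_{v,I}(j_1,j_2)$ (using that intervals within an observation are totally ordered, so that $I_{j_1}-I_{j_2}\subseteq I_{\pi(j_1)}-I_{\pi(j_2)}$). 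Structural induction then handles the non-temporal cases routinely: $\true$ is trivial; $p(\overline{x})$ uses that a Boolean $u$-value forces $\overline{x}$, $p$, and the registers to be defined, which refinement preserves; $\neg$ and $\vee$ follow from the monotonicity of the operators in Table~\ref{tab:connectives}; and $\freeze{r}{x}\psi$ reduces to $\mu[x\mapsto\rho_i(r)]\sqsubseteq\nu[x\mapsto\rho^v_{i'}(r)]$ (three sub-cases depending on where $r$ is defined) followed by the induction hypothesis.

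The main obstacle is the temporal connectives. For $\phi\since_I\psi$ and $\phi\until_I\psi$, the $\true$ direction lifts a satisfying anchor $j$ of $u$ to its unique counterpart $j'$ in $v$ (which is also a time point, by the preservation argument) and verifies, using the monotonicity properties above together with the induction hypothesis, that every conjunct, including all conjuncts at positions possibly introduced in the continuation range by~(T\ref{enum:observation_split}), remains $\true$ in the $v$-disjunct. The $\false$ direction, which requires every anchor $j'$ in $v$ to produce a $\false$ disjunct, is more delicate: the subtle case is an anchor introduced anew by~(T\ref{enum:observation_split}) within a former knowledge-gap position $k$ of $u$, whose $u$-disjunct was $\false$ because of $\istp_u(k)=\unknown$ together with some genuinely false conjunct. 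Since $I_{j'}\subseteq I_k$, that false conjunct, whether $\tc_{u,I}(k,i)=\false$, $\osem{u,k,\mu}{\psi}=\false$, or a failing continuation term witnessed at a time point in $u$ that survives to $v$, transports to a $\false$ conjunct in the $v$-disjunct at $j'$ via the monotonicity of $\tc$ and the induction hypothesis. For $\predecessor_I$ and $\successor_I$, the clauses $c_0,c_{\pm1},c_{\pm2}$ were designed precisely so that each single-step transformation shuffles contributions among them while preserving~$\preceq$; a finite case analysis on which of the positions $i,i\pm1,i\pm2$ is acted upon and whether the affected interval is a singleton completes the argument. This intricate case analysis is where the real work lies.
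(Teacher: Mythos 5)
Your proposal is correct and follows essentially the same route as the paper: the paper's proof of Theorem~\ref{thm:monotonicity} likewise reduces everything to a monotonic position map $\pi:\positions(v)\to\positions(u)$ satisfying $I_j\subseteq I_{\pi(j)}$, $\sigma_{\pi(j)}\sqsubseteq\sigma'_j$, $\rho_{\pi(j)}\sqsubseteq\rho'_j$ (obtained per single transformation and composed along the $\sqsubset_1$-chain) and then reruns the structural induction of Theorem~\ref{thm:conservative}. Your only organizational difference --- factoring through single refinement steps via transitivity of $\preceq$ rather than composing the maps before one induction --- is immaterial.
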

\begin{proof}
  The proof is similar to that of Theorem~\ref{thm:conservative} and
  details are thus omitted. We just note that we make use of the 
  following properties~(R1$'$), (R2$'$), and (R3$'$), which correspond to
  the ones used in the proof of Theorem~\ref{thm:conservative}.

  Let $w$ and $w'$ be observations with letters
  $\big(I_i,(\sigma_i,\rho_i)\big)$ and, respectively,
  $\big(I'_j,(\sigma'_j,\rho'_j)\big)$, for $i\in\positions(w)$ and
  $j\in\positions(w')$.  We claim that if $w\sqsubseteq w'$, then there
  is a monotonic function $\pi:\positions(w')\rightarrow\positions(w)$
  with the following properties.
  \begin{enumerate}[(R1$'$)]
  \item $I_j'\subseteq I_{\pi(j)}$, for all $j\in \positions(w')$.
  \item $\sigma_{\pi(j)}\sqsubseteq \sigma_j'$, for all
    $j\in \positions(w')$. 
  \item $\rho_{\pi(j)}\sqsubseteq \rho'_{j}$, for all
    $j\in \positions(w')$.
  \end{enumerate}
  If $w=w'$ then take $\pi$ to be the identity. If $w'$ is obtained
  from $w$ using one of the transformations, that is, if
  $w\sqsubset_1 w'$, then, for each transformation it is easy to
  construct a function~$\pi$ that has the stated properties. If
  $w\sqsubsetneq w'$, then there is a sequence
  $(w_i)_{0\leq i \leq n}$ of observations, with $n>1$, such that
  $w=w_0\sqsubset_1 w_1 \sqsubset_1 \dots \sqsubset_1 w_n=w'$. From
  the previous observation, there is a sequence of functions
  $\pi_i:\positions(w_{i})\to\positions(w_{i-1})$, with
  $1\leq i\leq n$, each having the stated properties. The functions'
  composition $\pi=\pi_1\circ\dots\circ\pi_{n}$ also has these
  properties.
\end{proof}

We next investigate the decision problem that underlies monitoring.
Note that we do not require that the interpretations of the predicate
symbols are finite relations.  However, for monitoring, the relations
must be decidable, and a monitor needs an algorithm for
performing membership checks.  For the following theorem, we assume
that the membership of a tuple in a predicate symbol's interpretation
at a time point can be checked in PSPACE.
\begin{theorem}
  \label{thm:decidable}
  For a formula~$\phi$, an observation~$w$, a partial
  valuation~$\nu$, $\tau\in\Qpos$, and a truth value $b\in\Two$,
  the problem of whether $\eosem{w,\tau,\nu}{\phi}$ equals~$b$ is
  $\mathrm{PSPACE}$-complete.
\end{theorem}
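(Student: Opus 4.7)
The plan is to establish PSPACE-completeness by giving a polynomial-space recursive evaluation algorithm for the upper bound and reducing from a PSPACE-hard predicate-membership instance for the lower bound.

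For the upper bound I would write a recursive procedure $\mathrm{eval}(\psi,i,\nu)$ that computes $\osem{w,i,\nu}{\psi}$ by mirroring the inductive definition from Section~\ref{subsec:semantics}. The Boolean cases and the freeze quantifier perform a constant number of recursive calls on strictly smaller subformulas after updating $\nu$ if needed. For atomic formulas $p(\overline{x})$ the procedure first checks $\overline{x}\in\pdef(\nu)$ and $p\in\pdef(\sigma_i)$ in polynomial time and then invokes the assumed PSPACE oracle for predicate membership. The $\predecessor_I$ and $\successor_I$ cases reduce by definition to a bounded case analysis with constant-many subcalls. For $\since_I$ and $\until_I$, the procedure runs two nested loops---an outer loop enumerating anchor positions $j\in\positions(w)$ and an inner loop enumerating continuation positions $k$---making a single recursive call per iteration into a three-valued accumulator and reusing workspace across iterations. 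Wrapping this call with the check in Definition~\ref{def:topsem} computes $\eosem{w,\tau,\nu}{\phi}$, and comparing the result to $b$ decides the problem. Since the recursion depth is bounded by the syntactic depth of $\phi$ and each stack frame stores only a subformula pointer, a position, two loop counters, a three-valued accumulator, and a valuation whose values are either part of the input or copied from some register in $w$, every frame has polynomial size; together with the PSPACE workspace reused across oracle calls, the procedure runs in PSPACE.

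For the lower bound I would reduce from a PSPACE-complete predicate-membership instance; such instances exist because the paper's hypothesis imposes only an upper bound on predicate complexity, so we may assume some predicate symbol $p$ has PSPACE-complete membership. Given a query tuple $\overline{d}$, take the single-time-point observation $w=\big(\set{0},(\sigma,\rho)\big)$ where $\sigma(p)$ is the predicate in question, the valuation $\nu=[\overline{x}\mapsto\overline{d}]$, time $\tau=0$, truth value $b=\true$, and formula $p(\overline{x})$; then $\eosem{w,\tau,\nu}{p(\overline{x})}=\true$ iff $\overline{d}\in\sigma(p)$, so any decision procedure for the theorem's problem decides the hard predicate-membership instance.

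The main obstacle I expect is the space-careful treatment of the $\since_I$ and $\until_I$ cases: one has to iterate over positions with reused workspace rather than materializing the outer disjunction or inner conjunction as arrays of subresults, since otherwise the stack-frame size could exceed the polynomial budget when many $\until_I$ are nested under freeze quantifiers. Once this is handled carefully, the recursion-depth bound yields the polynomial workspace and the upper bound follows.
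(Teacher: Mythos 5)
Your upper-bound argument is essentially the paper's: a depth-first recursive evaluation of the semantic definition, with the anchor/continuation loops for $\since_I$ and $\until_I$ iterated in place so that only counters and a three-valued accumulator are kept per frame, and recursion depth bounded by the depth of $\phi$. That part is fine and matches the paper's (sketched) membership proof.

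The lower bound, however, has a genuine gap. You reduce from a ``PSPACE-complete predicate-membership instance,'' i.e.\ you take the fixed formula $p(\overline{x})$ and push all the hardness into the interpretation $\sigma(p)$. This does not establish hardness of the evaluation problem for the logic; it merely observes that if the input is allowed to contain a PSPACE-hard oracle, then any problem consulting that oracle is PSPACE-hard. Whether such an instance even exists as a legitimate input depends entirely on how the relations $\sigma_i(p)$ are encoded in the observation: in the natural (and practically relevant) encoding where they are explicit finite relations, membership is decidable in linear time and your reduction collapses. The paper's hypothesis that membership is checkable in PSPACE is an \emph{upper} bound imposed to make the membership direction go through, not a license to assume a PSPACE-complete predicate. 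The paper instead proves hardness where it actually lives --- in the interplay of the freeze quantifier with the temporal connectives --- by reducing QBL (QBF) satisfiability: propositions $p_i$ become unary predicates over the domain $\{0,1\}$, the observation has two time points carrying register values $0$ and $1$, and $\exists p_i.\,\beta$ and $\forall p_i.\,\beta$ are simulated by $\once\eventually_{[0,1]}\freeze{}{x_i}\beta^*$ and $\historically\always_{[0,1]}\freeze{}{x_i}\beta^*$, so that quantifier alternation is encoded by alternating temporal operators that range over the two time points. There the predicates are trivial singleton relations and the formula grows with the QBF instance, which is what a hardness proof for this problem must deliver. You need a reduction of that kind; the one you give does not suffice.
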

\begin{proof}
  We first show that the problem is PSPACE-hard by reducing the
  satisfiability problem for quantified Boolean logic (QBL) to it.
  Let $\alpha$ be a closed QBL formula over the propositions
  $p_1,\dots,p_n$.  We define the set $P$ of predicate symbols as
  $\{P_1,\dots,P_n\}$, where each predicate symbol has arity $1$.
  Moreover, let $R:=\{r\}$ and $D:=\{0,1\}$, and let $w$ be the
  observation
  \begin{equation*}
    \big(\{0\},\sigma,\rho_0\big)\ 
    \big(\{1\},(\sigma,\rho_1)\big)\ 
    \big(\{3\},(\emap,\emap)\big)\ 
    \big((3,\infty),(\emap,\emap)\big)\,,
  \end{equation*}
  with $\sigma(P_i)=\{1\}$, for each $i\in\{1,\dots,n\}$, and
  $\rho_i(r)=i$, for $i\in\{0,1\}$.  Finally, we translate the QBL
  formula $\alpha$ to an \MTLdata formula $\alpha^*$ as follows.
  \begin{equation*}
      p_i^* := P_i(x_i) 
      \quad
      (\neg \beta)^* := \neg \beta^*
      \quad
      (\beta\vee\gamma)^* := \beta^*\vee\gamma^*
      \quad
      (\exists p_i.\,\beta)^* :=
      \once\eventually_{[0,1]}\freeze{}{x_i}{\beta^*}
      \quad
      (\forall p_i.\,\beta)^* := \historically\,\always_{[0,1]}\freeze{}{x_i}{\beta^*}
  \end{equation*}
  It is easy to see that $\alpha$ is satisfiable iff
  $\eosem{w,0,\emap}{\alpha^*}=\true$.

  We only sketch the problem's membership in PSPACE.  Note that $w$ is
  finite. If there is no time point in $w$ with timestamp $\tau$, then
  $\eosem{w,\tau,\nu}{\phi}=\unknown$. Suppose that $i\in\positions(w)$
  is a time point in $w$ with timestamp~$\tau$.
  A computation of $\phi$'s truth value at position $i$ can be easily
  obtained from the inductive definition of the satisfaction
  relation~$\omodels$.  Note, however, that the space consumed by
  naively unfolding the semantic definitions would in general not be
  polynomially bounded.  One reason is that subformulas may occur
  multiple times in the unfolding for different time points and
  valuations.  Instead, we must carry out this computation by a
  depth-first traversal when unfolding the semantic definitions to
  stay in PSPACE.  Furthermore, note that our additional assumption on
  the membership checks allows us to determine in PSPACE the truth
  value of an atomic formula at a time point.
\end{proof}

In a propositional setting, the corresponding decision problem can be
solved in polynomial time using dynamic programming, where the truth
values at the positions of an observation are propagated up the
formula structure.  Note that the truth value of a proposition at a
position is given by the observation's letter at that position.
This is in contrast to \MTLdata, where atomic formulas can have free
variables and their truth values at the positions in an
observation~$w$ may depend on the data values stored in the registers
and frozen to these variables at different time points of $w$.  Before
truth values are propagated up, the bindings of variables to data
values must be propagated down.

\subsection{Monitoring Correctness Requirements}
\label{subsec:requirements}

A monitor for a specification iteratively receives information about
the system behavior.  Abstractly speaking, the monitor's input is an
infinite sequence $(\mathit{in}_i)_{i\in\Nat}$, where $\mathit{in}_i$
describes a part of the system behavior and is received by the monitor
in its $i$th iteration.  The monitor's output is an infinite sequence
$(\mathit{out}_i)_{i\in\Nat}$, where $\mathit{out}_i$ is the output in
iteration~$i$ describing when the monitor's specification is satisfied
or violated.
In the following, we concretize a monitor's input and output for our
setting and define correctness requirements for monitoring.
Note that we assume that a monitor never terminates and that it
infinitely often receives information about the system behavior.  This
assumption is invalid if, for instance, the system observed by the
monitor ever terminates.  Nevertheless, we make this assumption to
simplify matters and it is easy to adapt our definitions and results
to the general case.

We first turn to a monitor's input, which is a sequence of
observations $(w_i)_{i\in\Nat}$.  That is, we view the
observation~$w_i$ as the input to the monitor at iteration $i\in\Nat$.
In practice, a monitor would receive at iteration $i>0$ a message that
describes just the differences between $w_{i-1}$ and $w_i$.
Furthermore, note that the $w_i$s can be understood as abstract
descriptions of the monitor's state over time, representing the
monitor's knowledge about the system behavior, where $w_0$ represents
the monitor's initial knowledge.  Also note that if the timed word $v$
is the system behavior in the limit, then $w_i\sqsubseteq v$, for all
$i\in\Nat$, assuming that components do not send bogus
messages. However, for every $i\in\Nat$, there are infinitely many
timed words~$u$ with $w_i\sqsubseteq u$.  Since messages sent to the
monitor can be lost, it can even be the case that there is a timed
word $u$ with $u\not=v$ and $w_i\sqsubseteq u$, for all $i\in\Nat$.
\begin{definition}
  The infinite sequence $\bar{w} = (w_i)_{i\in\Nat}$ of observations
  is \emph{valid} if $w_0 = \big([0,\infty), (\emap,\emap)\big)$ and
  $w_i\sqsubsetneq w_{i+1}$, for all $i\in\Nat$.
\end{definition}

We turn to a monitor's output.  Based on the input $(w_i)_{i\in\Nat}$,
the monitor outputs in each iteration $i\in\Nat$ a set $V_i$ of
\emph{verdicts}, which is a finite set of pairs $(\tau,b)$ with
$\tau\in\Qpos$ and $b\in\Two$.  Intuitively, $\tau$ is the time at
which the specification has the Boolean value $b$.
\begin{definition}
  \label{def:soundness_completeness-observation}
  Let $\phi$ be a closed formula, $\bar{w}=(w_i)_{i\in\Nat}$ a valid
  observation sequence, and $\bar{V}=(V_i)_{i\in\Nat}$ a sequence of
  verdict sets.
  \begin{enumerate}[(i)]
  \item $\bar{V}$ is \emph{observationally sound} for $\bar{w}$ and
    $\phi$ if for all partial valuations~$\nu$ and $i\in\Nat$,
    whenever $(\tau,b)\in V_i$ then $\eosem{w_i,\tau,\nu}{\phi}=b$.
  \item $\bar{V}$ is \emph{observationally complete} for $\bar{w}$ and
    $\phi$ if for all partial valuations~$\nu$, $i\in\Nat$, and
    $\tau\in\Qpos$, if $\eosem{w_i,\tau,\nu}{\phi}\in\Two$ then
    $(\tau,b)\in\bigcup_{j\leq i}V_j$, for some $b\in\Two$.
  \end{enumerate}
\end{definition}
We say that a monitor is \emph{observationally sound} if for all valid
observation sequences~$\bar{w}$ and closed formulas $\phi$, its
sequence of verdict sets is observationally sound for $\bar{w}$ and
$\phi$.  The definition of a monitor being \emph{observationally
  complete} is analogous.

It follows from Theorem~\ref{thm:decidable} that monitors for \MTLdata
exist that are both observationally sound and complete.  In
Sections~\ref{sec:prop} and~\ref{sec:data}, we present such monitoring
algorithms in detail.  In the remainder of this section, we relate the
correctness requirements from
Definition~\ref{def:soundness_completeness-observation} to
requirements that demand that a monitor outputs a verdict as soon as
the specification has the same Boolean value on every extension of the
monitor's current knowledge.  Such requirements are stronger and
achieving them can be hard or even impossible for nontrivial
specification languages.  In particular, we show that monitors
satisfying such a requirement do not exist for \MTLdata.  We start
with an example that illustrates the differences on the verdicts for
monitoring.
\begin{example}
  Consider the formula~$\phi=\always(p \wedge \eventually \neg p)$.
  Note that under the classical Boolean semantics, $\phi$ is logically
  equivalent to $\false$, however not under the three-valued
  semantics.  For example, $\osem{w,0,\nu}{\phi}=\bot$, for
  $w=\big([0,\infty),(\emap,\emap)\big)$ and any valuation~$\nu$.
  Given a valid observation sequence~$\bar{w}$, an observationally
  sound and complete monitor for $\bar{w}$ and~$\phi$ first outputs
  the verdict $(0,\false)$ for the minimal $i$ such that $w_i$
  contains a letter that assigns $p$ to false.
  In contrast, a sound and complete monitor for the classical Boolean
  semantics (see Definition~\ref{def:soundness_completeness} below)
  must immediately output the verdict $\false$.
  \exampleendmark
\end{example}

For an observation~$w$, we define
$U_w:=\setx{v}{v\text{ a timed word with }w\sqsubseteq v}$.
Intuitively, $U_w$ contains the timed words that are compatible with
the reported system behavior that a monitor received so far,
represented by $w$.
\begin{definition}
  \label{def:soundness_completeness}
  Let $\phi$ be a closed formula, $\bar{w}$ a valid observation
  sequence, and $\bar{V}$ a sequence of verdict sets.
  \begin{enumerate}[(i)]
  \item $\bar{V}$ is \emph{sound} for $\bar{w}$ and $\phi$ if for all
    valuations~$\nu$ and $i\in\Nat$, whenever $(\tau,b)\in V_i$, then
    $\bigcurlywedge_{v\in U_{w_i}}\esem{v,\tau,\nu}{\phi}=b$, that is,
    the meet $\curlywedge$ of the truth values in the lower
    semilattice $(\Three,\prec)$ is $b$.
  \item $\bar{V}$ is \emph{complete} for $\bar{w}$ and $\phi$ if for
    all valuations~$\nu$, $i\in\Nat$, and $\tau\in\Qpos$, whenever
    $\bigcurlywedge_{v\in U_{w_i}}\esem{v,\tau,\nu}{\phi}\in\Two$, then
    $(\tau,b)\in \bigcup_{j\leq i}V_j$, for some $b\in\Two$.
  \end{enumerate}
\end{definition}
We say that a monitor is \emph{sound} if for all valid observation
sequences $\bar{w}$ and closed formulas $\phi$, its sequence of
verdict sets is sound for $\bar{w}$ and $\phi$.  The definition of a
monitor being \emph{complete} is analogous.

With the help of Theorem~\ref{thm:conservative}, we prove that the
completeness requirement from
Definition~\ref{def:soundness_completeness-observation} is indeed a
weaker notion than the completeness requirement from
Definition~\ref{def:soundness_completeness}, while the soundness
requirement from Definition~\ref{def:soundness_completeness} offers
the same correctness guarantees as the one from
Definition~\ref{def:soundness_completeness-observation}.
\begin{theorem}
  \label{thm:obs_requirements}
  Let $M$ be a monitor. 
  \begin{enumerate}[(a)]
  \item If $M$ is observationally sound, then $M$ is sound.
  \item If $M$ is complete, then $M$ is observationally complete.
  \end{enumerate}
\end{theorem}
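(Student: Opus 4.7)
The plan is to derive both parts from Theorem~\ref{thm:conservative}, which is the bridge that bounds the observational three-valued semantics below the Boolean semantics on any refining timed word. The key auxiliary observation I will use repeatedly is that every $b\in\Two$ is maximal in $(\Three,\preceq)$, so $b\preceq x$ forces $x=b$.

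For part~(a), I would start from a verdict $(\tau,b)\in V_i$ and apply observational soundness to the empty partial valuation $\emap$, obtaining $\eosem{w_i,\tau,\emap}{\phi}=b$. Then, for any total valuation $\nu$ and any $v\in U_{w_i}$, the hypotheses $w_i\sqsubseteq v$ and $\emap\sqsubseteq\nu$ let me apply Theorem~\ref{thm:conservative} to conclude $b\preceq\esem{v,\tau,\nu}{\phi}$. Maximality of $b\in\Two$ forces equality, so every element of the meet $\bigcurlywedge_{v\in U_{w_i}}\esem{v,\tau,\nu}{\phi}$ is $b$, giving the soundness condition.

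For part~(b), I would argue directly. Given a partial valuation $\nu$ and indices $i$, $\tau$ with $b:=\eosem{w_i,\tau,\nu}{\phi}\in\Two$, I would extend $\nu$ to some total valuation $\nu'\sqsupseteq\nu$ (possible since $D$ is nonempty) and invoke Theorem~\ref{thm:conservative} to conclude $\esem{v,\tau,\nu'}{\phi}\succeq b$ for every $v\in U_{w_i}$. Maximality of $b$ again collapses the meet to $b\in\Two$, so completeness of $M$ produces a verdict $(\tau,b')\in\bigcup_{j\leq i}V_j$ for some $b'\in\Two$, which is exactly what observational completeness asks for.

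The only real subtlety is the mismatch between the partial valuations used in Definition~\ref{def:soundness_completeness-observation} and the total ones in Definition~\ref{def:soundness_completeness}; Theorem~\ref{thm:conservative} is tailored to this, since it permits $\mu\sqsubseteq\nu$ with $\mu$ partial and $\nu$ total. A small loose end is that the meet over $U_{w_i}$ must be well-defined, which requires $U_{w_i}\neq\emptyset$; this is routine, as any valid observation extends to a timed word by populating its final unbounded interval with arbitrarily many future time points carrying the least letter $(\emap,\emap)$. I do not expect any significant obstacle beyond this bookkeeping.
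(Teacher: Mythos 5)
Your proposal is correct and follows essentially the same route as the paper: both parts are derived from Theorem~\ref{thm:conservative} by observing that a Boolean value $b\in\Two$ is maximal under $\preceq$, so the bound $b\preceq\esem{v,\tau,\cdot}{\phi}$ collapses the meet over $U_{w_i}$ to $b$. The only cosmetic differences are that the paper instantiates observational soundness directly at a total valuation rather than at $\emap$, and it leaves the (routine) nonemptiness of $U_{w_i}$ implicit.
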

\begin{proof}
  Let $\bar{V}$ be the sequence of verdict sets that $M$ iteratively
  outputs for $\phi$ and $\bar{w}$.

  We first prove~(a).  Assume that $M$ is observationally sound.  Let
  $\nu$ be a total valuation, $i\in\Nat$, $\tau\in\Qpos$, and
  $b\in\Two$ such that $(\tau,b)\in V_i$. Then, by definition,
  $\eosem{w_i,\tau,\nu}{\phi}=b$. For $v\in U_{w_i}$, we have that
  $w_i\sqsubseteq v$.  By Theorem~\ref{thm:conservative}, we obtain
  that $\esem{v,\tau,\nu}{\phi}=b$.  It follows that
  $\bigcurlywedge_{v\in U_{w_i}}\esem{v,\tau,\nu}{\phi}=b$.
  We conclude that $M$ is sound.

  It remains to prove~(b).  Assume that $M$ is complete.  Let $\nu$ be
  a partial valuation, $i\in\Nat$, and $\tau\in\Qpos$ such that
  $\esem{w_i,\tau,\nu}{\phi}=b'$, for some $b'\in\Two$.
  Let $\nu'$ be a total valuation with $\nu\sqsubseteq \nu'$ and
  $v\in U_{w_i}$. As $w_i\sqsubseteq v$, we obtain from
  Theorem~\ref{thm:conservative} that
  $\esem{v,\tau,\nu'}{\phi}=b'$. As $v$ was chosen arbitrarily, we get
  $\bigcurlywedge_{v\in U_{w_i}}\esem{v,\tau,\nu'}{\phi}=b'$. From
  $M$'s completeness, it follows that there are $b\in\Two$ and
  $j\in\Nat$ with $j\leq i$ such that $(\tau,b)\in V_j$. We conclude
  that $M$ is observationally complete.
\end{proof}

The correctness requirements in
Definition~\ref{def:soundness_completeness} are related to the use of
a three-valued ``runtime-verification'' semantics for a specification
language as introduced
by~\citeauthor{Bauer_etal:rv_tltl}~\citeyearpar{Bauer_etal:rv_tltl}
for LTL and adopted by other runtime-verification approaches, for
example, the one
by~\citeauthor{Bauer-FMSD15}~\citeyearpar{Bauer-FMSD15}.  Both a sound
and complete monitor, and a monitor implementing the three-valued
``runtime-verification'' semantics output a verdict as soon as the
specification has the same Boolean value on every extension of the
monitor's current knowledge.  However, as we explain next,
efficient monitors can be hard to
achieve or may not even exist for nontrivial specification languages.
\begin{remark}
  \label{rem:correctness}
  Having a sound and complete monitor $M$ for a specification language
  is at least as hard as checking satisfiability for this language.
  For instance, we can use a sound and complete monitor $M$ to check
  satisfiability for \MTLdata as follows. We run $M$ for the closed
  formula~$\phi$ whose satisfiability we want to check.  We refine
  $M$'s initial knowledge by the
  transformations~(T\ref{enum:observation_split})
  and~(T\ref{enum:observation_removal}) and add the first time point
  with the timestamp~$0.0$.  The formula $\phi$ is unsatisfiable under
  the standard Boolean semantics iff $M$'s verdict set~$V_1$ contains
  $(0.0,\false)$. Already MTL with the standard Boolean semantics is
  undecidable~\cite{OuaknineW06} and many of its nontrivial decidable
  fragments have a high complexity.  Recall that the satisfiability
  problem for LTL is
  PSPACE-complete~\cite{Sistla_Clarke:complexity_ltl}.

  Some monitoring approaches try to compensate for this complexity
  burden with a preprocessing step. For
  instance,~\citeauthor{Bauer_etal:rv_tltl}~\citeyearpar{Bauer_etal:rv_tltl}
  translates an LTL formula into an automaton prior to monitoring.
  The resulting automaton can be directly used for sound and complete
  monitoring in environments where messages are neither delayed nor
  lost.  However, there are no obvious extensions that handle
  out-of-order message delivery.  Furthermore, not every specification
  language has such a corresponding automaton model and, for those
  where translations are known, the automaton construction can be very
  costly.  For LTL, the size of the automaton is already in the worst
  case doubly exponential in the size of the
  formula~\cite{Bauer_etal:rv_tltl}.  \exampleendmark
\end{remark}

\section{Monitoring in the Propositional Setting}
\label{sec:prop}

In this section, we present an observationally sound and complete
online algorithm for MTL. We extend the algorithm in the next section
to \MTLdata, where we also provide the proof details.  To support
scalable monitoring, the verdict computation is incremental in that
the results from previous computations are reused whenever
observations are refined by the
transformations~(T\ref{enum:observation_split}),
(T\ref{enum:observation_removal}), and (T\ref{enum:observation_data})
from Definition~\ref{def:observation}.
We start with the algorithm's main
procedure~(Section~\ref{subsec:main}). Afterwards, we describe the
state the algorithm maintains~(Section~\ref{subsec:state}) and further
algorithmic details~(Section~\ref{subsec:algo}).

\subsection{Main Procedure}
\label{subsec:main}

\begin{listing}
  \begin{minipage}[t]{0.45\linewidth}
\begin{lstlisting}[caption={The monitor's main loop for MTL.},label={lst:monitor},captionpos=b]
procedure MonitorMTL($\phi$)
  Init($\phi$)
  loop
    $m$ := ReceiveMessage()
    $\mathit{ts}$ := UpdateKnowledge($m$)
    foreach $t$ with $t$ in $\mathit{ts}$ do
      case (T1): $J$, $\tau$ := DeltaT1($t$)
                $\!$AddTimePoint($\phi$, $J$, $\tau$)
      case (T2): $K$ := DeltaT2($t$)
                $\!$RemoveInterval($K$)  
      case (T3): $\tau$, $\sigma$ := DeltaT3($t$)
                $\!$foreach $p$ with $p\in\pdef(\sigma)$ do
                  PropagateTruthValue($p$, $\{\tau\}$, $\sigma(p)$)
\end{lstlisting}
  \end{minipage}
\end{listing}
The pseudocode of the monitor's top-level procedure is shown in
Listing~\ref{lst:monitor}.  In a nutshell, after the monitor
initializes its state, it enters a nonterminating loop. In
each loop iteration, the monitor receives a message, updates its state
according to the information extracted from the message, and outputs
the computed verdicts.
Recall from Section~\ref{subsec:requirements} that each message received
describes the ``delta'' between two subsequent observations in a valid
observation sequence~$(w_i)_{i\in\Nat}$. The message format and
therefore how the monitor obtains the necessary information from a
message and its current state are system-dependent. A possible
realization is given in Section~\ref{sec:app}.

We provide a brief description of the procedures used by the monitor's
top-level procedure. The procedure \ls{Init} initializes the monitor's
state; see Section~\ref{subsec:state} for details.
The procedure \ls{ReceiveMessage} receives a
message, for instance, over a channel or from a log file.
The procedure
\ls{UpdateKnowledge} updates the monitor's knowledge
about the system behavior.  This procedure also returns a
list of the transformations that transform the observation~$w_{i-1}$
into the observation~$w_i$ in the $i$th iteration.  The
monitor uses the 
procedures \ls{DeltaT1}, \ls{DeltaT2}, and \ls{DeltaT3}
to learn how the observation is updated. 
Concretely, \ls{DeltaT1} returns the timestamp~$\tau$ of a new time
point and the interval $J$ that is split at~$\tau$.
\ls{DeltaT2} returns the interval~$K$ of the letter that is removed
from the observation.
\ls{DeltaT3} returns the Boolean values $\sigma(p)$ of the newly
assigned propositions $p\in\pdef(\sigma)$ at the time point with the
timestamp~$\tau$.
The procedures \ls{AddTimePoint}, \ls{RemoveInterval}, and
\ls{PropagateTruthValue} are central to the monitor.  They update the
monitor's state.  For instance, \ls{PropagateTruthValue} propagates
the Boolean values of newly assigned propositions.
Section~\ref{subsec:algo} provides algorithmic details for these three
procedures.

Before we proceed, we introduce the following conventions that we use
in the remainder of this section.  Let $\phi$ be the MTL formula
that is monitored with propositions in $P$.  The letter~$I$ ranges over the
metric constraints of the temporal connectives that occur in $\phi$.
The letters~$\alpha$, $\beta$, and $\gamma$ range over elements in
$\sub(\phi)$.  Furthermore, let $w$ be an observation. It ranges over
the elements in the valid observation sequence $(w_i)_{i\in\Nat}$.
The letters~$J$, $K$, and $H$ range over the intervals that occur in
letters of $w$.  The lower case letters $j$, $k$, and $h$ are the
indexes of the letters in $w$ with the intervals $J$, $K$, and $H$,
respectively.  We also simplify notation.  We omit the partial
valuation $\nu$ in $\osem{w,i,\nu}{\gamma}$, that is, we only write
$\osem{w,i}{\gamma}$.  Note that $\nu$ is irrelevant for MTL.  We also
assume that $\phi$ is not an atomic formula and subformulas of $\phi$
are pairwise distinct.  Both assumptions are without loss of
generality. For example, the second one is met when representing
formulas as abstract syntax trees.

\subsection{Monitor State}
\label{subsec:state}

\subsubsection{Reduction to Propositional Logic}
\label{subsubsec:reduct}

At the core of the monitor is a mapping of MTL's three-valued
semantics into propositional logic with the standard two-valued
semantics.  From a high-level perspective, the monitor's state
comprises a representation of propositional formulas, which the
monitor refines and simplifies whenever it receives information about
the system behavior.  For readability, we start with a variant of
these propositional formulas that is close to the definition of MTL's
three-valued semantics.

The propositional formula $\fPhi{w}{\gamma,J}$ over propositions of
the form $\alpha^{K}$, $\tpp^{K}$, and $\tcp^{H,K}_I$ is defined as
follows. Its inductive definition follows the definition of MTL's
three-valued semantics in Section~\ref{subsec:semantics}, where the
propositions $\tpp^{K}$ and $\tcp^{H,K}_I$ take the role of the
corresponding functions.
\begin{equation*}
  \fPhi{w}{\gamma,J} := 
    \begin{cases}
      \true & \text{if $\gamma=\true$}
      \\
      p^{J} & \text{if $\gamma=p$ with $p\in P$}
      \\
      \neg\alpha^{J} & \text{if $\gamma=\neg\alpha$}
      \\
      \alpha^{J}\vee\beta^{J} & \text{if $\gamma=\alpha\vee\beta$}
      \\
      \bigvee_{K\leq J}\big(\tpp^{K} \wedge \tcp^{J,K}_I \land \beta^{K} \wedge
      \bigwedge_{K<H\leq J}(\tpp^{H} \to \alpha^{H})
      \big)
      & 
      \text{if $\gamma=\alpha\since_I\beta$}
      \\
      \bigvee_{K\geq J}\big(\tpp^{K} \wedge \tcp^{K,J}_I \wedge \beta^{K} \wedge
      \bigwedge_{J\leq H<K}(\tpp^{H} \to \alpha^{H})
      \big)
      & 
      \text{if $\gamma=\alpha\until_I\beta$}
      \\
      C_w^{J,0}\vee C_w^{J,-1}\vee C_w^{J,-2}
      &
      \text{if $\gamma=\predecessor_I\alpha$}
      \\
      C_w^{J,0}\vee C_w^{J,+1}\vee C_w^{J,+2}
      &
      \text{if $\gamma=\successor_I\alpha$}
    \end{cases}
\end{equation*}
with
\begin{equation*}
  C_w^{J,\pm\ell}:=
  \begin{cases} 
    \tcp^{J,J}_I\wedge\alpha^J\wedge\neg\tpp^J
    & \text{if $\ell=0$ and $I\not=\{0\}$}
    \\
    \tcp^{\max\{J,J\pm1\},\min\{J,J\pm1\}}_I\wedge\alpha^{J\pm1}\wedge\tpp^J\wedge\tpp^{J\pm1}
    & \text{if $\ell=\pm1$ and $j\pm1\in\pos(w)$}
    \\
    \tcp^{\max\{J,J\pm2\},\min\{J,J\pm2\}}_I\wedge\alpha^{J\pm2}\wedge\neg\tpp^{J\pm1}
    & \text{if $\ell=\pm2$ and $j\pm2\in\pos(w)$}
    \\
    \false & \text{otherwise}
  \end{cases}
\end{equation*} 
where $J\pm\ell$ denotes the interval of $w$'s letter at the position
$j\pm\ell$, provided that $j\pm\ell\in\positions(w)$.
We also define the substitution~$\theta_w$ over the propositions of
$\fPhi{w}{\gamma,J}$ as follows.
\begin{equation*}
  \alpha^K \mapsto \begin{cases}
    \true & \text{if $\osem{w,k}{\alpha}=\true$}
    \\
    \false & \text{if $\osem{w,k}{\alpha}=\false$}
  \end{cases}
  \qquad\quad  
  \tpp^K \mapsto \true \quad\text{if $|K|=1$}
  \qquad\quad
  \tcp_I^{H,K} \mapsto \begin{cases}
    \true & \text{if $H-K\not=\emptyset$ and $H-K\subseteq I$}
    \\
    \false & \text{if $(H-K)\cap I=\emptyset$}
  \end{cases}
\end{equation*}
For the propositions not listed, $\theta_w$ is undefined.  In general,
a substitution~$\theta$ is a partial function from propositions to
propositional formulas. Its homomorphic extension to propositional
formulas is as expected, in particular, $\theta(\Psi)$ is the
propositional formula in which the occurrences of
propositions~$p\in\pdef(\theta)$ within the propositional formula~$\Psi$
are replaced by $\theta(p)$, and the occurrences of propositions not
in $\pdef(\theta)$ are unaltered.

Let $\equiv$ denote semantic equivalence between propositional
formulas.  The following lemma connects $\gamma$'s truth value under
MTL's three-valued semantics with the propositional formula
$\theta_w(\fPhi{w}{\gamma,J})$.  Its proof is straightforward and
omitted.
\begin{lemma}
  \label{lem:propformula}
  The following two statements hold.
  \begin{enumerate}[(i)]
  \item If $\osem{w,j}{\gamma}\in\Two$ then 
    $\theta_w(\fPhi{w}{\gamma,J})\equiv \osem{w,j}{\gamma}$.
  \item If $\osem{w,j}{\gamma}=\unknown$ then
    $\theta_w(\fPhi{w}{\gamma,J})\not\equiv\true$ and 
    $\theta_w(\fPhi{w}{\gamma,J})\not\equiv\false$.
  \end{enumerate}
\end{lemma}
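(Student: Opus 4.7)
The plan is to prove both statements simultaneously by structural induction on $\gamma$. The key conceptual point is that the propositional formula $\fPhi{w}{\gamma,J}$ is constructed so as to mirror, clause by clause, the inductive definition of $\osem{w,j}{\gamma}$ in Section~\ref{subsec:semantics}, with the propositions $\alpha^K$, $\tpp^K$, and $\tcp^{H,K}_I$ standing in for the values $\osem{w,k}{\alpha}$, $\istp_w(k)$, and $\tc_{w,I}(h,k)$, respectively. The substitution $\theta_w$ is defined exactly so that it replaces each such proposition by its truth value whenever that value lies in $\Two$, and leaves it uninstantiated (hence still a ``free'' propositional variable) precisely when the corresponding three-valued value is $\unknown$.

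The crux of the argument is the following monotonicity property of strong Kleene logic, which I would state once and reuse in every inductive case: if a propositional formula $\Psi$ is evaluated under the three-valued semantics so that variables with $\theta_w$-image $\true$ (resp.\ $\false$) take value $\true$ (resp.\ $\false$), and variables outside $\pdef(\theta_w)$ take value $\unknown$, and if the resulting value is some $b\in\Two$, then $\theta_w(\Psi)\equiv b$ classically; conversely, if the result is $\unknown$, then $\theta_w(\Psi)$ is equivalent to neither $\true$ nor $\false$. This is precisely because the Kleene tables in Table~\ref{tab:connectives} give a Boolean output only when that output is forced regardless of how the $\unknown$ inputs are resolved.

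With this tool in hand the induction is essentially routine. The base cases $\gamma=\true$ and $\gamma=p$ are immediate from the definitions of $\fPhi{w}{\gamma,J}$ and $\theta_w$. For $\neg\alpha$ and $\alpha\vee\beta$, I would apply the induction hypothesis to the subformulas and invoke the Kleene property above. For the temporal cases $\alpha\since_I\beta$ and $\alpha\until_I\beta$, observe that the outer big disjunction and inner big conjunction in $\fPhi{w}{\gamma,J}$ are finite (since $w$ is finite) and have exactly the same shape as the disjunctions and conjunctions in the semantic definition; the inductive hypotheses on $\alpha$ and $\beta$ together with the definition of $\theta_w$ on the $\tpp^K$ and $\tcp^{H,K}_I$ propositions then let the Kleene property carry through. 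The cases $\predecessor_I\alpha$ and $\successor_I\alpha$ are handled analogously by unfolding the three disjuncts $C_w^{J,0}$, $C_w^{J,\pm1}$, $C_w^{J,\pm2}$, which correspond one-to-one to the three conjuncts $c_0,c_{\pm1},c_{\pm2}$ in the semantics.

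The main obstacle I anticipate is bookkeeping rather than mathematical depth: the temporal and freeze-free ``next/previous'' clauses involve several subcases indexed by whether $j\pm1$ or $j\pm2$ is in $\pos(w)$ and by $I\neq\{0\}$, and each must be checked to line up syntactically with the semantic case analysis. A clean way to contain this is to prove, as a single auxiliary observation, that for every such case the literal-by-literal correspondence $\tpp^K\leftrightarrow \istp_w(k)$, $\tcp^{H,K}_I\leftrightarrow \tc_{w,I}(h,k)$, $\alpha^K\leftrightarrow \osem{w,k}{\alpha}$ is preserved by $\theta_w$ under the Kleene semantics, and then appeal to it uniformly in all temporal cases.
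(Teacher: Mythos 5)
There is a genuine gap in the second half of your ``crux'' tool, and it is exactly the half that part~(ii) of the lemma rests on. The direction you justify correctly is soundness: the Kleene operators in Table~\ref{tab:connectives} are monotonic, so if the three-valued evaluation of $\fPhi{w}{\gamma,J}$ (with unsubstituted propositions read as $\unknown$) yields $b\in\Two$, then every Boolean completion yields $b$, hence $\theta_w(\fPhi{w}{\gamma,J})\equiv b$; this gives part~(i). The converse you assert---that a Kleene value of $\unknown$ implies the residual formula is classically equivalent to neither constant---is false as a general principle of strong Kleene logic: $p\vee\neg p$ evaluates to $\unknown$ when $p$ is $\unknown$, yet its residual is a tautology. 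Kleene evaluation is sound but not complete for ``forcedness,'' so part~(ii) cannot be discharged by one generic lemma ``reused in every inductive case.'' It needs an argument about the particular syntactic shape of $\theta_w(\fPhi{w}{\gamma,J})$: which propositions remain unsubstituted, with which polarities, and why the surviving formula is neither valid nor unsatisfiable. This is delicate precisely in the temporal cases, where a proposition such as $\tpp^H$ occurs positively in the disjunct anchored at $H$ and negatively (inside $\tpp^H\to\alpha^H$) in disjuncts anchored later, so a disjunction of individually non-tautologous disjuncts is not automatically non-tautologous. The paper omits the proof of this lemma as ``straightforward,'' but its proof of the generalization (Lemma~\ref{lem:propformulaMTLdata}) handles the $\unknown$ direction by exactly this kind of formula-specific inspection of the residual disjuncts, not by a generic Kleene argument; your sketch is missing that ingredient.

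A secondary, more cosmetic point: structural induction buys you nothing here. In $\fPhi{w}{\gamma,J}$ the direct subformulas of $\gamma$ appear only as atomic propositions $\alpha^K$, and $\theta_w$ acts on those propositions directly via $\osem{w,k}{\alpha}$, not via $\theta_w(\fPhi{w}{\alpha,K})$; so the ``induction hypothesis'' you invoke for $\neg\alpha$ and $\alpha\vee\beta$ is never actually usable---what does the work is the definition of $\theta_w$. A case split on $\gamma$'s outermost connective, as in the paper's proof of Lemma~\ref{lem:propformulaMTLdata}, is the right granularity.
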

Note that the propositional formula $\theta_w(\fPhi{w}{\gamma,J})$
tells us more than the truth value $\osem{w,j}{\gamma}$. When
$\theta_w(\fPhi{w}{\gamma,J})\not\equiv b$, for $b\in\Two$, we also
know, in addition to $\osem{w,j}{\gamma}=\bot$, what causes the
uncertainty, namely, the corresponding counterparts of the
propositions that are not replaced by Boolean constants.

Next, we provide a tailored version of $\fPhi{w}{\gamma,J}$ that is
better suited for monitoring.  Note that for the cases
$\gamma=\alpha\since_I\beta$ and $\gamma=\alpha\until_I\beta$, at an
anchor position $K$, the truth value $\osem{w,k}{\alpha}$ is
irrelevant for $\osem{w,j}{\gamma}$.  However, when $|K|>1$ and when
refining $w$ by splitting $K$ at some $\kappa\in K$, we obtain new
anchor and continuation positions for which the truth value
$\osem{w,k}{\alpha}$ becomes relevant.  With the tailored
version~$\fPsi{w}{\gamma,J}$ of $\fPhi{w}{\gamma,J}$ we keep track of
$\alpha$'s truth value at anchor positions $K$
(cf. Example~\ref{ex:alphaK}).  The definition of $\fPsi{w}{\gamma,J}$
is as follows.
\begin{equation*}
  \fPsi{w}{\gamma,J}:=\begin{cases}
    \bigvee_{K\leq J}\big(
    \tpp^K\wedge
    \tcp^{J,K}_I \wedge \beta^K \wedge
    (\overline{\tpp}^K\rightarrow \alpha^K) \wedge
    \bigwedge_{K<H\leq J}(\tpp^H\rightarrow\alpha^H)
    \big)
    &\text{if $\gamma=\alpha\since_I\beta$}
    \\
    \bigvee_{K\geq J}\big(
    \tpp^K\wedge
    \tcp^{K,J}_I \wedge \beta^K \wedge
    (\overline{\tpp}^K\rightarrow \alpha^K) \wedge
    \bigwedge_{J\leq H<K}(\tpp^H\rightarrow\alpha^H)
    \big)
    &\text{if $\gamma=\alpha\until_I\beta$}
    \\
    \fPhi{w}{\gamma,J} & \text{otherwise}
  \end{cases}
\end{equation*}
For the new propositions $\overline{\tpp}^K$, we extend the
substitution $\theta_w$ by $\overline{\tpp}^K\mapsto\false$ if
$|K|=1$.

\begin{example}
  \label{ex:alphaK}
  We illustrate the definitions of the propositional formulas
  $\fPhi{w}{\gamma,J}$ and $\fPsi{w}{\gamma,J}$, with
  $\gamma=\alpha\until\beta$, and the reason for using
  $\theta_w(\fPsi{w}{\gamma,J})$ for monitoring.
  Let $w$ be an observation with the intervals $J_0=[0,\tau)$,
  $J_1=\set{\tau}$, and $J_2=(\tau,\infty)$, and where $\alpha$'s and
  $\beta$'s truth values are everywhere $\unknown$, except for
  position $1$, where $\osem{w,1}{\alpha}=\osem{w,1}{\beta}=\false$.
  By definition,
  \begin{align*}
    \fPhi{w}{\gamma,J_0} =\  
    &
    \big(\tpp^{J_0}\wedge\tcp_I^{J_0,J_0}\wedge\beta^{J_0}\big)\vee
    \big(\tpp^{J_1}\wedge\tcp_I^{J_1,J_0}\wedge\beta^{J_1}\wedge(\tpp^{J_0}\rightarrow\alpha^{J_0})\big)\,\vee
    \\
    &
    \big(\tpp^{J_2}\wedge\tcp_I^{J_2,J_0}\wedge\beta^{J_2}\wedge(\tpp^{J_0}\rightarrow\alpha^{J_0})\wedge(\tpp^{J_1}\rightarrow\alpha^{J_1})\big)
    \,,
    \\
    \fPsi{w}{\gamma,J_0} =\  
    &
    \big(\tpp^{J_0}\wedge\tcp_I^{J_0,J_0}\wedge\beta^{J_0}\wedge(\overline{\tpp}^{J_0}\rightarrow\alpha^{J_0})\big)\vee 
    \big(\tpp^{J_1}\wedge\tcp_I^{J_1,J_0}\wedge\beta^{J_1}\wedge(\overline{\tpp}^{J_1}\rightarrow\alpha^{J_1})\wedge(\tpp^{J_0}\rightarrow\alpha^{J_0})\big)\,\vee
    \\
    & 
    \big(\tpp^{J_2}\wedge\tcp_I^{J_2,J_0}\wedge\beta^{J_2}\wedge(\overline{\tpp}^{J_2}\rightarrow\alpha^{J_2})\wedge(\tpp^{J_1}\rightarrow\alpha^{J_1})\wedge(\tpp^{J_0}\rightarrow\alpha^{J_0})\big)\,,
     \\
  \intertext{and}
    \theta_w=\ 
    &[\alpha^{J_1}\mapsto\false, \beta^{J_1}\mapsto\false,
      \tpp^{J_1}\mapsto\true, \overline{\tpp}^{J_1}\mapsto\false,
      \tcp_I^{J_0,J_0}\mapsto\true,
      \tcp_I^{J_1,J_0}\mapsto\true,
      \tcp_I^{J_2,J_0}\mapsto\true]
    \,.
  \end{align*}
  Furthermore, let $w'$ be the observation that is obtained from $w$
  by the transformation~(T\ref{enum:observation_split}), where the
  interval $J_0$ is split at $\kappa\in J_0$. That is, the intervals
  of $w'$ are $K_0=[0,\kappa)$, $K_1=\set{\kappa}$,
  $K_2=(\kappa,\tau)$, $J_1$, and $J_2$.  Note that
  $\osem{w',3}{\alpha}=\osem{w',3}{\beta}=\false$ and $\unknown$
  anywhere else.

  We have the following semantic equivalences.
  \begin{equation*}
    \begin{array}{r@{\;}l@{\qquad\qquad}r@{\;}l}
      \theta_w(\fPhi{w}{\gamma,J_0}) & 
      \equiv 
      \tpp^{J_0}\wedge \beta^{J_0}
      & 
      \theta_{w'}(\fPhi{{w'}}{\gamma,K_1}) & 
      \equiv 
      \beta^{K_1} \vee 
      \big(\tpp^{K_2}\wedge\beta^{K_2}\wedge\alpha^{K_1}\big)
      \\
      \theta_w(\fPsi{w}{\gamma,J_0}) & 
      \equiv 
      \tpp^{J_0}\wedge\beta^{J_0}\wedge(\overline{\tpp}^{J_0}\rightarrow\alpha^{J_0})
      &
      \theta_{w'}(\fPsi{w'}{\gamma,K_1}) & 
      \equiv 
      \beta^{K_1} \vee 
      \big(
      \tpp^{K_2}\wedge\beta^{K_2}\wedge(\overline{\tpp}^{K_2}\rightarrow\alpha^{K_2}) \wedge\alpha^{K_1}
      \big)
    \end{array}
  \end{equation*}
  Observe that the proposition~$\alpha^{J_0}$ does not occur
  in~$\theta_w(\fPhi{w}{\gamma,J_0})$. In contrast, $\alpha^{J_0}$
  occurs in~$\theta_w(\fPsi{w}{\gamma,J_0})$.
  With the subformula~$\overline{\tpp}^{J_0}\rightarrow\alpha^{J_0}$,
  we store information about $\alpha$'s truth value in $J_0$. In this
  example, since
  $\theta_w(\overline{\tpp}^{J_0}\rightarrow
  \alpha^{J_0})\equiv\overline{\tpp}^{J_0}\rightarrow\alpha^{J_0}$
  we know that $\alpha$'s truth value in $J_0$ is~$\unknown$.  If
  $\theta_w(\overline{\tpp}^{J_0}\rightarrow\alpha^{J_0})\equiv\true$,
  then we infer that $\alpha$'s truth value in $J_0$ is~$\true$, and if
  $\theta_w(\overline{\tpp}^{J_0}\rightarrow\alpha^{J_0})\equiv\neg\overline{\tpp}^{J_0}$,
  $\alpha$'s truth value in $J_0$ is~$\false$. This information
  is relevant when splitting $J_0$.  In particular, it allows us
  to obtain $\theta_{w'}(\fPsi{w'}{\gamma,K_1})$ from
  $\theta_w(\fPsi{w}{\gamma,J_0})$ because all propositions that occur
  in $\theta_{w'}(\fPsi{w'}{\gamma,K_1})$ originate from propositions
  that already occur in $\theta_w(\fPsi{w}{\gamma,J_0})$.
  \exampleendmark
\end{example}

The following lemma shows that Lemma~\ref{lem:propformula} carries
over to $\theta_w(\fPsi{w}{\gamma,J})$. We omit its straightforward proof.
\begin{lemma} 
  For $b\in\Two$, 
  \begin{equation*}
    \theta_w(\fPhi{w}{\gamma,J})\equiv b
    \qquad\text{iff}\qquad
    \theta_w(\fPsi{w}{\gamma,J})\equiv b
    \,.
    \end{equation*}
\end{lemma}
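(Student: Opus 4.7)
The plan is to reduce to the only non-trivial cases, $\gamma = \alpha \since_I \beta$ and $\gamma = \alpha \until_I \beta$, since for all other forms of $\gamma$ we have $\fPhi{w}{\gamma,J} = \fPsi{w}{\gamma,J}$ syntactically and the equivalence is immediate. I will treat the $\until_I$ case; the $\since_I$ case is symmetric. The only difference between the two formulas is that, for every anchor position $K$, the disjunct of $\fPsi{w}{\gamma,J}$ contains the extra conjunct $(\overline{\tpp}^K \rightarrow \alpha^K)$. Under $\theta_w$ this extra conjunct simplifies to $\true$ exactly when $|K|=1$, because $\theta_w$ maps $\overline{\tpp}^K$ to $\false$ in that case; when $|K|>1$, the proposition $\overline{\tpp}^K$ (and also $\tpp^K$) remains free in $\theta_w(\fPsi{w}{\gamma,J})$.

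I will combine two elementary observations. First, $\fPsi{w}{\gamma,J}$ entails $\fPhi{w}{\gamma,J}$ as propositional formulas, since each disjunct of $\fPsi{w}{\gamma,J}$ is obtained from the corresponding disjunct of $\fPhi{w}{\gamma,J}$ by adding a conjunct; hence $\theta_w(\fPsi{w}{\gamma,J})$ logically implies $\theta_w(\fPhi{w}{\gamma,J})$ pointwise, yielding both $\theta_w(\fPsi{w}{\gamma,J}) \equiv \true \Rightarrow \theta_w(\fPhi{w}{\gamma,J}) \equiv \true$ and $\theta_w(\fPhi{w}{\gamma,J}) \equiv \false \Rightarrow \theta_w(\fPsi{w}{\gamma,J}) \equiv \false$. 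Second, substituting $\overline{\tpp}^K \mapsto \false$ for every $K$ with $|K|>1$ in $\theta_w(\fPsi{w}{\gamma,J})$ turns each added conjunct into $\true$ and so produces exactly $\theta_w(\fPhi{w}{\gamma,J})$. Since applying a substitution preserves both $\equiv \true$ and $\equiv \false$, this substitution relationship delivers $\theta_w(\fPsi{w}{\gamma,J}) \equiv b \Rightarrow \theta_w(\fPhi{w}{\gamma,J}) \equiv b$ for every $b \in \Two$. Together with the first observation, this settles the case $b = \false$.

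The remaining direction, $\theta_w(\fPhi{w}{\gamma,J}) \equiv \true \Rightarrow \theta_w(\fPsi{w}{\gamma,J}) \equiv \true$, is where I invoke Lemma~\ref{lem:propformula}. From $\theta_w(\fPhi{w}{\gamma,J}) \equiv \true$ the lemma gives $\osem{w,j}{\gamma} = \true$, which by the semantics of $\until_I$ yields a witness time point $k^* \geq j$ such that $\tc_{w,I}(k^*,j) = \true$, $\osem{w,k^*}{\beta} = \true$, and $\osem{w,l}{\alpha} = \true$ for every position $l$ with $j \leq l < k^*$. Because $k^*$ is a time point, $|K^*|=1$, so $\theta_w$ maps $\overline{\tpp}^{K^*}$ to $\false$; the extra conjunct of the $K^*$-disjunct of $\fPsi{w}{\gamma,J}$ therefore collapses to $\true$ and this disjunct coincides with the $K^*$-disjunct of $\fPhi{w}{\gamma,J}$ under $\theta_w$. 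A direct check then shows that every conjunct of that disjunct evaluates to $\true$ under $\theta_w$: the propositions $\tpp^{K^*}$, $\tcp^{K^*,J}_I$, and $\beta^{K^*}$ are mapped to $\true$ by the witness conditions, and each implication $\tpp^H \rightarrow \alpha^H$ for $J \leq H < K^*$ becomes propositionally $\true$ because $\theta_w(\alpha^H) = \true$ (regardless of whether $\tpp^H$ is free or mapped to $\true$). Thus $\theta_w(\fPsi{w}{\gamma,J}) \equiv \true$.

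The only real obstacle is this last direction: purely propositionally, augmenting each disjunct by a fresh-variable implication could break a tautology when $|K|>1$, since one could set $\overline{\tpp}^K = \true$ and $\alpha^K = \false$. What rules this out is the structural information supplied by Lemma~\ref{lem:propformula}, namely that any tautology-witness for $\theta_w(\fPhi{w}{\gamma,J})$ must correspond to a time point, precisely where $\theta_w$ neutralizes the newly added conjunct.
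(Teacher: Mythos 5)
Your proof is correct. Note that the paper does not actually give a proof of this lemma---it is dismissed as straightforward and omitted---so there is no argument of record to compare yours against; what follows is an assessment on its own merits. The reduction to the $\since_I$/$\until_I$ cases is right, and your two propositional observations (that $\fPsi{w}{\gamma,J}$ entails $\fPhi{w}{\gamma,J}$ disjunct-by-disjunct, and that substituting $\false$ for the residual propositions $\overline{\tpp}^K$ with $|K|>1$ in $\theta_w(\fPsi{w}{\gamma,J})$ recovers $\theta_w(\fPhi{w}{\gamma,J})$ up to equivalence) correctly dispose of three of the four implications. You also correctly isolated the one direction that cannot be obtained by propositional reasoning alone, namely $\theta_w(\fPhi{w}{\gamma,J})\equiv\true\Rightarrow\theta_w(\fPsi{w}{\gamma,J})\equiv\true$, and your route through Lemma~\ref{lem:propformula} is sound: a true disjunct of the three-valued $\until_I$ semantics forces its anchor position to be a time point, so $|K^*|=1$, $\theta_w(\overline{\tpp}^{K^*})=\false$, and the added conjunct is neutralized; the check that $\istp_w$ is never $\false$ (so the continuation implications being $\true$ forces $\osem{w,l}{\alpha}=\true$) is also handled correctly. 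Two cosmetic remarks: the substitution $[\overline{\tpp}^K\mapsto\false\mid |K|>1]$ yields a formula \emph{semantically equivalent} to, not syntactically identical with, $\theta_w(\fPhi{w}{\gamma,J})$ (the conjuncts $\false\rightarrow\cdot$ remain in place), but only equivalence is used, so nothing breaks; and the inference from $\theta_w(\fPhi{w}{\gamma,J})\equiv\true$ to $\osem{w,j}{\gamma}=\true$ uses both parts of Lemma~\ref{lem:propformula} (part~(ii) contrapositively to rule out $\unknown$, then part~(i)), which is worth spelling out but is certainly valid.
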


\subsubsection{State Variables}
\label{subsubsec:state}

The monitor's state consists of the global variable~\ls{observation}
and the global variables~\ls{gate$^{\gamma,J}$}, where $\gamma$ is a
nonatomic subformula of the monitored formula $\phi$ and $J$ is an interval.
The monitor stores in the state variable \ls{observation} its
knowledge about the system behavior.  This variable is updated in each
iteration according to the message received by the procedure
\ls{UpdateKnowledge} (cf. Section~\ref{subsec:main}). More concretely,
in the monitor's $i$th iteration, the state
variable~\ls{observation} equals the observation~$w_i$ of the valid
observation sequence~$(w_i)_{i\in\Nat}$.
The state variables~\ls{gate$^{\gamma,J}$} are used for the verdict
computation.  In particular, the monitor maintains the invariant
$\mathsf{gate}^{\gamma,J}\equiv \theta_{w_i}(\fPsi{w_i}{\gamma,J})$.
Because of the assumption that $\phi$ is not an atomic formula, the
monitor only needs to maintain state variables \ls{gate}$^{\gamma,K}$,
where $\gamma$ is not atomic. Atomic formulas $p$ only occur as
propositions $p^J$ in the propositional formulas.
We remark that we chose the variable name \ls{gate} since the
propositional formulas can be seen as logic gates in a combinational
circuit. Each such gate computes a Boolean operation, where the input
signals are the formula's propositions.

For the sake of simplicity, we do not explicitly remove irrelevant
state variables. Instead, we assume that they are automatically
``garbage collected.''  For instance, a state
variable~\ls{gate$^{\gamma,J}$} with $|J|=1$ becomes irrelevant when
it is semantically equivalent to a Boolean constant and its truth
value has been propagated and, when $\gamma$ is the monitored
formula~$\phi$, the verdict for $J$ has been output.  For simplicity,
we also do not discard any knowledge about the system behavior.
In practice, one would remove irrelevant information from the state
variable \ls{observation}, for example, isolated time points for which
the monitor has already output a verdict.

\label{def:API}
Instead of fixing a concrete representation of the propositional
formulas that are stored in the state variables
\ls{gate$^{\gamma,J}$}, we provide an abstract interface for accessing
and updating \ls{gate$^{\gamma,J}$}.  In Section~\ref{subsec:algo}, we
use this interface to describe the monitor's central algorithmic
details, which are independent from an actual representation of the
propositional formulas.  In Section~\ref{subsec:datastruct}, we
describe a graph-based data structure for implementing the interface
for the generalized setting with the freeze quantifier.  Note that
this presentation-independent description also allows us to separate
concerns in the monitor's correctness proof
(cf. Section~\ref{subsec:correct}).
The interface comprises the following procedures.
\begin{itemize}[--]
\item \ls{Clone(gate)}: returns a copy of \ls{gate}.
\item \ls{IsBool(gate)}: returns true iff \ls{gate} is semantically
  equivalent to a Boolean constant.
\item \ls{ToBool(gate)}: returns the Boolean value $b\in\Two$,
  provided that \ls{gate} is semantically equivalent to the
  corresponding Boolean constant.
\item \ls{Contains(gate, $p$)}: returns true iff \ls{gate} depends on
  the proposition $p$, that is,
  $[p\mapsto\true](\mathsf{gate})\not\equiv
  [p\mapsto\false](\mathsf{gate})$.  We shall abuse terminology in the
  following by also saying that $p$ occurs in \ls{gate}, although the
  occurrence of a proposition in a formula can be representation
  dependent.
\item \ls{Eval(gate, $\theta$)}: applies the substitution $\theta$ to
  \ls{gate}, where $\theta$ only replaces propositions $p$ with one of
  the Boolean constants~$\true$ or~$\false$.
\item \ls{Instantiate(gate)}: substitutes Boolean constants for the
  propositions of the form $\tpp^L$, $\overline{\tpp}^L$,
  $\tcp^{L,L'}_I$, and $(\true)^{L}$ in \ls{gate}, wherever
  possible. Note that the Boolean constants for these propositions can
  be determined by their name.  For instance, $\tcp^{L,L'}_I$ is
  replaced by $\true$ iff $L-L'\not=\emptyset$ and $L-L'\subseteq I$,
  and by $\false$ iff $I\cap(L-L')=\emptyset$. Furthermore, note that
  \ls{Instantiate} is a special case of \ls{Eval}.
\item \ls{Rename(gate, $\theta$)}: applies the substitution $\theta$
  to \ls{gate}, where $\theta$ only renames propositions $p$ with some
  proposition $p'$.
\end{itemize}
For the following last two interface procedures \ls{Add} and
\ls{Remove}, we first introduce the following additional notion.  The
propositional formulas $\theta_w(\fPsi{w}{\gamma,J})$ and hence also
\ls{gate$^{\gamma,J}$} can be grouped into subformulas with respect to
an interval and a direct subformula of $\gamma$.  For instance, note
that \ls{gate$^{\alpha\until_I\beta,J}$} is semantically equivalent to
$\bigvee_{K\geq J}(\mathit{anchor}_K\land\bigwedge_{H\leq
  L<K}\mathit{continuation}_L)$, with
$\mathit{anchor}_K = \tpp^K \land \tcp^{K,J}_I \land \beta^K \land
(\overline{\tpp}^K\to\alpha^K)$ and
$\mathit{continuation}_L = \tpp^L\to\alpha^L$, possibly with some of
their literals replaced by Boolean constants, and where the
intervals~$K$ and~$L$ range over the intervals of the letters
in~\ls{observation}.
The \emph{$(\beta,K)$-relevant} part of
\ls{gate$^{\alpha\until_I\beta,J}$} is the propositional formula
$\mathit{anchor}_K$.
The \emph{$(\alpha,L)$-relevant} part of
\ls{gate$^{\alpha\until_I\beta,J}$} is the propositional formula
$\mathit{continuation}_L$ if $L\not=J$, and, if $L=J$, the
subformula~$\overline{\tpp}^J\rightarrow\alpha^J$ of
$\mathit{anchor}_J$, possibly with some propositions replaced by
Boolean constants.  Note that the relevant part can be~$\true$
or~$\false$.
For example, for the formula $\theta_{w'}(\fPsi{w'}{\gamma,K_1})$ in
Example~\ref{ex:alphaK}, the $(\beta,K_1)$-relevant part is
$\beta^{K_1}$, the $(\beta,K_2)$-relevant part is
$\tpp^{K_2}\wedge\beta^{K_2}\wedge(\overline{\tpp}^{K_2}\rightarrow\alpha^{K_2})$,
and the $(\beta,H)$-relevant part is $\false$, for $H$ being $K_0$,
$J_1$, or $J_2$.  Furthermore, the $(\alpha,K_2)$-relevant part is
$\overline{\tpp}^{K_2}\rightarrow\alpha^{K_2}$ and the
$(\alpha,H)$-relevant part is $\true$, for $H$ being $K_0$, $K_1$,
$J_1$, or $J_2$.
The definition of the relevant parts for other formulas $\gamma$ is as
expected and omitted. For instance, the $(\alpha,J)$-relevant part of
$\mathsf{gate}^{\alpha\vee\beta,J} = \alpha^J$ is $\alpha^J$ and its
$(\beta,J)$-relevant part is $\false$.
Note that for temporal formulas the relevant
parts are always defined for their direct subformulas. For nontemporal
formulas, the relevant parts are only defined for their direct
subformulas and when the intervals match. 
\begin{itemize}[--]
\item \ls{Add(gate, $\alpha$, $K$, $\Theta$)}: replaces the
  $(\alpha,K)$-relevant part of \ls{gate} with the propositional
  formula $\Theta$.  We require that $\Theta$ is of the form of the
  relevant parts of \ls{gate}.
  
\item \ls{Remove(gate, $\alpha$, $K$)}: returns the
  $(\alpha,K)$-relevant part of \ls{gate} and ``removes'' it from
  \ls{gate}. For anchors, the removal corresponds to a replacement
  with the Boolean constant~$\false$. For continuations, the relevant
  part is replaced by the Boolean constant~$\true$.
\end{itemize}

\subsubsection{Initialization}
\label{subsubsec:init}

\begin{listing}  
  \begin{minipage}[t]{0.45\linewidth}
    \begin{lstlisting}[caption={Initialization procedure.},label={lst:init},captionpos=b]
procedure Init($\phi$)
  observation := $w_0$
  foreach $\gamma$ with $\gamma\in\sub(\phi)$ and not IsAtom($\gamma)$ do 
    # Iterate top down with respect to $\color{blue}\phi$'s formula structure.
    $\textsf{gate}^{\gamma,[0,\infty)}$ := $\fPsi{w_0}{\gamma,[0,\infty)}$
    Instantiate(gate$^{\gamma,[0,\infty)}$)
    if IsBool(gate$^{\gamma,[0,\infty)}$) then 
      PropagateTruthValue($\gamma$, $[0,\infty)$, ToBool(gate$^{\gamma,[0,\infty)}$))
\end{lstlisting}
  \end{minipage}
\end{listing}
The procedure \ls{Init}, shown in Listing~\ref{lst:init}, initializes
the state variables.  Initially, \ls{observation} is the word
$w_0$. Recall that $w_0=\big([0,\infty),(\emap,\emap)\big)$.
Furthermore, for $\gamma\in\sub(\phi)$, \ls{Init} initializes
\ls{gate$^{\gamma,[0,\infty)}$} with the propositional formula
$\theta_{w_0}(\fPsi{w_0}{\gamma,[0,\infty)})$.
For this, the \ls{Init} procedure uses the interface procedure
\ls{Instantiate} and the procedure \ls{PropagateTruthValue}, which we
present in Section~\ref{subsubsec:prop_bool}, for propagating Boolean
values up the formula structure.  Since the formula is traversed
top-down, Boolean truth values are always propagated to already
initialized state variables.

\subsection{Algorithmic Details}
\label{subsec:algo}

In the following, we provide algorithmic details for the monitor's
central procedures \ls{AddTimePoint}, \ls{RemoveInterval}, and
\ls{PropagateTruthValue}.  Recall from Section~\ref{subsec:main} that
each of these procedures updates the monitor's state, in particular,
the propositional formulas stored in the \ls{gate} variables according
to one of the
transformations~(T\ref{enum:observation_split}),~(T\ref{enum:observation_removal}),
and~(T\ref{enum:observation_data}).

\begin{listing}
  \begin{minipage}[t]{0.48\linewidth}
    \begin{lstlisting}[caption={Procedure for transformation~(T\ref{enum:observation_split}).},label={lst:tp},captionpos=b]
procedure AddTimePoint($\phi$, $J$, $\tau$)
  foreach gate$^{\gamma,J}$ and $K\in\{J\cap[0,\tau),\{\tau\},J\cap(\tau,\infty)\}$ do
    $\textsf{gate}^{\gamma,K}$ := Clone($\textsf{gate}^{\gamma,J}$)
  Delete($J$) 
  if IsBool($\mathsf{gate}^{\phi,\{\tau\}}$) then
    OutputVerdict($\{\tau\}$, ToBool($\mathsf{gate}^{\phi,\{\tau\}}$))
  foreach gate$^{\gamma,H}$ with Contains(gate$^{\gamma,H}$, $p$), 
                            $\textsf{for}$ some proposition $p$ $\textsf{with}$ the interval $J$ do
    # Iterate top down with respect to $\color{blue}\phi$'s formula structure.
    case $\gamma = \neg\alpha$: $\hspace{.395cm}$Rename($\mathsf{gate}^{\gamma,H}$, $[\alpha^J\mapsto \alpha^H]$)
    case $\gamma = \alpha\lor\beta$: $\hspace{.13cm}$Rename($\mathsf{gate}^{\gamma,H}$, $[\alpha^J\mapsto \alpha^H, \beta^J\mapsto \beta^H]$)
    case $\gamma = \predecessor_I\alpha$: $\hspace{.21cm}$ $\dots$ # Omitted; analogous to the next case.
    case $\gamma = \successor_I\alpha$: $\hspace{.21cm}$RefineNext($\gamma$, $H$, $J$, $\tau$)
                      Instantiate($\mathsf{gate}^{\gamma,H}$) 
    case $\gamma = \alpha\since_I\beta$: $\hspace{.05cm}$ $\dots$ # Omitted; analogous to the next case.
    case $\gamma = \alpha\until_I\beta$: $\hspace{.05cm}$RefineUntil($\gamma$, $H$, $J$, $\tau$)
                      Instantiate($\mathsf{gate}^{\gamma,H}$)
    if IsBool(gate$^{\gamma,H}$) then
      PropagateTruthValue($\gamma$, $H$, ToBool(gate$^{\gamma,H}$))
    \end{lstlisting}
  \end{minipage}
  \hfill
  \begin{minipage}[t]{0.48\linewidth}
    \begin{lstlisting}[caption={Procedure for transformation~(T\ref{enum:observation_removal}).},label={lst:remove},captionpos=b]
procedure RemoveInterval($K$)
  foreach gate$^{\gamma,J}$ with $J\not=K$ and Contains(gate$^{\gamma,J}$, $\tpp^K$) do
    Eval(gate$^{\gamma,J}$, $[\tpp^K\mapsto\false]$)
    if IsBool(gate$^{\gamma,J}$) then
      PropagateTruthValue($\gamma$, $J$, ToBool(gate$^{\gamma,J}$))
  Delete($K$) 
    \end{lstlisting}

    \begin{lstlisting}[caption={Procedure for transformation~(T\ref{enum:observation_data}).},label={lst:propagateup},captionpos=b]
procedure PropagateTruthValue($\alpha$, $J$, $b$)
  if IsRoot($\alpha$) and $|J|=1$ then 
    OutputVerdict($J$, $b$)
  else if not IsRoot($\alpha$) then
    foreach gate$^{\gamma,K}$ with Contains(gate$^{\gamma,K}$, $\alpha^{J}$) do
      Eval(gate$^{\gamma,K}$, $[\alpha^J\mapsto b]$)
      if IsBool(gate$^{\gamma,J}$) then
        PropagateTruthValue($\gamma$, $J$, ToBool(gate$^{\gamma,J}$))
    \end{lstlisting}
  \end{minipage}
\end{listing}

\subsubsection{Adding a Time Point.}
\label{subsubsec:add}

The pseudocode of the procedure \ls{AddTimePoint} is given in
Listing~\ref{lst:tp}.  Suppose that the respective transformation
splits the interval $J$ at $\tau$.  For $\tau>0$, we obtain the
new intervals $L:=J\cap[0,\tau)$, $T:=\{\tau\}$, and
$R:=J\cap(\tau,\infty)$.  For brevity, we do not present the details
for the corner case $\tau=0$, where we only obtain two new
intervals, $\{0\}$ and $J\setminus\{0\}$.
We first create for each $\gamma\in\sub(\phi)$ three copies of
$\textsf{gate}^{\gamma,J}$. Namely, we create the propositional
formulas~$\textsf{gate}^{\gamma,L}$, $\textsf{gate}^{\gamma,T}$, and
$\textsf{gate}^{\gamma,R}$.  Afterwards, we remove all the
propositional formulas for the interval~$J$ from the monitor's state,
that is, we delete \ls{gate$^{\gamma,J}$}, for all
$\gamma\in\sub(\phi)$.
We then handle the special case where $\textsf{gate}^{\phi,\{\tau\}}$ is
semantically equivalent to a Boolean constant, which results in
outputting a verdict for the new time point.

\begin{listing}  

  \begin{minipage}[t]{0.48\linewidth}
    \begin{lstlisting}[caption={Auxiliary procedure for (T1) and the connective~$\successor_I$.},label={lst:auxNEXT},captionpos=b]
procedure RefineNext($\successor_I\alpha$, $H$, $J$, $\tau$)
  $\gamma$, $L$, $T$, $R$ := $\successor_I\alpha$, $J\cap[0,\tau)$, $\set{\tau}$, $J\cap(\tau,\infty)$
  $C_0$, $C_1$:= Remove(gate$^{\gamma,H}$, $\alpha$, $H$), Remove(gate$^{\gamma,H}$, $\alpha$, $H+1$)
  Remove(gate$^{\gamma,H}$, $\alpha$, $H+2$)
  if $H\not\subseteq J$ then  # Note that $\color{blue}J=H+1$, $\color{blue}|H|=1$, and $\color{blue}C_0\equiv\false$.
    Add(gate$^{\gamma,H}$, $\alpha$, $L$, $[\tcp^{J,H}_I\mapsto\tcp^{L,H}_I,\tpp^J\mapsto\tpp^{L},\alpha^J\mapsto\alpha^{L}](C_1)$)
    Add(gate$^{\gamma,H}$, $\alpha$, $T$, $[\tcp^{J,H}_I\mapsto\tcp^{T,H}_I,\tpp^J\mapsto\neg\tpp^{L},\alpha^J\mapsto\alpha^{T}](C_1)$)
  else if $H=L$ then
    Add(gate$^{\gamma,H}$, $\alpha$, $L$, $[\tcp^{J,J}_I\mapsto\tcp^{L,L}_I,\tpp^J\mapsto\tpp^{L},\alpha^J\mapsto\alpha^{L}](C_0)$)
    Add(gate$^{\gamma,H}$, $\alpha$, $T$, $[\tcp^{J,J}_I\mapsto\tcp^{T,L}_I,\tpp^J\mapsto\neg\tpp^{L},\alpha^J\mapsto\alpha^{T}](C_0)$)
  else if $H=T$ then
    Add(gate$^{\gamma,H}$, $\alpha$, $R$, $[\tcp^{J,J}_I\mapsto\tcp^{R,T}_I,\tpp^J\mapsto\neg\tpp^{R},\alpha^J\mapsto\alpha^{R}](C_0)$)
    Add(gate$^{\gamma,H}$, $\alpha$, $H+2$, $[\tcp^{H+2,J}_I\mapsto\tcp^{H+2,T}_I,\tpp^J\mapsto\neg\tpp^{R}](C_1)$)
  else if $H=R$ then
    Add(gate$^{\gamma,H}$, $\alpha$, $R$, $[\tcp^{J,J}_I\mapsto\tcp^{R,R}_I,\tpp^J\mapsto\tpp^{R},\alpha^J\mapsto\alpha^{R}](C_0)$)
    Add(gate$^{\gamma,H}$, $\alpha$, $H+2$, $[\tcp^{J+1,J}_I\mapsto\tcp^{H+2,R}_I,\tpp^J\mapsto\tpp^{R}](C_1)$)
    \end{lstlisting}
  \end{minipage}
  \hfill
  \begin{minipage}[t]{0.48\linewidth}
    \begin{lstlisting}[caption={Auxiliary procedure for (T1) and the connective~$\until_I$.},label={lst:auxUNTIL},captionpos=b]
procedure RefineUntil($\alpha\until_I\beta$, $H$, $J$, $\tau$)
  $\gamma$, $L$, $T$, $R$ := $\alpha\until_I\beta$, $J\cap[0,\tau)$, $\set{\tau}$, $J\cap(\tau,\infty)$
  anchor, continuation := Remove($\mathsf{gate}^{\gamma,H}$, $\beta$, $J$), Remove($\mathsf{gate}^{\gamma,H}$, $\alpha$, $J$)
  foreach $K$ with $K\in\{L,T,R\}$ and $H\leq K$ do
    $\theta$ := $[\alpha^J\mapsto\alpha^K,\beta^J\mapsto\beta^K,\tpp^J\mapsto\tpp^K]\,\cup$
                      $\![\tcp^{J,L}_I\mapsto\tcp^{K,L}_I\mathbin{|}\text{for some interval }L]$
    Add($\mathsf{gate}^{\gamma,H}$, $\beta$, $K$, $\theta[\overline{\tpp}^J\mapsto\overline{\tpp}^K](\mathsf{anchor})$)
    Add($\mathsf{gate}^{\gamma,H}$, $\alpha$, $K$, $\theta[\overline{\tpp}^J\mapsto\tpp^K](\mathsf{continuation})$)
  if $H\subseteq J$ then
    Rename($\mathsf{gate}^{\gamma,H}$, $[\tcp^{L,J}_I\mapsto\tcp^{L,H}_I\mathbin{|}\text{for some interval }L]$)
    \end{lstlisting}
  \end{minipage}
\end{listing}

Finally, we update the propositional formulas
$\textsf{gate}^{\gamma,H}$ in which a proposition with the
interval~$J$ occurs.  Note that $H$ can be different from $L$, $T$,
and $R$ when $\gamma$ is a temporal formula.
We make a case split on $\gamma$'s form.  The cases $p$, $\neg\alpha$,
and $\alpha\vee\beta$ are obvious.
We replace any proposition with the interval~$J$ by the corresponding
proposition with the interval~$H$.
Let us turn to the case where $\gamma$ is of the form
$\alpha\until_I\beta$. We omit the dual case $\alpha\since_I\beta$.
The procedure \ls{RefineUntil}, shown in
Listing~\ref{lst:auxUNTIL}, updates the anchor and continuation subformulas of
the propositional formula $\textsf{gate}^{\gamma,H}$, that is, the
$(\beta,J)$-relevant and $(\alpha,J)$-relevant parts of
$\textsf{gate}^{\gamma,H}$.
\begin{itemize}[--]
\item If $H>J$, then we replace the $(\beta,J)$-relevant part with the
  three relevant parts for the intervals $L$, $T$, and $R$. They
  originate from the $(\beta,J)$-relevant part.  Similarly, we replace
  the $(\alpha,J)$-relevant parts.
\item If $H$ is one of the intervals $L$, $T$, or $R$, then we replace
  the relevant parts with the interval $J$ up to the interval $H$.
  Furthermore, we need to adjust the $J$ interval in the $\tcp$
  propositions in $\textsf{gate}^{\gamma,H}$.
\end{itemize}
After \ls{RefineUntil}, \ls{AddTimePoint} calls \ls{Instantiate} to
replace propositions with Boolean constants where possible.  Note that
after the instantiation, $\textsf{gate}^{\gamma,H}$ can be
semantically equivalent to a Boolean constant.  The \ls{if} statement
at the end of the second \ls{foreach} loop performs the corresponding
check and triggers the propagation.

Finally, let us consider the case where $\gamma$ is of the form
$\successor_I\alpha$.  We omit the dual case $\predecessor_I\alpha$.
The procedure \ls{RefineNext}, shown in Listing~\ref{lst:auxNEXT},
updates a propositional formula $\textsf{gate}^{\gamma,H}$ as follows.
It first removes all its relevant parts. Note that these have one of
the three intervals $H$, $H+1$, and $H+2$. Then, depending on whether
$H\not\subseteq J$ or $H$ is one of the intervals $L$, $T$, or $R$,
\ls{RefineNext} adds the new relevant parts to
$\textsf{gate}^{\gamma,H}$.  These parts originate from the old
relevant parts with the intervals $H$ and $H+1$.

\subsubsection{Removing an Interval.}
\label{subsubsec:remove}

The pseudocode of the procedure \ls{RemoveInterval} is given in
Listing~\ref{lst:remove}. Since $K$ does not contain any time points,
we replace any occurrence of the proposition $\tpp^K$ by $\false$.  It
suffices to only update propositional formulas \ls{gate$^{\gamma,J}$},
where $\gamma$ is a temporal formula and $J\not=K$. Note that this
replacement could trigger the propagation of Boolean values. For
instance, we propagate $\false$ from
\ls{gate$^{\alpha\until_I\beta,J}$}, if $K$ is the only anchor in
\ls{gate$^{\alpha\until_I\beta,J}$}, that is, if for all $H\neq K$,
the $(\beta,H)$-relevant part of \ls{gate$^{\alpha\until_I\beta,J}$}
is~$\false$.  Afterwards, we delete all the propositional formulas
\ls{gate$^{\gamma,K}$} with $\gamma\in\sub(\phi)$ from the monitor's
state.

\subsubsection{Propagating a Boolean Value.}
\label{subsubsec:prop_bool}

The pseudocode of the procedure \ls{PropagateTruthValue} is given in
Listing~\ref{lst:propagateup}.  The procedure is called whenever a
propositional formula $\textsf{gate}^{\alpha,J}$ simplifies to a
Boolean constant~$b$.  If $|J|=1$ and $\alpha=\phi$, then we output a
verdict.  Otherwise, for $\alpha\not=\phi$, we substitute the
proposition $\alpha^J$ with its Boolean value~$b$ in all the
propositional formulas \ls{gate$^{\gamma,K}$}.  Note that $\gamma$
must be the parent formula of $\alpha$. However, for temporal
formulas, $K$ can be different from $J$. We continue the propagation
whenever the updated \ls{gate$^{\gamma,K}$} propositional formula is
semantically equivalent to a Boolean constant.

\section{Monitoring with Data Values}
\label{sec:data}

In this section, we extend the online algorithm from
Section~\ref{sec:prop} for MTL to \MTLdata.  The handling of the
freeze quantifier is orthogonal to the core ideas already used for
monitoring MTL specifications.  In Section~\ref{subsec:extalgo}, we
present the extension.  In Section~\ref{subsec:datastruct}, we
describe the graph-based data structure for representing and
manipulating the propositional formulas of the monitor's state.
Finally, in Section~\ref{subsec:correct}, we establish the algorithm's
correctness.  Note that the data structure and the correctness proof
also apply to the restricted setting of Section~\ref{sec:prop}, that
is, the online algorithm for MTL.

\subsection{Algorithmic Details}
\label{subsec:extalgo}

Throughout this section, we reuse the conventions that we introduced
in Section~\ref{sec:prop} for MTL.  Analogous to
Section~\ref{sec:prop}, we also assume that $\phi$ is not an atomic
formula and subformulas of $\phi$ are pairwise distinct.  Furthermore,
we require that the monitored formula $\phi$ is closed.  Finally, we
assume that variables are frozen at most once in $\phi$.  This
assumption is also without loss of generality and it allows us to
identify a frozen variable with the respective subformula.

\subsubsection{Reduction to Propositional Logic.}

Similar to MTL, at the core of the monitor for \MTLdata is a mapping
of \MTLdata's three-valued semantics into propositional logic.  To
this end, we first extend the propositional formulas
$\fPsi{w}{\gamma,J}$ from Section~\ref{subsec:state} to
$\fPsi{w}{\gamma,J,\nu}$ to capture \MTLdata's freeze quantifier.  We
remark that the only change in the definition below is that each
proposition for a subformula $\alpha\in\sub(\phi)$ now also carries a
partial valuation~$\nu$ in addition to an interval~$K$.
\begin{equation*}
  \fPsi{w}{\gamma,J,\nu} := 
    \begin{cases}
      \true 
      & \text{if $\gamma=\true$}
      \\
      p(\overline{x})^{J,\nu}
      & \text{if $\gamma=p(\overline{x})$ with $p\in P$}
      \\
      \alpha^{J,\nu[x\mapsto\unknown]}
      & \text{if $\gamma=\freeze{r}{x}\alpha$ and $r\not\in\pdef(\rho_j)$}
      \\
      \alpha^{J,\nu[x\mapsto\rho_j(r)]}
      & \text{if $\gamma=\freeze{r}{x}\alpha$ and $r\in\pdef(\rho_j)$}
      \\
      \neg\alpha^{J,\nu} 
      & \text{if $\gamma=\neg\alpha$}
      \\
      \alpha^{J,\nu}\vee\beta^{J,\nu} 
      & \text{if $\gamma=\alpha\vee\beta$}
      \\
      \bigvee_{K\leq J}\big(\tpp^{K} \wedge \tcp^{J,K}_I \wedge 
      \beta^{K,\nu} \wedge (\overline{\tpp}^K\rightarrow\alpha^{K,\nu}) \wedge
      \bigwedge_{K<H\leq J}(\tpp^{H} \to \alpha^{H,\nu})
      \big)
      & \text{if $\gamma=\alpha\since_I\beta$}
      \\
      \bigvee_{K\geq J}\big(\tpp^{K} \wedge \tcp^{K,J}_I \wedge
      \beta^{K,\nu} \wedge (\overline{\tpp}^K\rightarrow\alpha^{K,\nu}) \wedge
      \bigwedge_{J\leq H<K}(\tpp^{H} \to \alpha^{H,\nu})
      \big)
      & \text{if $\gamma=\alpha\until_I\beta$}
      \\
      C_w^{J,0,\nu}\vee C_w^{J,-1,\nu}\vee C_w^{J,-2,\nu}
      & \text{if $\gamma=\predecessor_I\alpha$}
      \\
      C_w^{J,0,\nu}\vee C_w^{J,+1,\nu}\vee C_w^{J,+2,\nu}
      & \text{if $\gamma=\successor_I\alpha$}
    \end{cases}
\end{equation*}
with
\begin{equation*}
  C_w^{J,\pm\ell,\nu}:=
  \begin{cases} 
    \tcp^{J,J}_I\wedge\alpha^{J,\nu}\wedge\neg\tpp^J
    & \text{if $\ell=0$ and $I\not=\{0\}$}
    \\
    \tcp^{\max\{J,J\pm1\},\min\{J,J\pm1\}}_I\wedge\alpha^{J\pm1,\nu}\wedge\tpp^J\wedge\tpp^{J\pm1}
    & \text{if $\ell=\pm1$ and $j\pm1\in\pos(w)$}
    \\
    \tcp^{\max\{J,J\pm2\},\min\{J,J\pm2\}}_I\wedge\alpha^{J\pm2,\nu}\wedge\neg\tpp^{J\pm1}
    & \text{if $\ell=\pm2$ and $j\pm2\in\pos(w)$}
    \\
    \false 
    & \text{otherwise}
  \end{cases}
\end{equation*} 
In the following, let $\theta_w$ be the substitution that maps a
proposition of the form $\alpha^{K,\nu}$ to $\osem{w,k,\nu}{\alpha}$,
provided that $\osem{w,k,\nu}{\alpha}\in\Two$, and for the other
propositions, $\theta_w$ is as in Section~\ref{sec:prop}, namely,
\begin{gather*}
  \tpp^K \mapsto \true  
  \quad\text{if $|K|=1$,}
  \qquad\quad
  \tcp^{H,K}_I \mapsto \begin{cases}
    \true  &\text{if $H-K\not=\emptyset$ and $H-K\subseteq I$,} \\
    \false &\text{if $(H-K)\cap I=\emptyset$, and} 
  \end{cases}
  \qquad\quad
  \overline{\tpp}^K \mapsto \false  
  \quad\text{if $|K|=1$.}
\end{gather*}

\begin{example}
  \label{ex:freeze}
  We illustrate the definition of the propositional formulas
  $\fPsi{w}{\gamma,J,\nu}$.  Consider the formula
  $\freeze{r}{x}\alpha$ with $\alpha=\eventually_{(0,1]} p(x)$.  
  For readability, we use $\beta$ for the
  subformula $p(x)$.
  Let $w$ be the observation
  $\big([0,\tau),(\emap,\emap)\big)\big(\{\tau\},(\emap,[r\mapsto
  d])\big)\big((\tau,\infty),(\emap,\emap)\big)$.
  We obtain the following propositional formulas, where
  $J_0=[0,\tau)$, $J_1=\{\tau\}$, $J_2=(\tau,\infty)$, and
  $\bar\beta^{J,H,\nu}$ abbreviates the conjunction
  $\tpp^{H} \wedge \tcp^{J,H}_I \wedge \beta^{H,\nu}$.
  \begin{equation*}
    \begin{array}{r@{\ }l@{\qquad}r@{\ }l@{\qquad}r@{\ }l}
      \theta_w(\fPsi{w}{\freeze{r}{x}\alpha,J_0,\emap})   & \equiv \alpha^{J_0,\emap}
      &
      \theta_w(\fPsi{w}{\freeze{r}{x}\alpha,J_1,\emap})   & \equiv \alpha^{J_1,[x\mapsto d]}  
      &
      \theta_w(\fPsi{w}{\freeze{r}{x}\alpha,J_2,\emap})   & \equiv \alpha^{J_2,\emap} 
      \\
      \theta_w(\fPsi{w}{\alpha,J_0,\emap}) & \equiv \bar\beta^{J_0,J_0,\emap} \lor \bar\beta^{J_1,J_0,\emap} \lor \bar\beta^{J_2,J_0,\emap}
      &
      \theta_w(\fPsi{w}{\alpha,J_1,[x\mapsto d]}) & \equiv \bar\beta^{J_2,J_1,[x\mapsto d]}
      &
      \theta_w(\fPsi{w}{\alpha,J_2,\emap}) & \equiv \bar\beta^{J_2,J_2,\emap}  
   \end{array}
  \end{equation*}
  First, note that as $\eventually_{(0,1]}\beta$ is syntactic sugar
  for $\true\until_{(0,1]}\beta$, we can ignore the continuation
  subformulas in the propositional formulas, since they simplify to
  $\true$.  Furthermore, note that
  $\theta_w(\fPsi{w}{\alpha,J_1,[x\mapsto d]})$ has only one anchor
  subformula (different from a propositional constant), since
  $\alpha$'s temporal constraint $(0,1]$ is unsatisfiable for $J_1$,
  that is, $\theta_w(\tcp^{J_1,J_1}_{(0,1]}) = \false$ and hence
  $\theta_w(\bar\beta^{J_1,J_1,[x\mapsto d]})\equiv\false$.
  Finally, note that for $\beta$ and $J_2$, we have the two
  propositions $\beta^{J_2,\emap}$ and $\beta^{J_2,[x\mapsto d]}$,
  where $\beta^{J_2,\emap}$ occurs in both
  $\theta_w(\fPsi{w}{\alpha,J_0,\emap})$ and
  $\theta_w(\fPsi{w}{\alpha,J_2,\emap})$, and
  $\beta^{J_2,[x\mapsto d]}$ occurs once in
  $\theta_w(\fPsi{w}{\alpha,J_1,[x\mapsto d]})$.  \exampleendmark
\end{example}

Lemma~\ref{lem:propformula} for MTL carries over to \MTLdata and its
propositional formulas $\fPsi{w}{\gamma,J,\nu}$.
\begin{lemma}
  \label{lem:propformulaMTLdata}
  The following two statements hold.
  \begin{enumerate}[(1)]
  \item If $\osem{w,j,\nu}{\gamma}\in\Two$, then 
    $\theta_w(\fPsi{w}{\gamma,J,\nu})\equiv \osem{w,j,\nu}{\gamma}$.
  \item If $\osem{w,j,\nu}{\gamma}=\unknown$, then
    $\theta_w(\fPsi{w}{\gamma,J,\nu})\not\equiv\true$ and 
    $\theta_w(\fPsi{w}{\gamma,J,\nu})\not\equiv\false$.
  \end{enumerate}
\end{lemma}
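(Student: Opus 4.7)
The plan is to proceed by structural induction on $\gamma$, reusing the structure of the (omitted) proof of Lemma~\ref{lem:propformula}. The induction is lightweight: since $\theta_w$ already evaluates each direct-subformula proposition $\alpha^{K,\nu'}$ to $\osem{w,k,\nu'}{\alpha}$ whenever this value lies in $\Two$, the argument at $\gamma$ only requires unfolding the one-step definitions of $\fPsi{w}{\gamma,J,\nu}$ and $\osem{w,j,\nu}{\gamma}$ and comparing them after applying $\theta_w$. The substitutions of the auxiliary propositions $\tpp^K$, $\overline{\tpp}^K$, and $\tcp_I^{H,K}$ under $\theta_w$ are designed to mirror the functions $\istp_w$ and $\tc_{w,I}$ exactly, so the Boolean structure of $\fPsi{w}{\gamma,J,\nu}$ is a syntactic transcription of the corresponding semantic clause.

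The only genuinely new case compared to Lemma~\ref{lem:propformula} is the freeze case $\gamma = \freeze{r}{x}\alpha$. Here $\fPsi{w}{\gamma,J,\nu}$ is the single proposition $\alpha^{J,\eta}$, where $\eta := \nu[x \mapsto \rho_j(r)]$ if $r \in \pdef(\rho_j)$ and $\eta := \nu[x \mapsto \unknown]$ otherwise. By the semantic clause, $\osem{w,j,\nu}{\gamma} = \osem{w,j,\eta}{\alpha}$. Applying $\theta_w$ either replaces the proposition by $\osem{w,j,\eta}{\alpha} \in \Two$, yielding part~(1), or leaves the free propositional variable $\alpha^{J,\eta}$, which is equivalent to neither $\true$ nor $\false$, yielding part~(2). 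The remaining cases are direct adaptations: for $\neg\alpha$ and $\alpha\vee\beta$, $\theta_w$ commutes with the connectives and the strong Kleene tables of Table~\ref{tab:connectives} coincide with the Boolean simplification of a propositional formula whose free variables (the propositions with semantic value $\unknown$) may take either Boolean value; for the temporal connectives, the same correspondence applies disjunct by disjunct, since $\theta_w(\tpp^K)$, $\theta_w(\tcp_I^{H,K})$, and $\theta_w(\overline{\tpp}^K)$ agree with $\istp_w(k)$, $\tc_{w,I}(h,k)$, and $\neg\istp_w(k)$ whenever these are in $\Two$ and remain free variables otherwise.

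The main obstacle is part~(2) in the temporal cases. When $\osem{w,j,\nu}{\gamma} = \unknown$ for a temporal $\gamma$, no disjunct is semantically $\true$ and at least one is $\unknown$; to show $\theta_w(\fPsi{w}{\gamma,J,\nu}) \not\equiv \true$ and $\not\equiv \false$ I must exhibit two Boolean assignments to the residual free propositions under which $\theta_w(\fPsi{w}{\gamma,J,\nu})$ takes different values. These free propositions originate from an $\unknown$-valued subformula, from a nonsingleton interval $K$ leaving $\tpp^K$ or $\overline{\tpp}^K$ free, or from an ambiguous $\tcp_I^{H,K}$. The argument is a direct case split on which factor in the $\unknown$-producing disjunct is the source of uncertainty, essentially identical to the propositional version, with the valuation components on propositions acting as inert labels that do not affect the Boolean manipulation.
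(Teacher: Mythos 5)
Your proposal is correct and follows essentially the same route as the paper's proof: a one-step case analysis on $\gamma$'s top connective (the ``induction'' degenerates exactly as you note, since $\theta_w$ already resolves the direct-subformula propositions), with the freeze case reduced to the atomic case and the temporal cases handled by observing that $\theta_w$ on $\tpp^K$, $\overline{\tpp}^K$, and $\tcp_I^{H,K}$ mirrors $\istp_w$ and $\tc_{w,I}$ and leaves exactly the $\unknown$-sources as free propositions. The only place you are slightly terser than you should be is part~(2) for $\since_I$/$\until_I$, where the falsifying assignment must kill \emph{all} disjuncts simultaneously (not just the $\unknown$-producing one), but this works out because the required settings (e.g., $\tpp^K\mapsto\false$ for nonsingleton $K$, and $\false$ for the free anchor/constraint propositions) never conflict across disjuncts --- the same level of detail the paper itself elides.
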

\begin{proof}
  We prove the lemma by a case split on $\gamma$.  The case
  $\gamma=\true$ is obvious and omitted.  For the case
  $\gamma=p(\overline{x})$, we have by definition that
  $\fPsi{w}{p(\overline{x}),J,\nu}=p(\overline{x})^{J,\nu}$.
  Furthermore, when $\osem{w,j,\nu}{p(\overline{x})}\in\Two$, we have
  that
  $\theta_w(p(\overline{x})^{J,\nu})=\osem{w,j,\nu}{p(\overline{x})}$;
  otherwise, $\theta_w$ is not defined for $p(\overline{x})^{J,\nu}$.
  We conclude that both implications~(1) and~(2) hold.
  The cases for $\neg\alpha$ and $\alpha\vee\beta$ are similar and
  omitted.  We also omit the details of the case for
  $\gamma=\freeze{r}{x}\alpha$ as it is also similar to the case
  $p(\overline{x})$.  Note that for $r\not\in\pdef(\rho_j)$, we have
  that
  $\fPsi{w}{\freeze{r}{x}\alpha,J,\nu}=\alpha^{J,\nu[x\mapsto\unknown]}$
  and
  $\fPsi{w}{\freeze{r}{x}\alpha,J,\nu}=\alpha^{J,\nu[x\mapsto\rho_j(r)]}$,
  for $r\in\pdef(\rho_j)$.

  Finally, we provide proof details for the case
  $\gamma=\alpha\since_I\beta$. The cases for the other three temporal
  connectives are similar and omitted.
  For $K$ with $K\leq J$, the disjunction of the propositional
  formulas
  $\Psi:=\tpp^{K} \wedge \tcp^{J,K}_I \wedge \beta^{K,\nu} \wedge
  (\overline{\tpp}^K\rightarrow\alpha^{K,\nu}) \wedge
  \bigwedge_{K<H\leq J}(\tpp^{H} \to \alpha^{H,\nu})$ of
  $\fPsi{w}{\alpha\since_I\beta,J,\nu}$ follows closely the semantic
  definition of $\osem{w,j,\nu}{\alpha\since_I\beta}$ in
  Section~\ref{subsec:semantics}, except that each $\Psi$ contains the
  additional propositional
  subformula~$\overline{\tpp}^K\rightarrow\alpha^{K,\nu}$.
  First, observe that the propositions~$\tpp^K$ and~$\tcp^{J,K}_I$
  together with the substitution $\theta_w$ take the role of
  $\istp_w(k)$ and $\tc_{w,I}(j,k)$.  Furthermore, $\theta_w$ replaces
  the propositions $\beta^{K,\nu}$ and $\alpha^{H,\nu}$ in $\Psi$ with
  the corresponding Boolean constants whenever the respective formulas
  evaluate to Boolean truth values under the three-valued semantics
  $\llbracket\cdot\rrbracket$.
  Finally, we observe that the additional propositional subformula
  $\overline{\tpp}^K\rightarrow\alpha^{K,\nu}$ in $\Psi$ simplifies to
  $\true$ when $|K|=1$, since $\theta_w(\overline{\tpp}^K)=\false$.
  For $|K|>1$, $\overline{\tpp}^K\rightarrow\alpha^{K,\nu}$ simplifies
  either to $\overline{\tpp}^K\rightarrow\alpha^{K,\nu}$,
  $\overline{\tpp}^K$, or $\true$.  Since $\theta_w$ is also not
  defined for $\tpp^K$, when $|K|>1$, $\theta_w(\Psi)\not\equiv\true$.
  Furthermore, $\theta_w(\Psi)\not\equiv\false$ if
  $\theta_w(\beta^{K,\nu})\not\equiv\false$ and
  $\theta_w(\alpha^{H,\nu})\not\equiv\false$, for all $H$ with $|H|=1$
  and $K<H\leq J$.
  With these observations, it is easy to see that the implications~(1)
  and~(2) hold.
\end{proof}

\subsubsection{Main Procedure.}

\begin{listing}
  \begin{minipage}[t]{0.48\linewidth}
\begin{lstlisting}[caption={The monitor's main loop for \MTLdata.},label={lst:monitor_data},captionpos=b]
procedure MonitorMTL$^{\downarrow}$($\phi$)
  Init($\phi$)
  loop
    $m$ := ReceiveMessage()
    $\mathit{ts}$ := UpdateKnowledge($m$)
    foreach $t$ in $\mathit{ts}$ do
      case (T1): $\ \ \tau$, $J$ := DeltaT1($t$)
                 AddTimePoint($\phi$, $J$, $\tau$)
      case (T2): $\ \ K$ := DeltaT2($t$)
                 RemoveInterval($K$)  
      case (T3.1): $\tau$, $\sigma$ := DeltaT31($t$)
                 foreach $p(\overline{x})^{\{\tau\},\nu}$ with $p\in\pdef(\sigma)$ and $\overline{x}\in\pdef(\nu)$ do
                   PropagateTruthValue($p(\overline{x})$, $\{\tau\}$, $\nu$, $\nu(\overline{x})\in\sigma(p)$)
      case (T3.2): $\tau$, $\rho$ := DeltaT32($t$)
                 foreach $\alpha^{\{\tau\},\nu}$ with $\freeze{r}{x}\alpha\in\sub(\phi)$ and $r\in\pdef(\rho)$ do
                   Rename(gate$^{\freeze{r}{x}\alpha,\{\tau\},\nu}$, $[\alpha^{\{\tau\},\nu}\mapsto\alpha^{\{\tau\},\nu[x\mapsto\rho(r)]}]$)
                   PropagateDataValue($\alpha$, $\{\tau\}$, $\nu$, $[x\mapsto\rho(r)]$)
\end{lstlisting}
  \end{minipage}
  \hfill
  \begin{minipage}[t]{0.48\linewidth}
\begin{lstlisting}[caption={Initialization procedure.},label={lst:init_data},captionpos=b]
procedure Init($\phi$)
  observation := $w_0$
  foreach $\gamma\in\sub(\phi)$ with not IsAtom($\gamma$) do 
    # Iterate top down with respect to $\color{blue}\phi$'s formula structure.
    gate$^{\gamma,[0,\infty),\emap}$ := $\fPsi{w_0}{\gamma,[0,\infty),\emap}$
    Instantiate(gate$^{\gamma,[0,\infty),\emap}$)
    if IsBool(gate$^{\gamma,[0,\infty),\emap}$) then
      PropagateTruthValue($\gamma$, $[0,\infty)$, $\emap$, ToBool(gate$^{\gamma,[0,\infty),\emap}$))
\end{lstlisting}
  \end{minipage}
\end{listing}
The monitor's main procedure for \MTLdata is shown in
Listing~\ref{lst:monitor_data} and the initialization procedure in
Listing~\ref{lst:init_data}.  Both procedures are similar to their
counterparts for MTL (see the Listings~\ref{lst:monitor}
and~\ref{lst:init}).
The main difference is that the case (T\ref{enum:observation_data})
now comprises two subcases.  The first
subcase~(T\ref{enum:observation_data}.1) handles new interpretations
for predicate symbols at a time point and is similar to the
(T\ref{enum:observation_data}) case for MTL in
Listing~\ref{lst:monitor}.  The second
subcase~(T\ref{enum:observation_data}.2) handles the freezing of
variables at a time point to data values.
Note that in the \ls{foreach} loops in both subcases, the propositions
$p(\overline{x})^{\{\tau\},\nu}$ and $\alpha^{\{\tau\},\nu}$ range
over propositions that occur in some propositional formula of the
monitor's state.  
In the following, we use (T\ref{enum:observation_data}.1)
and~(T\ref{enum:observation_data}.2) to refer to the transformation of the
corresponding subcase, respectively.

\subsubsection{State Updates.}
\label{subsubsec:stateupdates}

The central procedures for updating the monitor's state are the
procedures \ls{AddTimePoint}, \ls{RemoveInterval},
\ls{PropagateTruthValue}, and \ls{PropagateDataValue}.  Their
pseudocode is given in the
Listings~\ref{lst:tp_data}--\ref{lst:propagatedown_data}. The first
three procedures extend
their counterparts for MTL from Section~\ref{sec:prop}.  The last one
is new and propagates data values down the formula structure.
As in Section~\ref{sec:prop}, we do not fix the representation of the
propositional formulas of the monitor's state. Instead, we use the
same abstract interface for accessing and updating the state variables
\ls{gate$^{\gamma,J,\nu}$} as described in
Section~\ref{subsubsec:state}.
\begin{listing}
  \begin{minipage}[t]{0.48\linewidth}
\begin{lstlisting}[caption={Procedure for transformation~(T\ref{enum:observation_split}).},label={lst:tp_data},captionpos=b]
procedure AddTimePoint($\phi$, $J$, $\tau$)
  foreach $\textsf{gate}^{\gamma,J,\nu}$ and $K\in\{J\cap[0,\tau),\{\tau\},J\cap(\tau,\infty)\}$ do
    gate$^{\gamma,K,\nu}$ := Clone(gate$^{\gamma,J,\nu}$)
  Delete($J$) 
  if IsBool(gate$^{\phi,\{\tau\},\emap}$) then
    OutputVerdict($\{\tau\}$, ToBool(gate$^{\phi,\{\tau\},\emap}$))
  foreach gate$^{\gamma,H,\nu}$ with Contains(gate$^{\gamma,H,\nu}$, $p$), 
                            $\textsf{for}$ some proposition $p$ $\textsf{with}$ the interval $J$ do
    # Iterate top down with respect to $\color{blue}\phi$'s formula structure.
    case $\gamma = \neg\alpha$: $\hspace{.395cm}$Rename(gate$^{\gamma,H,\nu}$, $[\alpha^{J,\nu}\mapsto \alpha^{H,\nu}]$)
    case $\gamma = \alpha\lor\beta$: $\hspace{.13cm}$Rename(gate$^{\gamma,H,\nu}$, $[\alpha^{J,\nu}\mapsto \alpha^{H,\nu}, \beta^{J,\nu}\mapsto \beta^{H,\nu}]$)
    case $\gamma = \freeze{s}{y}\alpha$: $\hspace{.05cm}$Rename(gate$^{\gamma,H,\nu}$, 
                                   $\![\alpha^{J,\mu}\mapsto \alpha^{H,\mu}\mathbin{|}\mu\text{ some partial valuation}]$)
    case $\gamma = \predecessor_I\alpha$: $\hspace{.21cm}$ $\dots$ # Omitted; analogous to the next case.
    case $\gamma = \successor_I\alpha$: $\hspace{.21cm}$RefineNext($\gamma$, $H$, $\nu$, $J$, $\tau$)
                      Instantiate(gate$^{\gamma,H,\nu}$) 
    case $\gamma = \alpha\since_I\beta$: $\hspace{.05cm}$ $\dots$ # Omitted; analogous to the next case.
    case $\gamma = \alpha\until_I\beta$: $\hspace{.05cm}$RefineUntil($\gamma$, $H$, $\nu$, $J$, $\tau$)
                      Instantiate(gate$^{\gamma,H,\nu}$)
    if IsBool(gate$^{\gamma,H,\nu}$) then 
      PropagateTruthValue($\gamma$, $H$, $\nu$, ToBool(gate$^{\gamma,H,\nu}$))
\end{lstlisting}
\begin{lstlisting}[caption={Procedure for transformation~(T\ref{enum:observation_removal}).},label={lst:remove_data},captionpos=b]
procedure RemoveInterval($K$)
  foreach gate$^{\gamma,J,\nu}$ with $J\not=K$ and Contains(gate$^{\gamma,J,\nu}$, $\tpp^K$) do
    Eval(gate$^{\gamma,J,\nu}$, $[\tpp^K\mapsto\false]$)
    if IsBool(gate$^{\gamma,J,\nu}$) then 
      PropagateTruthValue($\gamma$, $J$, $\nu$, ToBool(gate$^{\gamma,J,\nu}$))
  Delete($K$) 
\end{lstlisting}
  \end{minipage}
  \hfill
  \begin{minipage}[t]{0.48\linewidth}
\begin{lstlisting}[caption={Procedure for transformation~(T\ref{enum:observation_data}.1).},label={lst:propagateup_data},captionpos=b]
procedure PropagateTruthValue($\alpha$, $J$, $\nu$, $b$)
  if IsRoot($\alpha$) and $|J|=1$ then
    OutputVerdict($J$, $b$)
  else if not IsRoot($\alpha$) then
    foreach gate$^{\gamma,K,\mu}$ with Contains(gate$^{\gamma,K,\mu}$, $\alpha^{J,\nu}$) do
      Eval(gate$^{\gamma,K,\mu}$, $[\alpha^{J,\nu}\mapsto b]$)
      if IsBool(gate$^{\gamma,K,\mu}$) then 
        PropagateTruthValue($\gamma$, $K$, $\mu$, ToBool(gate$^{\gamma,K,\mu}$))
\end{lstlisting}
\begin{lstlisting}[caption={Procedure for the transformation (T\ref{enum:observation_data}.2).},label={lst:propagatedown_data},captionpos=b]
procedure PropagateDataValue($\gamma$, $K$, $\nu$, $[x\mapsto d]$)
  if $\gamma=p(\overline{x})$ then
    $\sigma$ := PredicateInterpretations(observation, $K$)
    if $p\in\pdef(\sigma)$ and  $\overline{x}\in\pdef(\nu[x\mapsto d])$ then
      PropagateTruthValue($\gamma$, $K$, $\nu[x\mapsto d]$, $\nu[x\mapsto d](\overline{x})\in\sigma(p)$)
  else if gate$^{\gamma,K,\nu[x\mapsto d]}$ does $\textsf{not}$ exist then
    gate$^{\gamma,K,\nu[x\mapsto d]}$ := Clone(gate$^{\gamma,K,\nu}$)
    if IsBool(gate$^{\gamma,K,\nu[x\mapsto d]}$) then 
      PropagateTruthValue($\gamma$, $K$, $\nu[x\mapsto d]$, ToBool(gate$^{\gamma,K,\nu[x\mapsto d]}$))
    Rename(gate$^{\gamma,K,\nu[x\mapsto d]}$, $[\alpha^{L,\mu}\mapsto\alpha^{L,\mu[x\mapsto d]}\mathbin{|}\alpha\in\sub(\gamma),$
                               $L\text{ some interval, and }\mu\text{ some partial valuation}]$)
    foreach $\alpha^{L,\mu[x\mapsto d]}$ with Contains(gate$^{\gamma,K,\nu[x\mapsto d]}$, $\alpha^{L,\mu[x\mapsto d]}$) do
      PropagateDataValue($\alpha$, $L$, $\mu$, $[x\mapsto d]$)
  else if IsBool(gate$^{\gamma,K,\nu[x\mapsto d]}$) then 
    PropagateTruthValue($\gamma$, $K$, $\nu[x\mapsto d]$, ToBool(gate$^{\gamma,K,\nu[x\mapsto d]}$))
\end{lstlisting}
  \end{minipage}
\end{listing}

If $\gamma$ is an atomic formula of the form $p(\overline{x})$, then
\ls{PropagateDataValue} first obtains the interpretation of the
predicate symbols at the position $k$ of \ls{observation}.  It starts
the propagation of the truth value, if $p(\overline{x})$ can be
evaluated for the extended partial valuation $\nu[x\mapsto d]$.
If $\gamma$ is not an atomic formula and the propositional formula
\ls{gate}$^{\gamma,K,\nu[x\mapsto d]}$ for the extended partial
valuation $\nu[x\mapsto d]$ does not exist yet,
\ls{PropagateDataValue} creates it from \ls{gate}$^{\gamma,K,\nu}$.
When \ls{gate}$^{\gamma,K,\nu[x\mapsto d]}$ is semantically equivalent
to a Boolean constant, \ls{PropagateDataValue} starts the propagation
of the truth value. Otherwise, \ls{PropagateDataValue} continues the
propagation of the new data value down the formula structure.
Finally, if \ls{gate}$^{\gamma,K,\nu[x\mapsto d]}$ already exists and
is semantically equivalent to a Boolean constant, then---as in the
case where \ls{gate}$^{\gamma,K,\nu[x\mapsto d]}$ is newly
created---\ls{PropagateDataValue} starts the propagation of the truth
value.

\subsection{Data Structure}
\label{subsec:datastruct}

We briefly describe a graph-based data structure for representing and
updating the monitor's state variables \ls{gate$^{\gamma,J,\nu}$}.
The \emph{nodes} of the data structure are tuples of the
form~$(\gamma,J,\nu)$, with $\gamma$ a subformula of the monitored
formula $\phi$, $J$ an interval, and $\nu$ a partial valuation.  When
$\gamma$ is not atomic, the node corresponds to the state variable
\ls{gate$^{\gamma,J,\nu}$}. 
A node~$(\gamma,J,\nu)$ stores a truth value~$b\in\Three$, where the
monitor maintains the invariant
$b=\osem{\mathsf{observation},j,\nu}{\gamma}$.  If $\gamma$ is of the
form $\alpha\until_I\beta$ or $\alpha\since_I\beta$, then the node
also stores the interval~$K$ of the closest valid anchor (i.e., for
$\alpha\until_I\beta$, $k\geq j$ with
$\istp_{\mathsf{observation}}(k)= \tc_{\mathsf{observation},I}(k,j)=
\osem{\mathsf{observation},k,\nu}{\beta}=\true$ and
$\osem{\mathsf{observation},h,\nu}{\alpha}\not=\false$, for all $h$
with $j\leq h<k$ and $\istp_{\mathsf{observation}}(h)=\true$), if it
exists.
Furthermore, nodes with the same formula~$\gamma$ and partial
valuation~$\nu$ are stored in a doubly linked list, ordered by their
intervals.
The \emph{edges} of the data structure are as follows.  There is an
edge from the node~$(\alpha,K,\mu)$ to the node~$(\gamma,J,\nu)$ if a
proposition of the $(\alpha,K)$-relevant part of
$\mathsf{gate}^{\gamma,J,\nu}$ occurs in the propositional
formula~$\mathsf{gate}^{\gamma,J,\nu}$.  The edges are
bidirectional. To simplify the exposition, we use an upward directed
reading, namely, from nodes with the formula~$\alpha$ to nodes with
$\alpha$'s parent formula~$\gamma$.
For instance, both nodes~$(\alpha,J,\nu)$ and~$(\beta,J,\nu)$ have an
outgoing edge to the node~$(\alpha\lor\beta,J,\nu)$, provided that the
truth value of both nodes~$(\alpha,J,\nu)$ and~$(\beta,J,\nu)$ is
$\unknown$.

We sketch how this data structure realizes the interface specified in
Section~\ref{subsubsec:state}.
We first note that the graph-based data structure does not represent
the propositional formulas \ls{gate$^{\gamma,J,\nu}$}
explicitly. However, an explicit representation of them can be
obtained from its nodes and edges. From the incoming edges of a node
$(\gamma,J,\nu)$, we can obtain the relevant parts of
\ls{gate$^{\gamma,J,\nu}$}, in particular, the propositions occurring
in them.  Their arrangement, including the Boolean connectives between
the propositions and the relevant parts, is given through $\gamma$'s
main connective and its direct subformulas.
For example, for $\gamma=\alpha\until_I\beta$, whether the proposition
$\beta^{K,\nu}$ occurs in the $(\beta,K)$-relevant part of
\ls{gate$^{\gamma,J,\nu}$} can be determined from the node's
$(\beta,K,\nu)$ truth value and the interval of the valid anchor in
the node~$(\gamma,J,\nu)$. Note that \ls{gate$^{\gamma,J,\nu}$} does
not depend on $\beta^{K,\nu}$ when the node~$(\gamma,J,\nu)$ has a
closest valid anchor with the interval~$K'$ and~$K'<K$.
Furthermore, whether the propositions $\tpp^K$ and $\overline{\tpp}^K$
occur in the $(\beta,K)$-relevant part of \ls{gate$^{\gamma,J,\nu}$}
can be determined from the interval of the node~$(\beta,K,\nu)$.
Similarly, whether the proposition $\tcp^{K,J}_I$ occurs in the
$(\beta,K)$-relevant part of \ls{gate$^{\gamma,J,\nu}$} can be
determined from the intervals of the nodes $(\beta,K,\nu)$ and
$(\gamma,J,\nu)$.

The realization of the interface procedures is not difficult. For
instance, the procedures \ls{Add} and \ls{Remove} simply add and
remove edges.  However, some care must be taken for the procedure
\ls{Eval}.  Assume that the arguments of \ls{Eval} are
\ls{gate$^{\gamma,J,\nu}$} and the substitution
$[\alpha^{H,\nu}\mapsto\false]$, where $\gamma=\alpha\until_I\beta$
and $H>J$ with $|H|=1$.  Obviously, \ls{Eval} deletes the edge from
the node~$(\alpha,H,\nu)$ to the node~$(\gamma,J,\nu)$. This deletion
may trigger the deletion of other incoming edges to the
node~$(\gamma,J,\nu)$.  First, \ls{Eval} deletes the incoming edges
from the ``anchor'' nodes $(\beta,K,\nu)$, with $K>H$. Additionally,
\ls{Eval} deletes the interval $L$ of the node's $(\gamma,J,\nu)$
valid anchor, provided it exists and $L>H$.  Furthermore, \ls{Eval}
deletes the incoming edges from the ``continuation'' nodes
$(\alpha,K,\nu)$ that have no anchor anymore.  These ``continuation''
nodes may arise when deleting the node's valid anchor or an incoming
edge from an anchor node.  Finally, \ls{Eval} sets the
node's~$(\gamma,J,\nu)$ truth value to~$\false$, if there are no
remaining incoming edges.

\begin{example}
  \label{ex:datastruct}
  \begin{figure}[t]
    \centering
    \scalebox{.9}{
      \begin{tikzpicture}[thick, x=1pt, y=1pt]
        \tikzstyle{node} = [draw, minimum height = 14pt, minimum width = 14pt]
        \tikzstyle{trigger} = [->]
        \tikzstyle{guard} = [draw, circle, fill, minimum size = 0pt, inner sep = 0pt]

        \newcommand{\x}{20}
        \node at (\x,80) {$\freezeshort{x}\eventually_{(0,1]}p(x)$};
        \node at (\x,40) {$\phantom{\freezeshort{x}}\eventually_{(0,1]}p(x)$};
        \node at (\x,0)  {$\phantom{\freezeshort{x}\eventually_{(0,1]}}p(x)$};
        
        \renewcommand{\x}{100}
        \node at (\x,-25) {(a) initial observation $w_0$};
        
        \draw (\x-20,100) -- node[above] {$[0,\infty)$} ++ (40,0);
        
        \node[node] (Nfreeze) at (\x,80) {\scalebox{.8}{$\emap$}};
        \node[node] (Neventually) at (\x,40) {\scalebox{.8}{$\emap$}};
        \node[node] (Natomic) at (\x,0) {\scalebox{.8}{$\emap$}};
        
        \node[guard] (Geventually) at (\x,33) {};
        \node[guard] (Gfreeze) at (\x,73) {};
        
        \draw[trigger] (Natomic) to [out=90, in=-90] (Geventually);        
        \draw[trigger] (Neventually) to [out=90, in=-90] (Gfreeze);
        
        \renewcommand{\x}{180}
        \node at (\x+50,-25) {(b) observation $w_1$};
        
        \draw (\x-10,100) -- node[above] {$[0,\tau)$} ++ (40,0);
        \draw[|-|] (\x+50,100) -- node[above] {$\set{\tau}$} ++ (0.1,0);
        \draw (\x+70,100) -- node[above] {$(\tau,\infty)$} ++ (40,0);
        
        \node[node] (NfreezeL) at (\x+10, 80) {\scalebox{.8}{$\emap$}};
        \node[node] (NeventuallyL) at (\x+10, 40) {\scalebox{.8}{$\emap$}};
        \node[node] (NatomicL) at (\x+10, 0) {\scalebox{.8}{$\emap$}}; 
        
        \node[node] (NfreezeM) at (\x+50, 80) {\scalebox{.8}{$\emap$}};
        \node[node] (NeventuallyM) at (\x+50, 40) {\scalebox{.8}{$\emap$}};
        \node[node] (NatomicM) at (\x+50, 0) {\scalebox{.8}{$\emap$}};
        
        \node[node] (NfreezeR) at (\x+90, 80) {\scalebox{.8}{$\emap$}};
        \node[node] (NeventuallyR) at (\x+90, 40) {\scalebox{.8}{$\emap$}};
        \node[node, fill=white] (NatomicR) at (\x+90, 0) {\scalebox{.8}{$\emap$}};
        \node[guard] (GatomicR) at (\x+96, 7) {};
        
        \node[guard] (GeventuallyL1) at (\x+4, 33) {};
        \node[guard] (GeventuallyL2) at (\x+10, 33) {};
        \node[guard] (GeventuallyL3) at (\x+16, 33) {};
        \node[guard] (GfreezeL) at (\x+10, 73) {};
        
        \node[guard] (GeventuallyM) at (\x+61, 42) {};
        \node[guard] (GfreezeM) at (\x+50, 73) {};
        
        \node[guard] (GeventuallyR) at (\x+90, 33) {};
        \node[guard] (GfreezeR) at (\x+90, 73) {};
        
        \draw[trigger] (NatomicL) to [out=90, in=-90] (GeventuallyL1); 
        \draw[trigger] (NatomicM) .. controls (\x+50,20)  and (\x+10,0) .. (GeventuallyL2);
        \draw[trigger] (GatomicR) to [out=90, in=-90] (NeventuallyR.south);
        \draw[trigger] (NatomicR) to [out=90, in=-90] (NeventuallyM.south);
        \draw[trigger] (NatomicR) .. controls (\x+70,30) and (\x+16,5) .. (GeventuallyL3);
        
        \draw[trigger] (NeventuallyL) to [out=90, in=-90] (GfreezeL);        
        \draw[trigger] (NeventuallyM) to [out=90, in=-90] (GfreezeM);        
        \draw[trigger] (NeventuallyR) to [out=90, in=-90] (GfreezeR);        

        \draw[dashed] (NfreezeL) to [out=0, in=180] (NfreezeM);
        \draw[dashed] (NfreezeM) to [out=0, in=180] (NfreezeR);

        \draw[dashed] (NeventuallyL) to [out=0, in=180] (NeventuallyM);
        \draw[dashed] (NeventuallyM) to [out=0, in=180] (NeventuallyR);

        \draw[dashed] (NatomicL) to [out=0, in=180] (NatomicM);
        \draw[dashed] (NatomicM) to [out=0, in=180] (NatomicR);

        \renewcommand{\x}{340}
        \node at (\x+50,-25) {(c) observation $w$};
        
        \draw (\x-10,100) -- node[above] {$[0,\tau)$} ++ (40,0);
        \draw[|-|] (\x+50,100) -- node[above] {$\set{\tau}$} ++ (0.1,0);
        \draw (\x+70,100) -- node[above] {$(\tau,\infty)$} ++ (40,0);
        
        \node[node] (NfreezeL) at (\x+10, 80) {\scalebox{.8}{$\emap$}};
        \node[node] (NeventuallyL) at (\x+10, 40) {\scalebox{.8}{$\emap$}};
        \node[node] (NatomicL) at (\x+10, 0) {\scalebox{.8}{$\emap$}}; 
        
        \node[node] (NfreezeM) at (\x+50, 80) {\scalebox{.8}{$\emap$}};
        \node[node] (NeventuallyM) at (\x+60, 49) {\scalebox{.8}{$[x\!\mapsto\!d]$}}; 
        \node[node] (NatomicM) at (\x+50, 0) {\scalebox{.8}{$\emap$}};
        
        \node[node] (NfreezeR) at (\x+90, 80) {\scalebox{.8}{$\emap$}};
        \node[node] (NeventuallyR) at (\x+90, 40) {\scalebox{.8}{$\emap$}};
        \node[node] (NatomicR2) at (\x+100, 8) {\scalebox{.8}{$[x\!\mapsto\!d]$}}; 
        \node[node, fill=white] (NatomicR) at (\x+90, 0) {\scalebox{.8}{$\emap$}};

        \node[guard] (GeventuallyL1) at (\x+4, 33) {};
        \node[guard] (GeventuallyL2) at (\x+10, 33) {};
        \node[guard] (GeventuallyL3) at (\x+16, 33) {};
        \node[guard] (GfreezeL) at (\x+10, 73) {};
        
        \node[guard] (GeventuallyM) at (\x+61, 42) {};
        \node[guard] (GfreezeM) at (\x+50, 73) {};
        
        \node[guard] (GeventuallyR) at (\x+90, 33) {};
        \node[guard] (GfreezeR) at (\x+90, 73) {};
        
        \draw[trigger] (NatomicL) to [out=90, in=-90] (GeventuallyL1); 
        \draw[trigger] (NatomicM) .. controls (\x+50,20)  and (\x+10,0) .. (GeventuallyL2);
        \draw[trigger] (NatomicR) to [out=90, in=-90] (GeventuallyR);        
        \draw[trigger] (NatomicR2) to [out=90, in=-90] (GeventuallyM);
        \draw[trigger] (NatomicR) .. controls (\x+70,30) and (\x+16,5) .. (GeventuallyL3);
        
        \draw[trigger] (NeventuallyL) to [out=90, in=-90] (GfreezeL);        
        \draw[trigger] (NeventuallyM) to [out=90, in=-90] (GfreezeM);        
        \draw[trigger] (NeventuallyR) to [out=90, in=-90] (GfreezeR);        

        \draw[dashed] (NfreezeL) to [out=0, in=180] (NfreezeM);
        \draw[dashed] (NfreezeM) to [out=0, in=180] (NfreezeR);

        \draw[dashed] (NeventuallyL) to [out=0, in=180] (NeventuallyR);

        \draw[dashed] (NatomicL) to [out=0, in=180] (NatomicM);
        \draw[dashed] (NatomicM) to [out=0, in=180] (NatomicR);
      \end{tikzpicture}}
    \caption{Graph-based data structure (Example~\ref{ex:datastruct}).}
    \label{fig:datastruct}
  \end{figure}
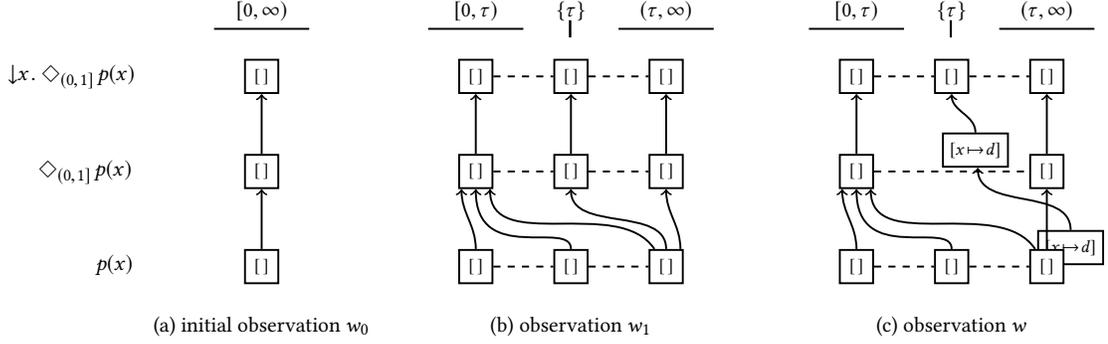
  We illustrate the data structure and its updates.
  Figure~\ref{fig:datastruct} shows the data structures associated with the
  formula~$\freeze{r}{x}\eventually_{(0,1]}p(x)$ and the observations
  (a)~$w_0\!=\!\big([0,\infty),(\emap,\emap)\big)$,
  (b)~$w_1=\big([0,\tau),(\emap,\emap)\big)\big(\{\tau\},(\emap,\emap)\big)
  \big((\tau,\infty),(\emap,\emap)\big)$,
  and
  (c)~$w=\big([0,\tau),(\emap,\emap)\big)\big(\{\tau\},(\emap,[r\mapsto
  d])\big) \big((\tau,\infty),(\emap,\emap)\big)$.
  A box in Figure~\ref{fig:datastruct} corresponds to a node of the
  graph-based data structure, where the node's formula is given by the
  row of the box, the interval by the column of the box, and the
  partial valuation is given inside the box.
  The edges are depicted as solid lines between boxes.  The dashed
  lines are the links of the ordered doubly linked lists. Note that
  the three boxes in Figure~\ref{fig:datastruct}(a) and the two boxes
  in Figure~\ref{fig:datastruct}(c) with the partial valuation~$[x\mapsto d]$
  are all stored in singleton lists.
  
  Note that $w_1$ is obtained from $w_0$ by a (T1)~transformation that
  splits the interval $[0,\infty)$ at $\tau$, and $w$ is obtained from
  $w_1$ by a (T3.2)~transformation that freezes the data value $d$ to
  the variable $x$ at $\tau$.  Observe that
  Figure~\ref{fig:datastruct}(c) does not contain the node
  $(\eventually_{(0,1]}p(x), \{\tau\}, \emap)$. This node is
  irrelevant, since it has no outgoing edges. Irrelevant nodes are
  removed from the data structure.  Furthermore, note that the data
  structure shown in Figure~\ref{fig:datastruct}(c) represents the
  propositional formulas $\theta_w(\fPsi{w}{\gamma,J,\nu})$ from
  Example~\ref{ex:freeze}.  The nonexistence of the node
  $(\eventually_{(0,1]}p(x), \{\tau\}, \emap)$ corresponds to the fact
  that the proposition $\eventually_{(0,1]}p(x)^{\{\tau\},\emap}$ does
  not occur in any of the propositional formulas.
  \exampleendmark
\end{example}

We remark that the data structure allows us to easily determine
the propositional formulas~$\mathsf{gate}^{\gamma,J,\nu}$ in which a
given proposition $\alpha^{K,\mu}$ occurs. We just need to follow the
node's $(\alpha,K,\mu)$ outgoing edges, provided that the node's truth
value is $\unknown$. Analogously, by following a node's
$(\gamma,J,\nu)$ incoming edges we can determine the propositions that
occur in~$\mathsf{gate}^{\gamma,J,\nu}$.  Hence, the \ls{foreach}
loops in the procedures \ls{RemoveInterval} and
\ls{PropagateTruthValue}, and the second one in \ls{AddTimePoint} can
be implemented efficiently.
The data structure can also be further optimized. For example, to
reduce the number of edges, a node only stores at most one outgoing
edge.  The other outgoing edges are implicit and computed on demand by
following the links of the doubly linked lists to the neighboring
nodes.  In particular, the procedure \ls{AddTimePoint} needs to update
significantly fewer outgoing edges when splitting an interval. We omit
such implementation details.

\subsection{Correctness}
\label{subsec:correct}

This section is dedicated to the monitor's correctness and we
prove the following theorem.
\begin{theorem}\label{thm:correct}
  \ls{$\mathsf{MonitorMTL}^{\downarrow}$} is observationally complete
  and sound.
\end{theorem}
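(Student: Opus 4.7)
The plan is to prove Theorem~\ref{thm:correct} by establishing a single structural invariant on the monitor's state and then reading both soundness and completeness off of it, together with Lemma~\ref{lem:propformulaMTLdata}. Concretely, I would maintain throughout the execution, in iteration $i$ with $\mathsf{observation}=w_i$, the following invariant for every state variable that currently exists:
\begin{equation*}
  \mathsf{gate}^{\gamma,J,\nu} \;\equiv\; \theta_{w_i}(\fPsi{w_i}{\gamma,J,\nu})\,,
\end{equation*}
together with the bookkeeping invariant that a gate variable for $(\gamma,J,\nu)$ exists whenever the proposition $\gamma^{J,\nu}$ can appear in some ancestor propositional formula (so no needed gate is missing), and that a verdict $(\tau,b)$ has been output for every time point $\{\tau\}$ of $w_i$ for which $\theta_{w_i}(\fPsi{w_i}{\phi,\{\tau\},\emap})\equiv b \in \Two$.

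First I would verify the invariant for \ls{Init}: here $w_0=\bigl([0,\infty),(\emap,\emap)\bigr)$ and the gates are literally set to $\fPsi{w_0}{\gamma,[0,\infty),\emap}$ and then \ls{Instantiate}d; the claim is then immediate from the definition of $\theta_{w_0}$. Next I would do induction on iterations, one case per transformation. For (T\ref{enum:observation_data}.1), the \ls{PropagateTruthValue} procedure just substitutes the new Boolean value of an atomic proposition under $\theta$ and simplifies, which preserves semantic equivalence; termination and exhaustiveness of the recursion follow because each call strictly decreases the set of propositions not replaced by constants in some ancestor gate. For (T\ref{enum:observation_removal}), \ls{RemoveInterval} substitutes $\tpp^K \mapsto \false$, which matches the fact that no refinement of $w_{i+1}$ can place a time point in $K$, so all disjuncts anchored at~$K$ vanish in $\theta_{w_{i+1}}(\fPsi{w_{i+1}}{\gamma,J,\nu})$. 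For (T\ref{enum:observation_data}.2), \ls{PropagateDataValue} creates a fresh copy of the relevant gate with the extended valuation and walks down the formula structure, which precisely mirrors the recursive way $\fPsi{w}{\freeze{r}{x}\alpha,J,\nu}$ unfolds into $\fPsi{w}{\alpha,J,\nu[x\mapsto d]}$ when $r\in\pdef(\rho_j)$.

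Case (T\ref{enum:observation_split}) is the main obstacle and where I would spend most of the effort. Here \ls{AddTimePoint} clones a gate into three copies (for $L$, $T$, $R$), deletes the gate at $J$, and then, for every temporal ancestor gate containing a proposition with interval~$J$, invokes \ls{RefineNext} or \ls{RefineUntil} to split the affected anchor/continuation parts. The delicate point is matching the algebraic rewriting performed by these procedures against the difference between $\fPsi{w_i}{\gamma,H,\nu}$ and $\fPsi{w_{i+1}}{\gamma,H,\nu}$: each old disjunct with anchor at $J$ must be replaced by the correct collection of new disjuncts anchored at $L$, $T$, and~$R$, and the continuation conjunct for $J$ must become the conjunction of the continuation conjuncts for $L$, $T$, and $R$, with $\tcp$ propositions correctly reindexed. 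I would prove the equivalence disjunct-by-disjunct, using the semilattice structure of $\Three$ and a small lemma that $\fPsi{w}{\gamma,H,\nu}$ refines in a purely local way under (T\ref{enum:observation_split}); the case analysis on whether $H\not\subseteq J$, $H=L$, $H=T$, or $H=R$ in \ls{RefineNext} and the symmetric analysis for $\since_I$/$\until_I$ are exactly the cases needed. The call to \ls{Instantiate} after each refinement realizes the new values of $\theta_{w_{i+1}}$ on the freshly introduced $\tpp$, $\overline{\tpp}$, and $\tcp$ propositions.

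Given the invariant, observational soundness is immediate: whenever \ls{OutputVerdict} is called with $(\{\tau\},b)$ for $\phi$, the gate $\mathsf{gate}^{\phi,\{\tau\},\emap}$ is semantically equivalent to $b$, so by the invariant $\theta_{w_i}(\fPsi{w_i}{\phi,\{\tau\},\emap})\equiv b$, and then Lemma~\ref{lem:propformulaMTLdata}(ii) forces $\osem{w_i,i',\emap}{\phi}\neq\unknown$ and Lemma~\ref{lem:propformulaMTLdata}(i) gives $\osem{w_i,i',\emap}{\phi}=b$, where $i'$ is the position with timestamp $\tau$; closedness of $\phi$ makes the valuation irrelevant, yielding $\eosem{w_i,\tau,\nu}{\phi}=b$ for all $\nu$. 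For observational completeness I would use the contrapositive of Lemma~\ref{lem:propformulaMTLdata}: if $\eosem{w_i,\tau,\nu}{\phi}=b\in\Two$, then $\theta_{w_i}(\fPsi{w_i}{\phi,\{\tau\},\emap})\equiv b$, hence by the invariant $\mathsf{gate}^{\phi,\{\tau\},\emap}$ is semantically equivalent to $b$, and I must argue that this fact has actually been detected by an \ls{IsBool} test at or before iteration $i$. This is where I would add a second invariant stating that whenever a gate becomes Boolean as a side effect of \ls{Eval}, \ls{Instantiate}, \ls{Add}, or \ls{Remove} inside any of the update procedures, the enclosing procedure immediately performs the corresponding \ls{IsBool}/\ls{PropagateTruthValue} step; a routine inspection of Listings~\ref{lst:tp_data}--\ref{lst:propagatedown_data} shows this is indeed the case, so the propagation reaches the root and the verdict is output within iteration~$i$.
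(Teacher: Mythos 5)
Your overall strategy is the same as the paper's: the paper's proof rests on exactly the two invariants you propose --- existence of the needed state variables (its Lemma~\ref{lem:key_existence}, phrased via a notion of ``relevant valuations'' $\Val_w(\psi,J)$) and the semantic equivalence $\mathsf{gate}_{w}^{\gamma,J,\nu}\equiv\theta_w(\fPsi{w}{\gamma,J,\nu})$ (its Lemma~\ref{lem:key_equivalence}) --- both established by induction on the position in $\bar{w}$ with one case per transformation, and both directions of correctness are then read off via Lemma~\ref{lem:propformulaMTLdata}, just as you do. So the decomposition, the hard (T\ref{enum:observation_split}) case matched disjunct-by-disjunct against the substitutions in \ls{RefineNext}/\ls{RefineUntil}, and the soundness argument are all essentially the paper's.

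There is, however, one concrete gap in your completeness argument. Your ``second invariant'' --- that whenever a gate \emph{becomes} Boolean as a side effect of \ls{Eval}, \ls{Instantiate}, \ls{Add}, or \ls{Remove}, the enclosing procedure runs \ls{IsBool}/\ls{PropagateTruthValue} --- does not cover the case where $\mathsf{gate}^{\phi,J,\emap}$ becomes Boolean at some iteration $j<i$ while $J$ is still a \emph{nonsingleton} interval containing $\tau$. At iteration $j$ the call \ls{PropagateTruthValue($\phi$, $J$, $\emap$, $b$)} outputs nothing because $|J|\neq 1$, and when the time point $\{\tau\}$ is finally created, $\mathsf{gate}^{\phi,\{\tau\},\emap}$ is produced by \ls{Clone} already equivalent to $b$: it never ``becomes'' Boolean inside any of the four procedures you list, so no propagation fires at iteration $i$ and your argument that ``the propagation reaches the root and the verdict is output within iteration~$i$'' breaks down. (A concrete instance: $\phi=\eventually_{[0,3]}p$, a time point with $p$ true at timestamp $1$ arrives first, making the gate for $[0,1)$ true; a time point at timestamp $0.5$ arrives later.) The verdict in this scenario is emitted only by the explicit \ls{IsBool($\mathsf{gate}^{\phi,\{\tau\},\emap}$)} test placed immediately after the cloning step in \ls{AddTimePoint}; the paper's proof handles this by identifying the first observation $w'$ at which the root gate over some $J\ni\tau$ becomes Boolean and then tracking the clone at the later split explicitly. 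You need to add this case (or reformulate your second invariant to cover gates that are \emph{born} Boolean under \ls{Clone}) for the completeness direction to go through.
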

\begin{proof}
  We first observe that the monitor only outputs verdicts with the
  procedure \ls{AddTimePoint($\phi$, $J$, $\tau$)} and the procedure
  \ls{PropagateTruthValue($\alpha$, $J$, $\nu$, $b$)} when
  $\alpha=\phi$ and $J$ is a singleton.  In both cases, 
  \ls{IsBool(gate$^{\phi,J,\emap}$)} returns true.  For the second
  case, observe that \ls{PropagateTruthValue} is only called when
  \ls{IsBool(gate$^{\phi,J,\emap}$)} returns true.
  Moreover, $\nu$ is $\emap$ in these calls, since state variables
  $\mathsf{gate}^{\alpha,J,\nu}$ with new partial valuations $\nu\neq\emap$
  are only created by the procedure 
  \ls{PropagateDataValue($\gamma$, $K$, $\nu$, $[x\mapsto d]$)}, 
  which is never called with the argument $\gamma=\phi$.
  Thus whenever the monitor outputs a verdict
  $(J,b)$, then $J=\{\tau\}$ and
  $\mathsf{gate}^{\phi,J,\emap}\equiv b$, for some $\tau\in\Qpos$ and
  $b\in\Two$.

  Let $\bar{w}$ be a valid observation sequence that represents the
  monitor's input. Without loss of generality, we assume that a single
  transformation is applied in each iteration, that is, for each
  $i\in\Nat$, $w_{i+1}$ is obtained from $w_i$ by exactly one of the
  transformations~(T1), (T2), (T3.1), or~(T3.2).
  For an observation $w$ of $\bar{w}$, we denote by
  $\mathsf{gate}_{w}^{\gamma,J,\nu}$ the value of the state variable
  $\mathsf{gate}^{\gamma,J,\nu}$ at the end of the iteration that
  processes the observation~$w$, that is, $w$ is the value of the
  monitor's state variable \ls{observation}.

  The equivalence below follows from
  Lemma~\ref{lem:key_equivalence}, which is stated and proved later.
  For an observation~$w$ of~$\bar{w}$, a time point in $w$ with
  timestamp~$\tau$, and $b\in\Two$, it holds that
  \begin{equation}
    \label{eq:top}
    \theta_w(\fPsi{w}{\phi,\{\tau\},\emap}) \equiv b 
    \quad \text{iff} \quad
    \mathsf{gate}_w^{\phi,\{\tau\},\emap} \equiv b
    \,.
  \end{equation}
  Furthermore, we note that in the iteration $w$, the monitor's state
  contains the state variable $\mathsf{gate}^{\phi,J,\emap}$ for any
  interval~$J$ that occurs in a letter of $w$.

  Observational soundness follows from the above observation on when
  the monitor output verdicts, the equivalence~(\ref{eq:top}), and
  Lemma~\ref{lem:propformulaMTLdata}.  
  To show observational completeness, suppose that
  $\eosem{w,\tau,\emap}{\phi}=b\in\Two$.  We must show that the
  verdict~$(\{\tau\},b)$ is output in this iteration~$w$ or has
  already been output in a previous iteration of $\bar{w}$.  From
  $\eosem{w,\tau,\emap}{\phi}\in\Two$, it follows that there is a time
  point~$j\in\pos(w)$ with the timestamp~$\tau$.  Furthermore, 
  $\osem{w,j,\emap}{\phi}=b$.  It follows from
  Lemma~\ref{lem:propformulaMTLdata} that
  $\theta_w(\fPsi{w}{\phi,\{\tau\},\emap})\equiv b$, and
  by~(\ref{eq:top}), we obtain that
  $\mathsf{gate}_w^{\phi,\{\tau\},\emap}\equiv b$.  We are done when
  the procedure \ls{PropagateTruthValue} outputs the
  verdict~$(\{\tau\},b)$. Otherwise, let $w'\sqsubseteq w$ be the
  first observation in $\bar{w}$ for which
  \ls{IsBool($\mathsf{gate}_{w'}^{\phi,J,\emap}$)} returns true, for
  some interval~$J$ with $\tau\in J$.
  Furthermore, let $w''$ be the observation of $\bar{w}$ when
  $J'\subseteq J$ is split into $J'\cap[0,\tau)$, $\{\tau\}$, and
  $J'\cap(\tau,\infty)$.  Clearly, $w'\sqsubseteq w''\sqsubsetneq
  w$. In this iteration, $\mathsf{gate}_{w''}^{\phi,\{\tau\},\emap}$
  is set to $\mathsf{gate}_{w'}^{\phi,J',\emap}$ by the call to
  \ls{Clone} in the \ls{AddTimePoint} procedure.  Note that
  $\mathsf{gate}_{w''}^{\phi,\{\tau\},\emap}\equiv
  \mathsf{gate}_{w'}^{\phi,J,\emap}\equiv b$.  After the creation of
  $\mathsf{gate}_{w''}^{\phi,\{\tau\},\emap}$, the monitor outputs the
  verdict $(\{\tau\},b)$ by calling the procedure \ls{OutputVerdict}.
\end{proof}

In the remainder of this section, we establish the monitor's key
invariants (Lemma~\ref{lem:key_existence} and
Lemma~\ref{lem:key_equivalence}). The equivalence~(\ref{eq:top}),
used to prove Theorem~\ref{thm:correct}, is a straightforward
consequence of Lemma~\ref{lem:key_equivalence}, and
Lemma~\ref{lem:key_existence} is used to establish
Lemma~\ref{lem:key_equivalence}.
To state the invariants, we introduce further notation.  As in the
proof of Theorem~\ref{thm:correct}, let $\bar{w}$ be a valid
observation sequence that represents the monitor's input.  Again, we
assume without loss of generality that $w_{i+1}$ is obtained from
$w_i$ by exactly one of the transformations~(T1), (T2), (T3.1),
or~(T3.2), for each $i\in\Nat$.  Furthermore,
$\mathsf{gate}_{w}^{\gamma,J,\nu}$ denotes the value of the state
variable $\mathsf{gate}^{\gamma,J,\nu}$ at the end of the iteration
that processes the observation~$w$ of $\bar{w}$.
To simplify matters, we also assume that state variables are not
garbage collected even when they are irrelevant. This
assumption does not affect the monitor's correctness because for an
irrelevant state variable $\mathsf{gate}_{w}^{\psi,J,\nu}$, the
corresponding proposition~$\psi^{J,\nu}$ does not occur in any
relevant \ls{gate} state variable. The monitor only does more work
than necessary.

The following definition allows us to state which state variables the
monitor maintains.
For an observation~$w$, we define inductively the set $\Val_w(\psi,J)$
of the \emph{relevant valuations} for $\psi\in\sub(\phi)$ at
interval~$J$, where $J$ ranges over the intervals that occur in the
letters of~$w$, as
\begin{equation*}
  \Val_w(\phi,J) := \{\emap\}
  \qquad\text{and}\qquad                   
  \Val_w(\psi,J) :=
  \big\{ 
    \nu 
    \mathbin{\big\vert}
    \text{$\theta_w(\fPsi{w}{\gamma,K,\mu})$ depends on $\psi^{J,\nu}$},
    \text{ for some $K$ and $\mu\in\Val_w(\gamma,K)$} 
  \big\} 
  \,,
\end{equation*}
for $\psi\not=\phi$, with the parent formula $\gamma$.
Recall that a propositional formula $\Psi$ \emph{depends on} the
proposition~$p$ if
$[p\mapsto\true](\Psi)\not\equiv[p\mapsto\false](\Psi)$.

\begin{example}
  We revisit Example~\ref{ex:freeze} with the formula
  $\freeze{r}{x}\alpha$ and the observation
  $w=\big(J_0,(\emap,\emap)\big)\big(J_1,(\emap,[r\mapsto
  d])\big)\big(J_2,(\emap,\emap)\big)$.
  Recall that $\alpha=\eventually_{(0,1]}\beta$, with $\beta=p(x)$,
  $J_0=[0,\tau)$, $J_1=\{\tau\}$, and $J_2=(\tau,\infty)$.
  We have the following relevant valuations.
  \begin{equation*}
    \begin{array}{r@{\ }l@{\qquad}r@{\ }l@{\qquad}r@{\ }l}
      \Val_w(\freeze{r}{x}\alpha,J_0) & = \{\emap\} &
      \Val_w(\freeze{r}{x}\alpha,J_1) & = \{\emap\} &
      \Val_w(\freeze{r}{x}\alpha,J_2) & = \{\emap\}
      \\
      \Val_w(\alpha,J_0) & = \{\emap\} &
      \Val_w(\alpha,J_1) & = \{[x\mapsto d]\} &
      \Val_w(\alpha,J_2) & = \{\emap\}
      \\
      \Val_w(\beta,J_0) & = \{\emap\} &
      \Val_w(\beta,J_1) & = \{\emap\} &
      \Val_w(\beta,J_2) & = \{\emap,[x\mapsto d]\}
    \end{array}
  \end{equation*}
  For instance, $[x\mapsto d]\in\Val_w(\beta,J_2)$ because
  $\theta_w(\fPsi{w}{\alpha,J_1,[x\mapsto d]})$ depends on
  $\beta^{J_2,[x\mapsto d]}$ and $[x\mapsto d]\in
  \Val_w(\alpha,J_1)$. The latter membership in turn holds because
  $\theta_w(\fPsi{w}{\freeze{r}{x}\alpha,J_1,\emap})$ depends on
  $\alpha^{J_1,[x\mapsto d]}$ and
  $\emap\in \Val_w(\freeze{r}{x}\alpha,J_1)$, by the definition of the
  base case.
  We also point out the correspondence between the nodes in the
  graph-based data structure and the relevant valuations.  Compare,
  for instance, Figure~\ref{fig:datastruct}(c) and the relevant
  valuations from this example.~\exampleendmark
\end{example}

Finally, we make the simplifying assumption that only a single
variable is frozen to a data value by (T3.2) transformations.  That
is, we assume that a register occurs at most once in the formula
$\phi$. Note that for a register~$r$ that occurs twice in $\phi$, we
can replace one occurrence with a fresh register~$r'$ and assume that
$r'$ carries the same data value at a time point as $r$.  Furthermore,
we can split a (T3.2) transformation into multiple ones such that the
register assignment of any of these transformations only maps a single
register to a data value.  Under this assumption, the following
technical lemma holds, which states that when this transformation is
used, only subformulas of the freeze subformula containing the
involved register can have new relevant valuations.

\begin{lemma}
  \label{lem:valT3_register}
  Let $w$ and $w'$ be observations such that $w'$ is obtained from $w$
  by the transformation~(T3.2), with $\tau$ and $\rho$ the
  corresponding timestamp and register assignment, respectively.
  For any $\psi\in\sub(\phi)$, interval $J$ in $w$, and partial
  valuation~$\nu$, it holds that if
  $\nu\in\Val_{w'}(\psi,J)\setminus\Val_w(\psi,J)$, then $\psi$ is a
  proper subformula of some $\freeze{r}{x}{\alpha}\in\sub(\phi)$ with
  $r\in\pdef(\rho)$ and $\nu(x)=\rho(r)$.
  Additionally, the following conditions hold for any partial
  valuation~$\mu$, if also $\mu\in\Val_{w'}(\gamma,K)$ and
  $\theta_{w'}(\fPsi{w'}{\gamma,K,\mu})$ depends on $\psi^{J,\nu}$,
  where $\gamma$ is $\psi$'s parent formula and $K$ an interval in
  $w$.
  \begin{enumerate}
  \item If $\mu\notin\Val_{w}(\gamma,K)$ then $\mu(x)=\rho(r)$ and 
    $\theta_{w}(\fPsi{w}{\gamma,K,\mu[x\mapsto\unknown]})$ depends
    on $\psi^{J,\nu[x\mapsto\unknown]}$.
  \item If $\mu\in\Val_{w}(\gamma,K)$ then
    $\gamma=\freeze{r}{x}{\alpha}$, 
    $\set{\tau}=J=K$, and
    $\nu=\mu[x\mapsto\rho(r)]$.
  \end{enumerate}
\end{lemma}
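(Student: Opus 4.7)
The plan is to proceed by structural induction on $\psi$, following the top-down inductive definition of $\Val$. The base case $\psi = \phi$ is immediate because $\Val_{w'}(\phi,J) = \{\emap\} = \Val_w(\phi,J)$, making the hypothesis vacuous. For the inductive step, with $\gamma$ the parent formula of $\psi$, unfold the hypothesis $\nu\in\Val_{w'}(\psi,J)$ to obtain $K$ and $\mu\in\Val_{w'}(\gamma,K)$ such that $\theta_{w'}(\fPsi{w'}{\gamma,K,\mu})$ depends on $\psi^{J,\nu}$; the main case split is then exactly whether $\mu\in\Val_w(\gamma,K)$, which matches conditions (2) and (1) of the lemma respectively.

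The central observation to exploit is that (T3.2) only alters the register assignment $\rho_j$ at the single time point with timestamp $\tau$. As a consequence: (a) intervals and time points are unchanged, so the substitutions on the propositions $\tpp^K$, $\overline{\tpp}^K$, and $\tcp^{H,K}_I$ coincide for $\theta_w$ and $\theta_{w'}$; (b) the predicate interpretations are untouched, so atomic substitutions agree; (c) by Theorem~\ref{thm:monotonicity}, truth values only become more defined, so $\theta_w \sqsubseteq \theta_{w'}$; and (d) the propositional formulas $\fPsi{w}{\gamma,K,\mu}$ and $\fPsi{w'}{\gamma,K,\mu}$ differ syntactically only in the case $\gamma=\freeze{r}{x}\alpha$, $K=\set{\tau}$, and $r\in\pdef(\rho)$, where the valuation attached to the child proposition switches from $\mu[x\mapsto\unknown]$ to $\mu[x\mapsto\rho(r)]$.

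For Case (2) ($\mu\in\Val_w(\gamma,K)$), use a small auxiliary fact that enriching a substitution $\theta$ to some $\theta'\sqsupseteq\theta$ on propositions other than $\psi^{J,\nu}$ cannot create a new dependency on $\psi^{J,\nu}$: if the formulas $\fPsi{w}{\gamma,K,\mu}$ and $\fPsi{w'}{\gamma,K,\mu}$ were syntactically identical, dependency would transfer from $w'$ to $w$, contradicting $\nu\notin\Val_w(\psi,J)$. Hence (d) forces $\gamma=\freeze{r}{x}\alpha$, $K=\set{\tau}$, $r\in\pdef(\rho)$, and then $\fPsi{w'}{\gamma,\set{\tau},\mu} = \alpha^{\set{\tau},\mu[x\mapsto\rho(r)]}$ so the dependency pins down $\psi=\alpha$, $J=\set{\tau}$, and $\nu=\mu[x\mapsto\rho(r)]$, which simultaneously yields condition (2) and the main claim (with $\alpha$ a proper subformula of $\freeze{r}{x}\alpha$ and $\nu(x)=\rho(r)$).

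For Case (1) ($\mu\notin\Val_w(\gamma,K)$), apply the inductive hypothesis to $\gamma$ to obtain that $\gamma$ is a proper subformula of some $\freeze{r}{x}\alpha\in\sub(\phi)$ with $r\in\pdef(\rho)$ and $\mu(x)=\rho(r)$; hence $\psi$ is also a proper subformula of the same freeze. To verify $\nu(x)=\rho(r)$, inspect the clause of $\fPsi{w'}{\gamma,K,\mu}$ that relates $\nu$ to $\mu$: for non-freeze $\gamma$ one has $\nu=\mu$ on all variables; for $\gamma=\freeze{s}{y}\beta$ one has $\nu=\mu[y\mapsto c]$ with $y\neq x$, since each variable is frozen at most once and $\gamma$ is a \emph{proper} subformula of $\freeze{r}{x}\alpha$. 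In either case $\nu(x)=\mu(x)=\rho(r)$. Condition (1) then follows by using the inductive hypothesis's condition (1)/(2) for $\gamma$ to infer $\mu[x\mapsto\unknown]\in\Val_w(\gamma,K)$ and by observing that $\fPsi{w'}{\gamma,K,\mu}$ and $\fPsi{w}{\gamma,K,\mu[x\mapsto\unknown]}$ share the same propositional skeleton with the $x$-valuation uniformly shifted; a final application of the monotonicity/syntactic-equality argument from Case (2) transfers dependency on $\psi^{J,\nu}$ in $w'$ to dependency on $\psi^{J,\nu[x\mapsto\unknown]}$ in $w$. The main obstacle will be this last step, i.e., verifying that the uniform shift $[x\mapsto\unknown]\leftrightarrow[x\mapsto\rho(r)]$ commutes with every formula constructor for $\gamma$ and that it propagates the dependency cleanly; the at-most-once freeze assumption is what keeps this bookkeeping manageable.
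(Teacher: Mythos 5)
Your proposal is correct and follows essentially the same route as the paper's proof: the same key observations carry the argument, namely that $\fPsi{w}{\gamma,K,\mu}$ and $\fPsi{w'}{\gamma,K,\mu}$ coincide syntactically except at the freeze node for $r$ at $\set{\tau}$, that enriching the substitution ($\theta_w\sqsubseteq\theta_{w'}$) cannot create new dependencies, and that below the affected freeze the valuations are uniformly shifted by $[x\mapsto\rho(r)]$ versus $[x\mapsto\unknown]$. The only difference is organizational — the paper first proves the main claim by a separate contrapositive induction and then splits on whether $\psi$ is the immediate child of the freeze, whereas you run one top-down induction splitting directly on $\mu\in\Val_w(\gamma,K)$ — and the "main obstacle" you flag at the end is exactly the step the paper also relies on (and treats tersely), so it is surmountable as you describe.
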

\begin{proof}
  We prove the lemma's first part by contraposition. Namely, we show
  that if $\psi$ is not a proper subformula of some
  $\freeze{r}{x}{\alpha}\in\sub(\phi)$ with $r\in\pdef(\rho)$ and
  $\nu(x)=\rho(r)$, then
  $\Val_{w'}(\psi,J)\subseteq \Val_{w}(\psi,J)$.
  If $\psi=\phi$ then, by definition,
  $\Val_{w'}(\psi,J)=\Val_{w}(\psi,J)=\{\emap\}$.  
  Let $\gamma$ be $\psi$'s parent formula.  By assumption, $\gamma$ is
  not a subformula of some $\freeze{r}{x}{\alpha}$ with
  $r\in\pdef(\rho)$ and $\nu(x)=\rho(r)$.  We have that if
  $\theta_{w'}(\fPsi{w'}{\gamma,K',\mu'})$ depends on $\psi^{J',\nu'}$
  then $\theta_{w}(\fPsi{w}{\gamma,K',\mu'})$ also depends on
  $\psi^{J',\nu'}$, for any intervals $K'$ and $J'$ of $w$ and partial
  valuations $\mu'$ and $\nu'$.  Note that
  $\theta_w\sqsubseteq\theta_{w'}$ and
  $\fPsi{w'}{\gamma,K',\mu'}=\fPsi{w}{\gamma,K',\mu'}$.  It follows
  that $\Val_{w'}(\psi,J)\subseteq \Val_{w}(\psi,J)$.

  We make a case split to prove the lemma's second part.

  \noindent
  \emph{Case I:} $\psi=\alpha$. That is,
  $\gamma=\freeze{r}{x}{\psi}$. We first show that
  $\mu\in\Val_{w}(\gamma,K)$. If, for the sake of a contradiction,
  $\mu\notin\Val_{w}(\gamma,K)$, then it follows from the lemma's
  first part for $\gamma$, $K$, and $\mu$ that $\gamma$ is a proper
  subformula of some $\freeze{r'}{x'}{\alpha'}$, with
  $r'\in\pdef(\rho)$ and $\mu(x')=\rho(r')$.  This contradicts the
  assumption that only one variable is frozen to a data value by the
  transformation.
  Hence, $\mu\in\Val_{w}(\gamma,K)$, and (1)~trivially holds. We
  prove~(2). Note that, since $\gamma=\freeze{r}{x}{\alpha}$, we have
  that $\fPsi{w}{\gamma,K,\mu}=\psi^{K,\nu'}$ and
  $\fPsi{w'}{\gamma,K,\mu}=\psi^{K,\nu''}$, for some partial
  valuations $\nu'$ and $\nu''$.  From $\nu\in\Val_{w'}(\psi,J)$, it
  follows that
  $\theta_{w'}(\fPsi{w'}{\gamma,K,\mu})\equiv \psi^{J,\nu}$ and thus
  $J=K$.  From $\nu\in\Val_{w'}(\psi,J)\setminus\Val_w(\psi,J)$, it
  follows that $\{\tau\}=J=K$. From the definition of
  $\fPsi{w'}{\gamma,\mu,K}$, it follows that
  $\nu=\mu[x\mapsto\varrho(r)]$.
    
  \noindent
  \emph{Case II:} $\psi$ is a proper subformula of $\alpha$.  As
  $\theta_{w'}(\fPsi{w'}{\gamma,K,\mu})$ depends on $\psi^{J,\nu}$, we
  obtain that $\theta_{w}(\fPsi{w}{\gamma,K,\mu})$ depends on
  $\psi^{J,\nu}$.  If $\mu\in\Val_{w}(\gamma,K)$, then
  $\nu\in\Val_{w}(\psi,J)$, which contradicts the assumption
  $\nu\in\Val_{w'}(\psi,J)\setminus\Val_{w}(\psi,J)$.  
  Hence, $\mu\notin\Val_{w}(\gamma,K)$, and (2)~trivially holds.  We
  prove~(1).
  From the lemma's first part applied to $\gamma$, $K$, and $\mu$, we
  obtain that $\mu(x)=\rho(r)$, and therefore $\mu(x)=\nu(x)$.
  Furthermore, as $\psi^{J,\nu}\notin\pdef(\theta_{w'})$, we have that
  $\psi^{J,\nu[x\mapsto\bot]}\notin\pdef(\theta_{w})$, and thus
  $\theta_{w}(\fPsi{w}{\gamma,K,\mu[x\mapsto\bot]})$ depends on
  $\psi^{J,\nu[x\mapsto\bot]}$.
\end{proof}

The next lemma establishes the key invariant about the existence of
the monitor's \ls{gate} state variables.
\begin{lemma}\label{lem:key_existence}
  Let $w$ be an observation of $\bar{w}$, $J$ an interval of~$w$,
  $\psi\in\sub(\phi)$ a nonatomic formula, and $\nu\in\Val_w(\psi,J)$
  a partial valuation.
  The monitor's state at the iteration that processes $w$ contains the
  state variable $\mathsf{gate}^{\psi,J,\nu}$.
\end{lemma}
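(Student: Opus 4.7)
The plan is to prove Lemma~\ref{lem:key_existence} by induction on the index $i$ such that $w = w_i$ in the valid observation sequence $\bar{w}$. For the base case, $w_0 = \big([0,\infty),(\emap,\emap)\big)$ has the single interval $[0,\infty)$, and one first shows by structural induction on $\phi$ that $\Val_{w_0}(\psi,[0,\infty)) = \{\emap\}$ for every non-atomic $\psi \in \sub(\phi)$: this holds for $\phi$ by definition, and descends to subformulas because $\rho_0 = \emap$ forces every freeze clause $\fPsi{w_0}{\freeze{r}{x}\alpha,J,\emap}$ to propagate the valuation $\emap[x\mapsto\unknown] = \emap$. The \ls{Init} procedure then explicitly creates $\mathsf{gate}^{\gamma,[0,\infty),\emap}$ for every non-atomic $\gamma \in \sub(\phi)$, establishing the invariant.

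For the inductive step, I would assume the invariant at $w_i$ and argue by case distinction on the transformation producing $w_{i+1}$. Cases~(T1), (T2), and (T3.1) are the straightforward ones. For~(T1), where an interval $J$ of $w_i$ splits into $L, T, R$, one first checks that $\Val_{w_{i+1}}(\psi,K) = \Val_{w_i}(\psi,J)$ for $K \in \{L,T,R\}$ and that relevant valuations at every other interval are preserved; the opening \ls{foreach} loop of \ls{AddTimePoint} then clones each existing $\mathsf{gate}^{\gamma,J,\nu}$ into three copies, while the subsequent \ls{Rename}, \ls{Instantiate}, \ls{RefineNext}, and \ls{RefineUntil} calls never destroy \ls{gate} variables. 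For~(T2), removing an interval $K$ can only shrink the set of relevant valuations at other intervals, since substituting $\tpp^K \mapsto \false$ eliminates rather than introduces dependencies; the no-garbage-collection assumption ensures all surviving gates remain. Case~(T3.1) does not alter the set of intervals or the register assignments used by the freeze clauses, so $\Val_{w_{i+1}} = \Val_{w_i}$ for non-atomic subformulas, and the existing gates suffice.

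The heart of the argument is case~(T3.2), where the register assignment at $\{\tau\}$ is extended by $[r\mapsto d]$. Here I would invoke Lemma~\ref{lem:valT3_register}: any $\nu \in \Val_{w_{i+1}}(\psi,J) \setminus \Val_{w_i}(\psi,J)$ forces $\psi$ to be a proper subformula of some $\freeze{r}{x}\alpha \in \sub(\phi)$, with $\nu(x) = d$ and with the old valuation $\nu[x\mapsto\unknown]$ already belonging to $\Val_{w_i}(\psi,J)$ through a parallel dependency chain. One then shows by an inner induction descending $\phi$'s syntax from $\freeze{r}{x}\alpha$ to $\psi$ that \ls{PropagateDataValue} is invoked on exactly the tuple $(\psi, J, \nu[x\mapsto\unknown], [x\mapsto d])$ and therefore creates $\mathsf{gate}^{\psi,J,\nu}$ before the iteration terminates. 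The main loop in \ls{MonitorMTL}$^{\downarrow}$ triggers the initial call on $\alpha$ at $\{\tau\}$, and each subsequent invocation, when the cloned gate still depends on $\alpha'^{L,\mu'[x\mapsto d]}$ for a proper subformula~$\alpha'$, recurses on $(\alpha', L, \mu', [x\mapsto d])$; this retraces, from parent to child, exactly the dependency path used from child to parent in the definition of $\Val_{w_{i+1}}$.

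The main obstacle is aligning the recursive behavior of \ls{PropagateDataValue} with the inductive definition of $\Val_{w_{i+1}}$ in case~(T3.2). The crucial hinge is the contrapositive reading of Lemma~\ref{lem:valT3_register}(1): a newly relevant proposition $\psi^{J,\nu}$ can appear inside $\theta_{w_{i+1}}(\fPsi{w_{i+1}}{\gamma,K,\mu})$ only when the parent valuation $\mu$ already satisfies $\mu(x) = d$ and the previous gate $\mathsf{gate}^{\gamma,K,\mu[x\mapsto\unknown]}$ already depended on $\psi^{J,\nu[x\mapsto\unknown]}$. This guarantees that when \ls{PropagateDataValue} processes the parent, its \ls{Contains} check on the child dependency succeeds, so the recursion reaches every new relevant valuation identified by Lemma~\ref{lem:valT3_register}, which closes the induction.
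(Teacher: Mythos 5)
Your proposal follows essentially the same route as the paper's proof: an outer induction on the position of $w$ in $\bar{w}$, with (T1), (T2), and (T3.1) dispatched as the easy cases, and the (T3.2) case handled by an inner induction descending from the freeze subformula $\freeze{r}{x}{\alpha}$ down to $\psi$, using Lemma~\ref{lem:valT3_register} to show that the \textsf{Contains} check succeeds so that \textsf{PropagateDataValue} recurses along the dependency chain and creates the new state variable. The only nitpick is that your claims of \emph{equality} of the relevant-valuation sets in the easy cases are stronger than what holds (substituting Boolean constants can eliminate dependencies, so only the inclusion $\Val_{w_{i+1}}\subseteq\Val_{w_i}$ is guaranteed), but that inclusion is the direction your argument actually uses, so nothing breaks.
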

\begin{proof}
  We reason by induction on the position of $w$ in the
  sequence~$\bar{w}$.  Recall that we assume, without loss of
  generality, that a single transformation is applied to an
  observation in $\bar{w}$.
  In the base case, the observation~$w$ is $w_0$.  The interval
  $[0,\infty)$ is the only interval of a letter in $w_0$ and
  $\Val_{w_0}(\psi,[0,\infty))=\{\emap\}$, for any
  $\psi\in\sub(\phi)$.  Since the monitor has not received any
  messages, only the procedure \ls{Init} has been executed so
  far. \ls{Init} creates in its \ls{foreach} loop for every
  $\psi\in\sub(\phi)$ the state variable
  \ls{gate$^{\psi,[0,\infty),\emap}$}.
  This concludes the base case.

  For the step case, we assume that the statement holds for~$w$ and
  prove it for $w'$, the observation after~$w$ in~$\bar{w}$.  Let
  $\psi\in\sub(\phi)$, $J$ an interval of $w'$, and
  $\nu\in\Val_{w'}(\psi,J)$.  We must prove the existence of the
  state variable~$\mathsf{gate}^{\psi,J,\nu}_{w'}$.  We make a case
  distinction on the type of the transformation~$t$ that transforms
  $w$ into $w'$.
  The cases~(T1),~(T2), and~(T3.1) are similar and straightforward. We
  only sketch the (T1) case. Let $J'$ be the interval that is returned
  by \ls{DeltaT1($t$)}, that is, the interval that is split.  If
  $J\not\subseteq J'$, then it follows that $\nu\in\Val_w(\psi,J)$.
  By the induction hypothesis, we have that
  $\mathsf{gate}_{w}^{\psi,J,\nu}$ exists. Since this state variable
  is not deleted, we have that $\mathsf{gate}_{w'}^{\psi,J,\nu}$
  exists.  If $J\subseteq J'$, that is, $J$ originates from the
  interval $J'$, then we have that $\nu\in\Val_w(\psi,J')$ and obtain
  by the induction hypothesis that $\mathsf{gate}_{w}^{\psi,J',\nu}$
  exists.  The procedure \ls{AddTimePoint} creates in its first
  \ls{foreach} loop the state variable $\mathsf{gate}^{\psi,J,\nu}$ by
  cloning $\mathsf{gate}^{\psi,J',\nu}$.

  It remains to prove the (T3.2) case.
  Let $\tau$ be the timestamp and $\rho$ the partial register
  assignment returned by \ls{DeltaT32($t$)}.
  If $\mathsf{gate}^{\psi,J,\nu}_w$ exists, then the existence of
  $\mathsf{gate}^{\psi,J,\nu}_{w'}$ directly follows from the
  observation that no state variable is deleted in the (T3.2) case.
  For the remainder of the proof, suppose that $\mathsf{gate}^{\psi,J,\nu}_w$ does
  not exist, where $\psi$ is a proper subformula of~$\phi$ with the
  parent formula $\gamma$.
  Note that if $\psi=\phi$ then $\nu=\emap$, since $\phi$ is closed.
  It is easy see that $\mathsf{gate}^{\phi,J,\emap}_w$ exists and
  hence also $\mathsf{gate}^{\phi,J,\emap}_{w'}$.
  From the induction hypothesis, it follows that $\nu\notin\Val_w(\psi,J)$.
  From $\nu\in\Val_{w'}(\psi,J)$, it follows that 
  $\theta_{w'}(\fPsi{w'}{\gamma,K,\mu})$ depends on $\psi^{J,\nu}$,
  for some interval~$K$ in~$w'$ and $\mu\in\Val_{w'}(\gamma,K)$.
  From Lemma~\ref{lem:valT3_register}, we obtain that $\gamma$ is a
  subformula of some $\freeze{r}{x}{\alpha}\in\sub(\phi)$,
  $r\in\pdef(\rho)$, and $\nu(x)=\rho(r)$.
  We prove the existence of $\mathsf{gate}^{\psi,J,\nu}_{w'}$ by
  induction on the distance between $\alpha$ and $\psi$, that is, the
  formula length of~$\psi$ minus the formula length of~$\alpha$.

  For the base case, we have that $\psi=\alpha$ and $\gamma=\freeze{r}{x}{\alpha}$.
  For the sake of contradiction, suppose that
  $\mu\notin\Val_{w}(\gamma,K)$. 
  From
  Lemma~\ref{lem:valT3_register}(1), it follows that $\mu(x)=\rho(r)$.
  However, from the definitions of $\Val_{w'}(\gamma,K)$ and
  $\fPsi{w'}{\gamma,K,\mu}$, we have that $x\notin\pdef(\mu)$, which
  contradicts $\mu(x)=\rho(r)$.
  Hence $\mu\in\Val_{w}(\gamma,K)$. From the outer induction
  hypothesis, it follows that $\mathsf{gate}^{\gamma,K,\mu}_w$ exists.
  By Lemma~\ref{lem:valT3_register}(2), we have that
  $\nu=\mu[x\mapsto\rho(r)]$ and $J=K=\set{\tau}$.
  Therefore, \ls{PropagateDataValue($\psi$, $J$, $\mu$, $[x\mapsto\rho(r)]$)}
  is called from
  \ls{MonitorMTL$^{\downarrow}$}.  The first \ls{else if} branch of
  \ls{PropagateDataValue} is executed, which creates the state
  variable $\mathsf{gate}^{\psi,J,\nu}$.

  For the step case, we have that $\psi$ is a proper subformula of $\alpha$. 
  By the inner induction hypothesis, $\mathsf{gate}^{\gamma,K,\mu}_{w'}$
  exists. 

  \noindent
  \emph{Case I:} $\mathsf{gate}^{\gamma,K,\mu}_w$ does not exist.
  Therefore,
  $\mathsf{gate}^{\gamma,K,\mu}$ is created at~$w'$ within
  \ls{PropagateDataValue($\gamma$, $K$, $\mu'$, $[x\mapsto\rho(r)]$)},
  for some partial valuation~$\mu'$. Note that $\mu=\mu'[x\mapsto \rho(r)]$
  and $x\notin\pdef(\mu')$.
  It also follows from the outer induction hypothesis that
  $\mu\notin\Val_{w}(\gamma,K)$.  From
  Lemma~\ref{lem:valT3_register}(1), it follows that
  $\theta_{w}(\fPsi{w}{\gamma,K,\mu'})$ depends on $\psi^{J,\nu'}$,
  where $\nu'=\nu[x \mapsto\unknown]$.  This means that
  \ls{Contains($\mathsf{gate}_w^{\gamma,K,\mu'}$, $\psi^{J,\nu'}$)}
  returns true.
  As $\mathsf{gate}_{w'}^{\gamma,K,\mu}$ is obtained from
  $\mathsf{gate}_w^{\gamma,K,\mu'}$ by cloning and renaming its
  propositions, we obtain that also
  \ls{Contains($\mathsf{gate}_{w'}^{\gamma,K,\mu}$, $\psi^{J,\nu}$)}
  returns true. Therefore, \ls{PropagateDataValue} is called with the
  parameters $\psi$, $J$, $\nu'$, and $[x\mapsto \rho(r)]$. The
  state variable $\mathsf{gate}_{w'}^{J,\psi,\nu}$ is created
  within this call.
  
  \noindent
  \emph{Case II:} $\mathsf{gate}^{\gamma,K,\mu}_w$ exists. 
  It must be
  the case that $\theta_w(\Psi^{\gamma,K,\mu}_w)$ depends on
  $\psi^{J,\nu}$, since $\theta_{w'}(\Psi^{\gamma,K,\mu}_{w'})$
  depends on $\psi^{J,\nu}$.  It follows that $\nu\in\Val_w(\psi,J)$,
  which is a contradiction, and hence this second case cannot occur.
\end{proof}

The final lemma establishes the key invariant about the semantic
equivalence of the monitor's \ls{gate} state variables for which we
have shown the existence in Lemma~\ref{lem:key_existence}.
\begin{lemma}\label{lem:key_equivalence}
  Let $w$ be an observation of $\bar{w}$, $J$ an interval of~$w$,
  $\psi\in\sub(\phi)$ a nonatomic formula, and $\nu\in\Val_w(\psi,J)$
  a partial valuation.  It holds that
  $\mathsf{gate}_{w}^{\psi,J,\nu} \equiv
  \theta_w(\fPsi{w}{\psi,J,\nu})$.
\end{lemma}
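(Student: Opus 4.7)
The plan is to proceed by induction on the position of $w$ in $\bar{w}$, mirroring the structure of the proof of Lemma~\ref{lem:key_existence}. In the base case, $w=w_0$ and the only interval is $[0,\infty)$ with $\Val_{w_0}(\psi,[0,\infty))=\{\emap\}$; the procedure \ls{Init} assigns $\mathsf{gate}^{\psi,[0,\infty),\emap}:=\fPsi{w_0}{\psi,[0,\infty),\emap}$ and then calls \ls{Instantiate}, which substitutes exactly the values prescribed by $\theta_{w_0}$ on the propositions $\tpp^K$, $\overline{\tpp}^K$, and $\tcp^{H,K}_I$. Since at $w_0$ there are no atomic propositions evaluable by $\theta_{w_0}$, this yields the desired equivalence.

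For the step case, fix $w$ and $w'$ with $w\sqsubset_1 w'$ via a transformation of type (T1), (T2), (T3.1), or (T3.2). Let $\gamma$ be $\psi$'s parent formula. I would analyze each case by comparing the rewriting rules of \ls{AddTimePoint}, \ls{RemoveInterval}, \ls{PropagateTruthValue}, and \ls{PropagateDataValue} against the syntactic change from $\fPsi{w}{\psi,J,\nu}$ to $\fPsi{w'}{\psi,J,\nu}$ (or to a newly created variant), and against the update from $\theta_w$ to $\theta_{w'}$. For (T2), \ls{RemoveInterval} performs precisely the substitution $[\tpp^K\mapsto\false]$, which matches the new $\theta_{w'}$; the cascading \ls{PropagateTruthValue} calls then transmit any consequential Boolean values along the edges, and the induction hypothesis gives the invariant at each affected node. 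For (T3.1), only the $\theta_{w'}$-image of the atomic propositions $p(\overline{x})^{\{\tau\},\nu}$ changes from undefined to a Boolean constant; \ls{PropagateTruthValue} propagates exactly these substitutions into every \ls{gate} in which they occur, using the \ls{Contains} check to discover them.

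The (T1) case is where I expect the main work. The split of $J$ at $\tau$ turns the single letter into up to three, replacing each proposition with interval $J$ in $\fPsi{w}{\psi,J,\nu}$ by the structurally corresponding propositions with intervals $L=J\cap[0,\tau)$, $T=\{\tau\}$, $R=J\cap(\tau,\infty)$. For nontemporal $\gamma$ this is a simple \ls{Rename}. For $\gamma\in\{\predecessor_I\alpha,\successor_I\alpha,\alpha\since_I\beta,\alpha\until_I\beta\}$, the auxiliary procedures \ls{RefineNext} and \ls{RefineUntil} must reinstantiate the anchor and continuation parts exactly according to the definitions of $\fPsi{w'}{\gamma,H,\nu}$ and $C^{H,\pm\ell,\nu}_{w'}$; I would verify this by checking each branch of \ls{RefineUntil} (the four sub-cases $H>J$, $H=L$, $H=T$, $H=R$) and each branch of \ls{RefineNext} against the corresponding disjunct or conjunct in $\fPsi{w'}{\gamma,H,\nu}$, including the renaming of $\tcp^{K,J}_I$ to the new interval labels. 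The call to \ls{Instantiate} then applies all $\theta_{w'}$-substitutions that became determined by the split, and the subsequent \ls{PropagateTruthValue} triggers handle any Boolean collapses propagating upward. Because the second \ls{foreach} loop of \ls{AddTimePoint} walks gates top-down in $\phi$'s syntax tree, by the time a gate is visited its subgates already satisfy the invariant, so the \ls{Eval} calls substitute the correct Boolean values.

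Finally, the (T3.2) case is handled by comparing the two exhaustive possibilities identified in Lemma~\ref{lem:valT3_register}. If $\mu\in\Val_{w}(\gamma,K)$ and $\gamma=\freeze{r}{x}{\alpha}$ with $J=K=\{\tau\}$, then \ls{MonitorMTL$^{\downarrow}$} itself calls \ls{Rename} to replace $\alpha^{\{\tau\},\nu'}$ by $\alpha^{\{\tau\},\nu'[x\mapsto\rho(r)]}$ inside the parent gate, which matches the change in $\fPsi{w'}{\gamma,K,\mu}$; \ls{PropagateDataValue} then descends into $\alpha$. If instead a new relevant valuation arises deeper in the formula, \ls{PropagateDataValue} clones the existing gate (invariant transferred by the induction hypothesis), renames its propositions by the substitution $[\alpha^{L,\mu}\mapsto\alpha^{L,\mu[x\mapsto d]}]$ (which is exactly what $\fPsi{w'}{\cdot}$ prescribes by Lemma~\ref{lem:valT3_register}(1)), and then recurses on each child, triggered via \ls{Contains}. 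In both sub-cases, an \ls{IsBool} check triggers an upward \ls{PropagateTruthValue} that again preserves the invariant node-by-node. The principal obstacle is thus the bookkeeping inside the (T1) case for temporal connectives: verifying that the concrete substitutions performed by \ls{RefineNext} and \ls{RefineUntil} faithfully implement the syntactic transformation $\fPsi{w}{\gamma,H,\nu}\leadsto\fPsi{w'}{\gamma,H,\nu}$ in every branch.
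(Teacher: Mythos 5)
Your plan coincides with the paper's own proof: an induction on the position of $w$ in $\bar{w}$ with a case split on the transformation type, in each case matching the state-update procedures against the syntactic change from $\fPsi{w}{\psi,J,\nu}$ to $\fPsi{w'}{\psi,J,\nu}$ and the extension of $\theta_w$ to $\theta_{w'}$, with (T1) handled via the relevant-parts correspondence for \ls{RefineNext}/\ls{RefineUntil}. The paper does make explicit two points your sketch asserts without argument --- it introduces the intermediate value $\mathsf{gate}_{w\to w'}^{\psi,J,\nu}$ (after \ls{Rename}/refinement but before \ls{Instantiate}/\ls{Eval}, justified by an ordering argument on when each procedure can modify a gate, plus the auxiliary substitution $\delta$ in the (T1) case), and it proves, via an inner induction together with Lemma~\ref{lem:key_existence}, that \ls{PropagateTruthValue} is actually invoked for every proposition newly in $\pdef(\theta_{w'})$ on which a gate depends --- but these are elaborations of the same argument rather than a different route.
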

\begin{proof}
  As in Lemma~\ref{lem:key_existence}, we reason by induction on the
  position of $w$ in the sequence~$\bar{w}$.
  In the base case, the observation~$w$ is $w_0$.  We have that
  $J=[0,\infty)$ and $\nu=\emap$.
  Only the procedure \ls{Init} is executed, which initializes
  \ls{gate$^{\psi,[0,\infty),\emap}$} with
  $\fPsi{w_0}{\psi,[0,\infty),\emap}$.  The execution of the
  procedures \ls{Instantiate} and \ls{PropagateTruthValue}, which are
  called by \ls{Init}, results in applying the
  substitution~$\theta_{w_0}$ to \ls{gate$^{\psi,[0,\infty),\emap}$}.
  This concludes the base case.

  For the step case, we assume that the statement holds for~$w$ and
  prove it for $w'$, the observation after~$w$ in~$\bar{w}$.  Let
  $\psi\in\sub(\phi)$, $J$ an interval of $w'$, and
  $\nu\in\Val_{w'}(\psi,J)$.  We make a case distinction on the type
  of the transformation~$t$ that transforms $w$ into $w'$.
  We start with the~(T3.2) case.

  \emph{Transformation~(T3.2).}
  We first note that a state variable is modified only by \ls{Rename}
  (from \ls{PropagateDataValue}) and by \ls{Eval} (from
  \ls{PropagateTruthValue}).
  Furthermore, a state variable is modified at most once by
  \ls{Rename}. Indeed, the first modification happens just after
  creating the state variable, using \ls{Clone}. A second modification
  cannot happen, because the \ls{else if} branch in which the second
  call would hypothetically occur is executed only when the state
  variable does not exist already.
  Also, a call to \ls{Rename} cannot be preceded by a call to
  \ls{PropagateTruthValue} (for the same \ls{gate} state variable).
  We conclude that the possible modification by \ls{Rename} precedes
  the modifications by \ls{Eval} in the sequence of modifications of a
  state variable during the processing of the current transformation.
  We denote by $\mathsf{gate}_{w\to w'}^{\psi,J,\nu}$ the value of the
  $\mathsf{gate}^{\psi,J,\nu}$ after the possible modification by
  \ls{Rename}, and before the modifications by~\ls{Eval}.
  Note also that if $\mathsf{gate}_{w}^{\psi,J,\nu}$ exists, then
  $\mathsf{gate}_{w\to w'}^{\psi,J,\nu} =
  \mathsf{gate}_{w}^{\psi,J,\nu}$.

  We have that
  $\mathsf{gate}_{w\to w'}^{\psi,J,\nu} \equiv
  \theta_w(\fPsi{w'}{\psi,J,\nu})$.
  Note that the right-hand side of the semantic equivalence uses the
  substitution for~$w$ and the propositional formula for~$w'$.
  The proof is by a straightforward induction on the length of~$\phi$
  minus the length of~$\psi$. We omit it.

  We now prove that $\mathsf{gate}_{w'}^{\psi,J,\nu}\equiv\theta_{w'}(\fPsi{w'}{\psi,J,\nu})$.
  We reason by an inner induction on the size of $\gamma$ (i.e., on
  the number of its connectives). The base case (when the size
  of~$\gamma$ is~$1$) is a special case of the step case, and is
  therefore omitted.
  For the step case, consider an arbitrary call to \ls{Eval} with
  parameters $\mathsf{gate}_{w'}^{\psi,J,\nu}$ and
  $[\alpha^{K,\mu}\mapsto b]$. Clearly, $\alpha$ is a direct subformula
  of~$\psi$.
  If $\alpha$ is atomic, then $\alpha=p(\overline{x})$ for some $p\in
  P$, and \ls{PropagateTruthValue($\alpha$, $K$, $\mu$, $b$)} was called
  from the \ls{PropagateDataValue} procedure. Therefore,
  $b=\osem{w',k,\mu}{\alpha}$.
  If $\alpha$ is not atomic, %
  then \ls{PropagateTruthValue($\alpha$, $K$, $\mu$, $b$)} was
  called either from \ls{PropagateDataValue} or 
  from \ls{PropagateTruthValue} (recursively).
  From the conditions under which the call was made (namely, that
  \ls{IsBool($\mathsf{gate}_{w'}^{\alpha,K,\mu}$) returns true}), we
  deduce in all cases that $b\equiv
  \mathsf{gate}_{w'}^{\alpha,K,\mu}$. From the induction hypothesis
  and Lemma~\ref{lem:propformulaMTLdata}, it follows that $b =
  \osem{w',k,\mu}{\alpha}$.
  Thus, in both cases, $b = \osem{w',k,\mu}{\alpha}$.
  This also tells us that, for different calls to \ls{Eval}, a
  proposition $\alpha^{K,\mu}$ cannot be replaced with different
  Boolean values. That is, we have shown that
  $\mathsf{gate}_{w'}^{\psi,J,\nu} \equiv \theta(\mathsf{gate}_{w\to
    w'}^{\psi,J,\nu})$, for some substitution~$\theta$ that replaces
  propositions $\alpha^{K,\mu}$ with $\osem{w',k,\mu}{\alpha}\in\Two$.
  
  To conclude the (T3.2) case, it suffices to show that for any
  proposition $\alpha^{K,\mu}$ of $\mathsf{gate}_{w\to
    w'}^{\psi,J,\nu}$ such that $\mathsf{gate}_{w\to
    w'}^{\psi,J,\nu}$ depends on $\alpha^{K,\mu}$ and 
  $\alpha^{K,\mu}\in\pdef(\theta_{w'})\setminus\pdef(\theta_{w})$,
  we have $\alpha^{K,\mu}\in\pdef(\theta)$.
  That is, we have that \ls{PropagateTruthValue($\alpha$, $K$, $\mu$, $b$)} is called,
  where $b = \osem{w',k,\mu}{\alpha}$.
  As $\alpha^{K,\mu}\in\pdef(\theta_{w'})\setminus\pdef(\theta_{w})$,
  we have that either $\osem{w,k,\mu}{\alpha}\notin\Two$ or
  $\mu\notin\Val_{w}(\alpha,K)$.
  Note first that as $\alpha^{K,\mu}\in\pdef(\theta_{w'})$, we have
  that $\mu\in\Val_{w'}(\alpha,K)$.  We now make a case distinction.

  \noindent\emph{Case I}:
  $\mu\in\Val_{w}(\alpha,K)$. Then $\osem{w,k,\mu}{\alpha}\notin\Two$.
  Therefore, $\mathsf{gate}_{w}^{\alpha,K,\mu}$ exists (by
  Lemma~\ref{lem:key_existence}); however,
  $\mathsf{gate}_{w}^{\alpha,K,\mu}$ is not semantically equivalent to a Boolean
  constant. As $\mu\in\Val_{w'}(\alpha,K)$,
  $\mathsf{gate}_{w'}^{\alpha,K,\mu}$ exists, by
  Lemma~\ref{lem:key_existence}. Also, from the inner induction
  hypothesis, $\mathsf{gate}_{w'}^{\alpha,K,\mu}\equiv b$.
  Therefore, \ls{Eval} was called on
  $\mathsf{gate}^{\alpha,K,\mu}$ while executing
  \ls{PropagateTruthValue}. Thus,
  \ls{PropagateTruthValue($\alpha$, $K$, $\mu$, $b$)} is called.

  \noindent\emph{Case II}: $\mu\notin\Val_{w}(\alpha,K)$. Since
  $\mu\in\Val_{w'}(\alpha,K)$, then, as in the proof of
  Lemma~\ref{lem:key_existence}, we obtain that
  \ls{PropagateDataValue} is called with parameters $\alpha$, $K$, $\mu[x\mapsto\bot]$, $[x\mapsto d]$,
  where $x$ and $d$ are the variable frozen by the current
  transformation and the corresponding value, respectively.  Again, since the
  $\mathsf{gate}_{w'}^{\alpha,K,\mu}\equiv b$ by the
  inner induction hypothesis, we have that
  \ls{PropagateTruthValue($\alpha$, $K$, $\mu$, $b$)} is called from
  \ls{PropagateDataValue}.
  This concludes the (T3.2)~case.

  \emph{Transformation~(T1).}
  Let $\tau$ and $K$ be the timestamp and the interval returned by
  \ls{DeltaT1($t$)}, respectively.  Note that $\tau\in K$ and we
  assume that $\tau>0$.  
  For an interval $H$ of $w'$, we define $\hat{H}:=H$ if
  $H\not\subseteq K$ and $\hat{H}:=K$ if $H\subseteq K$.
  As $\nu\in\Val_{w'}(\psi,J)$, we have that
  $\nu\in\Val_{w}(\psi,\hat{J})$.  From Lemma~\ref{lem:key_existence},
  we obtain the existence of
  $\mathsf{gate}_{w}^{\psi,\hat{J},\nu}$.

  We first remark that the procedure \ls{AddTimePoint} creates
  $\mathsf{gate}^{\psi,J,\nu}$ in its first \ls{foreach} loop from
  $\mathsf{gate}^{\psi,\hat{J},\nu}$ by \ls{Clone}, if $\hat{J}=K$.
  In \ls{AddTimePoint}'s second \ls{foreach} loop, the procedures
  \ls{Rename}, \ls{RefineNext}, \ls{RefineUntil}, \ls{Instantiate}, or
  \ls{Eval} may modify $\mathsf{gate}^{\psi,J,\nu}$.  Note that
  \ls{Rename}, \ls{RefineNext}, \ls{RefineUntil}, or \ls{Instantiate}
  are directly called from \ls{AddTimePoint} and at most once.  In
  contrast, \ls{Eval} is called from \ls{PropagateTruthValue}, and
  \ls{Eval} may modify $\mathsf{gate}^{\psi,J,\nu}$ multiple times.
  Furthermore, \ls{Eval}'s modifications happen after modifications by
  \ls{Instantiate}, which in turn happen after modifications by
  \ls{Rename}, \ls{RefineNext}, or \ls{RefineUntil}.
  The reason is that the loop iterates top-down over $\phi$'s formula
  structure. This means, if \ls{Eval} modifies
  $\mathsf{gate}^{\psi,J,\nu}$ in the iteration for some state
  variable $\mathsf{gate}^{\gamma,H,\mu}$, then $\gamma$ is a
  subformula of $\psi$. In particular, modifications by \ls{Rename},
  \ls{RefineNext}, or \ls{RefineUntil} on $\mathsf{gate}^{\psi,J,\nu}$
  have been carried out in an earlier iteration, namely, the one for
  $\mathsf{gate}^{\psi,J,\nu}$.
  We denote by $\mathsf{gate}_{w\to w'}^{\psi,J,\nu}$ the value of the
  state variable $\mathsf{gate}^{\psi,J,\nu}$ after modifications by
  \ls{Rename}, \ls{RefineNext}, or \ls{RefineUntil}, and before
  modifications by \ls{Instantiate} or \ls{Eval}.

  The proof of
  $\mathsf{gate}_{w'}^{\psi,J,\nu}\equiv\theta_{w'}(\fPsi{w'}{\psi,J,\nu})$
  comprises two parts. The first part shows that
  $\mathsf{gate}_{w\to w'}^{\psi,J,\nu} \equiv
  \delta(\fPsi{w'}{\psi,J,\nu})$, where $\delta$ behaves like
  $\theta_w$, except that it carries over the truth value assignment
  for propositions with the interval $K$ to the propositions
  originating from splitting $K$.  That is, we define
  \begin{equation*}
    \delta(p):=\begin{cases}
      \theta_w(\gamma^{\hat{H},\mu}) & 
      \text{if $p=\gamma^{H,\mu}$ and $\gamma^{\hat{H},\mu}\in\pdef(\theta_w)$,}
      \\
      \theta_w(\tcp_I^{\hat{H},\hat{L}}) &
      \text{if $p=\tcp_I^{H,L}$ and $\tcp_I^{\hat{H},\hat{L}}\in\pdef(\theta_w)$,}
      \\
      \theta_w(p) &
      \text{if $p\in\pdef(\theta_w)$ and $p$ is of the form
        $\tpp^H$ or $\overline{\tpp}^H$.}
    \end{cases}
  \end{equation*}
  Note that if $p\in\pdef(\theta_w)$, then $\theta_w(p)\in\Two$.  Also
  note that $\delta$ is undefined for propositions of the form
  $\tpp^H$ and $\overline{\tpp}^H$ with $H\subseteq K$.
  The second part, which we omit,
  since it is analogous to the second part of the previous~(T3.2)
  case, uses the first part to show that
  $\mathsf{gate}_{w'}^{\psi,J,\nu} \equiv
  \theta_{w'}(\fPsi{w'}{\psi,J,\nu})$.

  For the first part, it suffices to show that the relevant parts of
  $\mathsf{gate}_{w\to w'}^{\psi,J,\nu}$ are semantically equivalent
  to their relevant counterparts in $\delta(\fPsi{w'}{\psi,J,\nu})$.
  Indeed, note that $\theta_w(\fPsi{w}{\gamma,\hat{J},\nu})$ is
  determined by its relevant parts.  As
  $\mathsf{gate}_{w}^{\psi,\hat{J},\nu} \equiv
  \theta_w(\fPsi{w}{\psi,\hat{J},\nu})$ by the induction hypothesis,
  $\mathsf{gate}_{w}^{\psi,\hat{J},\nu}$ and therefore also
  $\mathsf{gate}^{\psi,J,\nu}$ when newly created are determined by
  their relevant parts.  Finally,
  $\mathsf{gate}_{w\to w'}^{\psi,J,\nu}$ is determined by its relevant
  parts, as $\mathsf{gate}^{\psi,J,\nu}$ is only altered through the
  procedures of the interface presented in
  Section~\ref{subsubsec:state} (page~\pageref{def:API}).
  In the following, let $\chi$ be a direct subformula of $\psi$ and
  $H$ an interval of~$w'$.  We assume that the
  $(\chi,\hat{H})$-relevant part in $\fPsi{w}{\psi,\hat{J},\nu}$
  exists. Otherwise, there is nothing to prove.  Furthermore,
  $\Theta(\Psi)$ denotes the $(\chi,H)$-relevant part of the
  propositional formula~$\Psi$, and $\hat{\Theta}(\Psi)$ denotes its
  $(\chi,\hat{H})$-relevant part.
  
  There is a substitution~$\zeta$ such that
  $\Theta(\fPsi{w'}{\psi,J,\nu}) =
  \zeta\big(\hat{\Theta}(\fPsi{w}{\psi,\hat{J},\nu})\big)$. For
  instance, if $\psi=\neg\alpha$, then
  $\zeta=[\alpha^{\hat{H},\nu}\mapsto\alpha^{H,\nu}]$, and if
  $\psi=\alpha\until_I\beta$ and $\chi=\beta$, then
  $\zeta=\zeta_2\circ\zeta_1$, where $\circ$ denotes function
  composition, and $\zeta_1$ and $\zeta_2$ are the substitutions
  $[\alpha^{\hat{H},\nu}\mapsto\alpha^{H,\nu},\beta^{\hat{H},\nu}\mapsto\beta^{H,\nu},
  \tpp^{\hat{H}}\mapsto\tpp^H,
  \overline{\tpp}^{\hat{H}}\mapsto\overline{\tpp}^H]\cup
  [\tcp^{\hat{H},L}_I\mapsto\tcp^{H,L}_I\mid\text{$L$ is an interval
    in $w$}]$ and
  $[\tcp^{L,\hat{H}}_I\mapsto\tcp^{L,H}_I \mid\text{$L$ is an interval
    in $w'$}]$, respectively.
  We have that
  \begin{equation}
    \label{eq:relevantparts}
    \Theta\big(\delta(\fPsi{w'}{\psi,J,\nu})\big) =
    \delta\big(\Theta(\fPsi{w'}{\psi,J,\nu})\big) =
    \delta\Big(\zeta\big(\hat{\Theta}(\fPsi{w}{\psi,\hat{J},\nu})\big)\Big) =
    \zeta\Big(\theta_w\big(\hat{\Theta}(\fPsi{w}{\psi,\hat{J},\nu})\big)\Big) =
    \zeta\Big(\hat{\Theta}\big(\theta_{w}(\fPsi{w}{\psi,\hat{J},\nu})\big)\Big)
    \,.
  \end{equation}
  The first and the last equalities hold because a relevant part is
  determined even after some propositions have been replaced by
  Boolean constants. The other two equalities follow from the
  definitions.
  
  We remark that $\zeta$ is the substitution applied by
  \ls{AddTimePoint} to the relevant parts of
  $\mathsf{gate}^{\psi,J,\nu}$ when this state variable depends on
  some proposition $p$ with the interval~$K$.
  For instance, if $\psi=\alpha\until_I\beta$ and $\chi=\beta$, then
  $\zeta_1$ is the substitution applied to the \ls{anchor} variable
  and $\zeta_2$ is the substitution applied to the state variable,
  when $H\subseteq K$, in \ls{RefineUntil}
  (cf. Listing~\ref{lst:auxUNTIL}).
  Therefore, we obtain the semantic equivalence
  \begin{equation}
    \label{eq:refineuntil}
    \Theta(\mathsf{gate}_{w\to w'}^{\psi,J,\nu}) \equiv
    \zeta\big(\hat{\Theta}(\mathsf{gate}_{w}^{\psi,\hat{J},\nu})\big)
    \,.
  \end{equation}
  This equivalence holds even when
  $\mathsf{gate}_{w}^{\psi,\hat{J},\nu}$ does not depend on a
  proposition $p$ with the interval~$K$.  In this case,
  $\mathsf{gate}_{w\to w'}^{\psi,J,\nu} =
  \mathsf{gate}_{w}^{\psi,\hat{J},\nu}$ and there is no proposition in
  $\mathsf{gate}_{w}^{\psi,\hat{J},\nu}$ for $\zeta$ to substitute.

  The right-hand sides of the semantic equivalence
  in~(\ref{eq:refineuntil}) and of the right-most equality
  in~(\ref{eq:relevantparts}) are semantically equivalent by the
  induction hypothesis.  We conclude that the left-hand sides
  in~(\ref{eq:refineuntil}) and of the left-most equality
  in~(\ref{eq:relevantparts}) are also semantically equivalent.

  \emph{Transformation~(T2).}
  Let $K$ be the interval returned by \ls{DeltaT2($t$)}.
  The proof is similar to the (T3.2) case.  We only remark
  that we use
  $\fPsi{w'}{\gamma,J,\nu}\equiv[\tpp^K\mapsto\false](\fPsi{w}{\gamma,J,\nu})$
  in the base case of the corresponding induction.

  \emph{Transformation~(T3.1).}
  The proof is similar to the (T3.2) case and is therefore
  omitted.
\end{proof}

\section{Experimental Evaluation}
\label{sec:eval}

We have implemented the online algorithms for monitoring from 
Sections~\ref{sec:prop} and~\ref{sec:data} in a prototype tool,
written in the programming language Go~(\url{golang.org}).  In this
section, we experimentally evaluate the performance of our prototype
tool, focusing on the impact of different message orderings.  

\subsubsection*{Setup}

\begin{figure}[t]
  \centering
  \footnotesize
  \begin{gather}
    \tag{P1}
    \always 
    \freeze{cid}{c} \freeze{tid}{t} \freeze{sum}{a} 
    \mathit{trans}(c,t,a)\wedge a>2000 
    \rightarrow \eventually_{[0,3]} \mathit{report}(t)
    \\
    \tag{P2}
    \always 
    \freeze{cid}{c} \freeze{tid}{t} \freeze{sum}{a}{}
    \mathit{trans}(c,t,a) \land a>2000\to
    \always_{(0,5]}\freeze{tid}{t'} \freeze{sum}{a'} 
    \mathit{trans}(c,t',a') \rightarrow a'\leq 2000
    \\
    \tag{P3}
    \always 
    \freeze{cid}{c}{} \freeze{tid}{t} \freeze{sum}{a}{}
    \mathit{trans}(c,t,a) \land a>2000\to
    \big((\freeze{tid}{t'} \freeze{sum}{a'} \mathit{trans}(c,t',a')\rightarrow t=t')
    \weakuntil \mathit{report}(t)\big)
    \\
    \tag{P4}
    \always 
    \freeze{cid}{c}{} \freeze{tid}{t} \freeze{sum}{a}{}
    \mathit{trans}(c,t,a) \land a>2000\to
    \always_{[0,6]}
    \freeze{tid}{t'} \freeze{sum}{a'} \mathit{trans}(c,t',a')
    \to \eventually_{[0,3]} \mathit{report}(t') 
  \end{gather}
  \noindent\rule{\textwidth}{0.1pt}
  \begin{gather}
    \tag{P1$'$}
    \always 
    \mathit{transaction} \wedge \mathit{suspicious} \to 
    \eventually_{[0,3]} \mathit{report}
    \\
    \tag{P2$'$}
    \always 
    \mathit{transaction} \wedge \mathit{suspicious} \to
    \always_{(0,5]} \mathit{transaction} \to \neg\mathit{suspicious}
    \\
    \tag{P3$'$}
    \always 
    \mathit{transaction} \wedge \mathit{suspicious} \to
    \big(
      (\mathit{transaction}\rightarrow\eventually_{[0,3]}\mathit{report})
      \weakuntil 
      \mathit{unflag}
    \big)
    \\
    \tag{P4$'$}
    \always 
    \mathit{transaction} \wedge \mathit{suspicious} \to
    \always_{[0,6]}
    \mathit{transaction} \to \eventually_{[0,3]} \mathit{report}
  \end{gather}
  \caption{Formulas used in the experimental evaluation.}
  \label{fig:specifications}
\end{figure}
For our experimental evaluation, we use a standard desktop computer
with a 3.3\,GHz CPU (Intel Xeon E3-1230V2), 16\,GB of RAM, and the
Linux operating system (Ubuntu~16.04).  The prototype was compiled
with the Go compiler~1.10 and executed single-threaded.  Furthermore,
we use the formulas in Figure~\ref{fig:specifications}, which vary in
their temporal requirements and the data involved. (P1) to (P4)
express compliance policies from the banking domain and are variants
of policies that have been used in previous case
studies~\cite{Basin_etal:rv_mfotl}.  (P1$'$) to (P4$'$) are
propositional versions of (P1) to (P4), except (P3$'$), which has an
additional temporal connective and accounts for the additional
event~$\mathit{unflag}$.

In the following, we provide some intuition on (P1) to (P4).
We start by explaining the predicate symbols that model the events
that the banking system is assumed to log or transmit to the monitor.
The predicate $\mathit{trans}(c,t,a)$ represents the execution of the
transaction~$t$ of the customer~$c$ transferring the amount~$a$ of
money.  The predicate $\mathit{report}(t)$ represents the reporting of
the transaction~$t$, that is, $t$ is marked as suspicious.
Note that a message sent to the monitor describes an event and the
register values. For instance, when executing a transaction, the
registers $\mathit{tid}$ and $\mathit{cid}$ store the identifiers of
the transaction and the customer; the amount of the transaction is
stored in the register $\mathit{sum}$. For a $\mathit{report}$ event,
the register $\mathit{tid}$ stores the identifier of the transaction
whereas the other registers for the customer and the amount store the
default value~$0$.

The formula~(P1) requires that a transaction~$t$ of a customer~$c$
must be reported within at most three seconds if the transferred
amount~$a$ exceeds the threshold of \$2,000.
(P2) to~(P4) are variants of~(P1).  (P2) requires that whenever a
customer~$c$ makes a transaction that exceeds the threshold, then any
of $c$'s future transactions within the next five seconds must not
exceed the threshold.
(P3) requires that whenever a customer~$c$ makes a transaction~$t$
that exceeds the threshold, then $c$ is not allowed to make further
transactions until the transaction $t$ is reported. Note that the
syntactic sugar $\weakuntil$ (``weak until'') is used here instead of
the primitive temporal connective~$\until$. We do not require that the
transaction must eventually be reported.
(P4) requires that whenever a customer~$c$ makes a transaction that
exceeds the threshold, then any of $c$'s transactions within the next
six seconds must be reported within three seconds.

Finally, we synthetically generate log files.  Each log spans over 60
seconds and contains one event per time point, for instance,
corresponding to a single
transaction.  The number of events in a log is determined by the
\emph{event rate}, which is the approximate number of events per
second.  For each time point~$i$, with $0\leq i<60$, the number of
events with a timestamp in the time interval $[i,i+1)$ is randomly
chosen within $\pm$10\% of the event rate. For instance, a log with
event rate 100 comprises approximately 6,000 events.  The events and
their parameters are randomly chosen such that the number of
violations is in a provided range.
Note that when the monitor processes an event it performs several
state updates, which correspond to the
transformations~(T\ref{enum:observation_split}),~(T\ref{enum:observation_removal}),
and~(T\ref{enum:observation_data}): (1)~The monitor adds a new time
point with the event's timestamp, (2)~it may remove one or more
nonsingleton intervals for which the monitor will not receive any
events in the future, and (3)~it propagates data and truth values.
Since the messages can be received in any order by the monitor, it
must determine whether all events within a time period have
been received.  To this end, we attach to each event a sequence
number.  The monitor removes the nonsingleton interval $J$ if the
event's sequence number for the time point before $J$ is the
predecessor of the event's sequence for the time point after $J$. In
Section~\ref{sec:app}, we consider the general setting where the
monitor receives events from different sources.

\begin{figure}[t]
  \centering
  \subfigure[in-order (\MTLdata)]{\includegraphics[scale=.18]{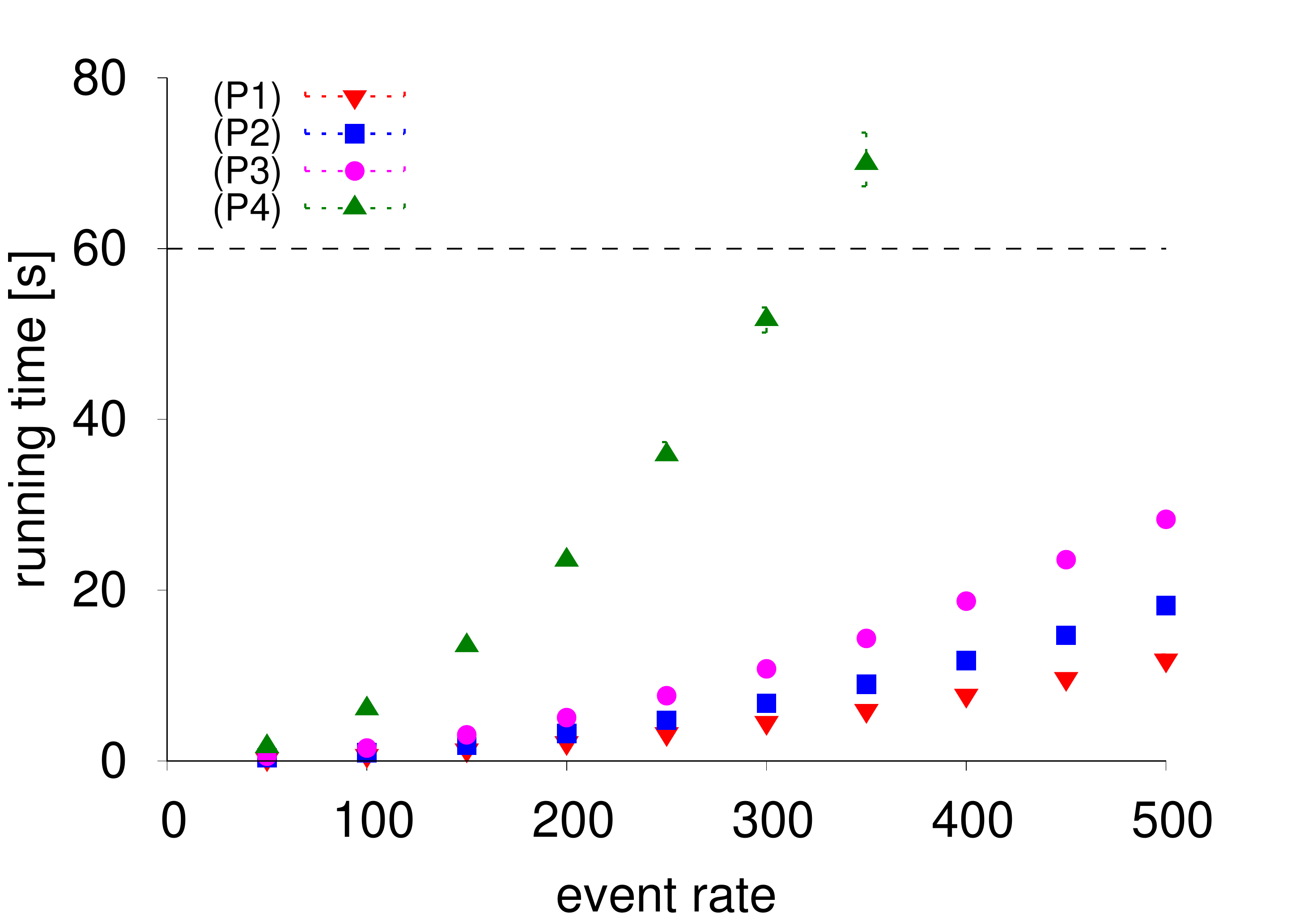}}
  \qquad\qquad
  \subfigure[out-of-order (\MTLdata)]{\includegraphics[scale=.18]{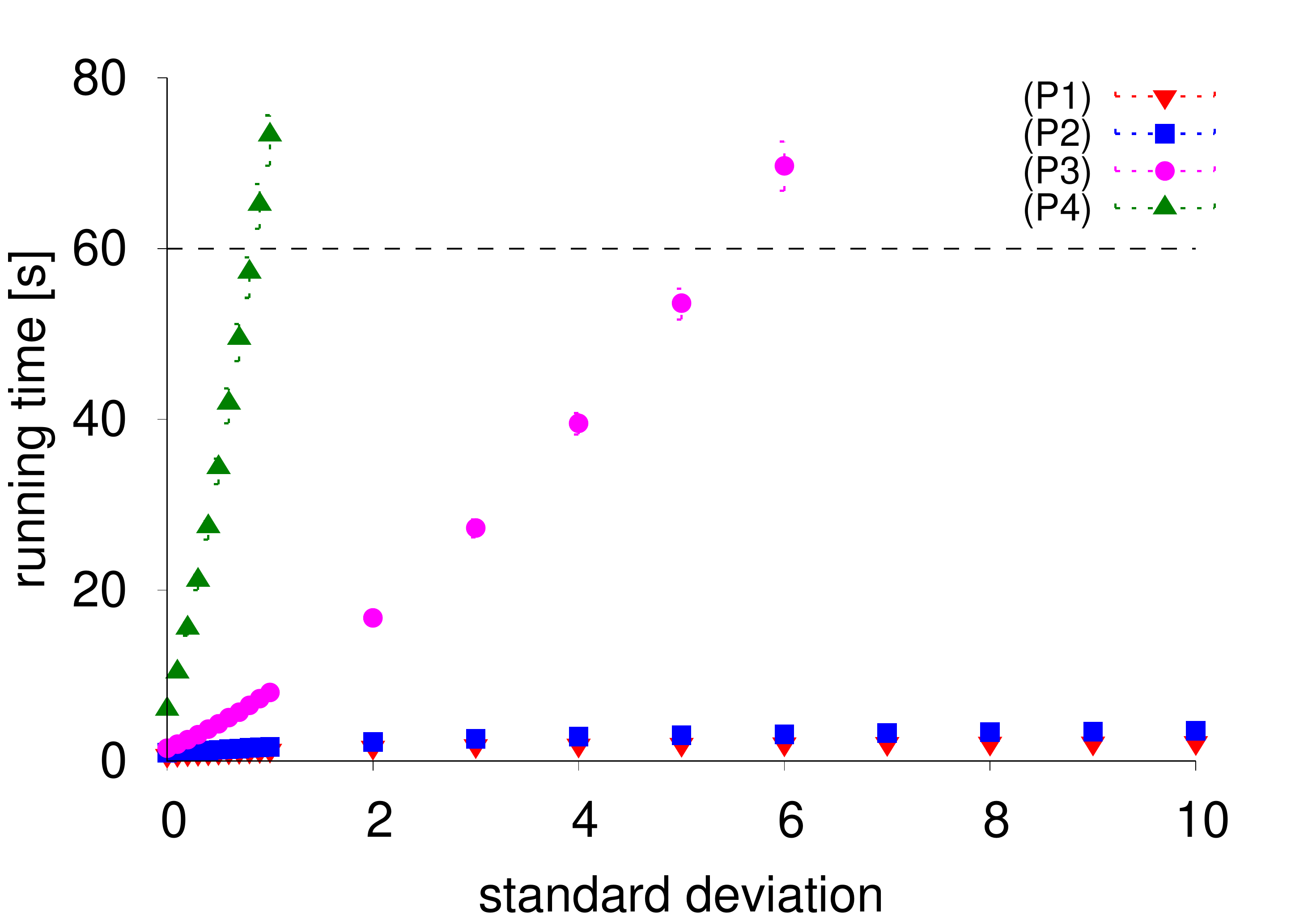}}

  \subfigure[in-order (MTL)]{\includegraphics[scale=.18]{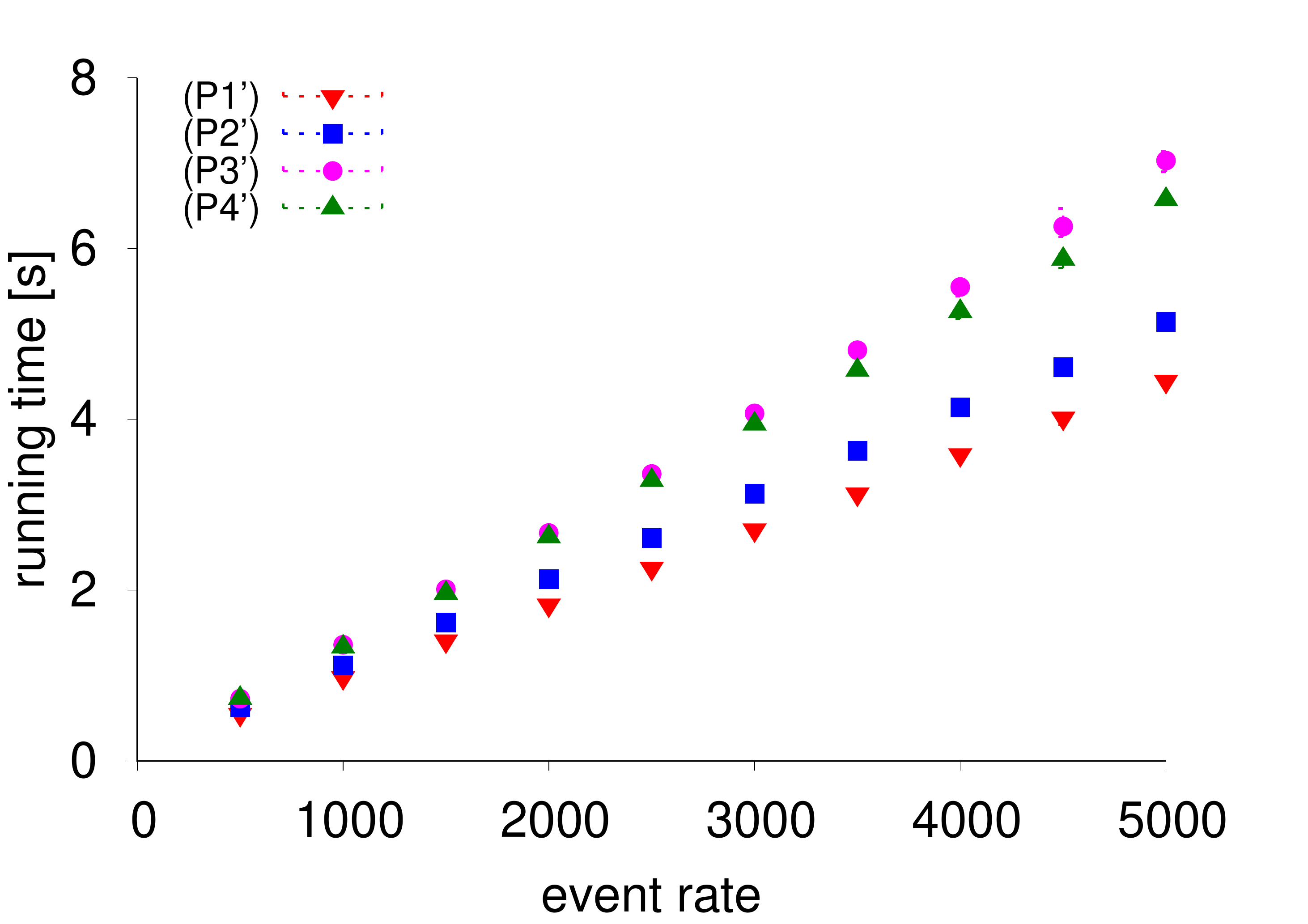}}
  \qquad\qquad
  \subfigure[out-of-order (MTL)]{\includegraphics[scale=.18]{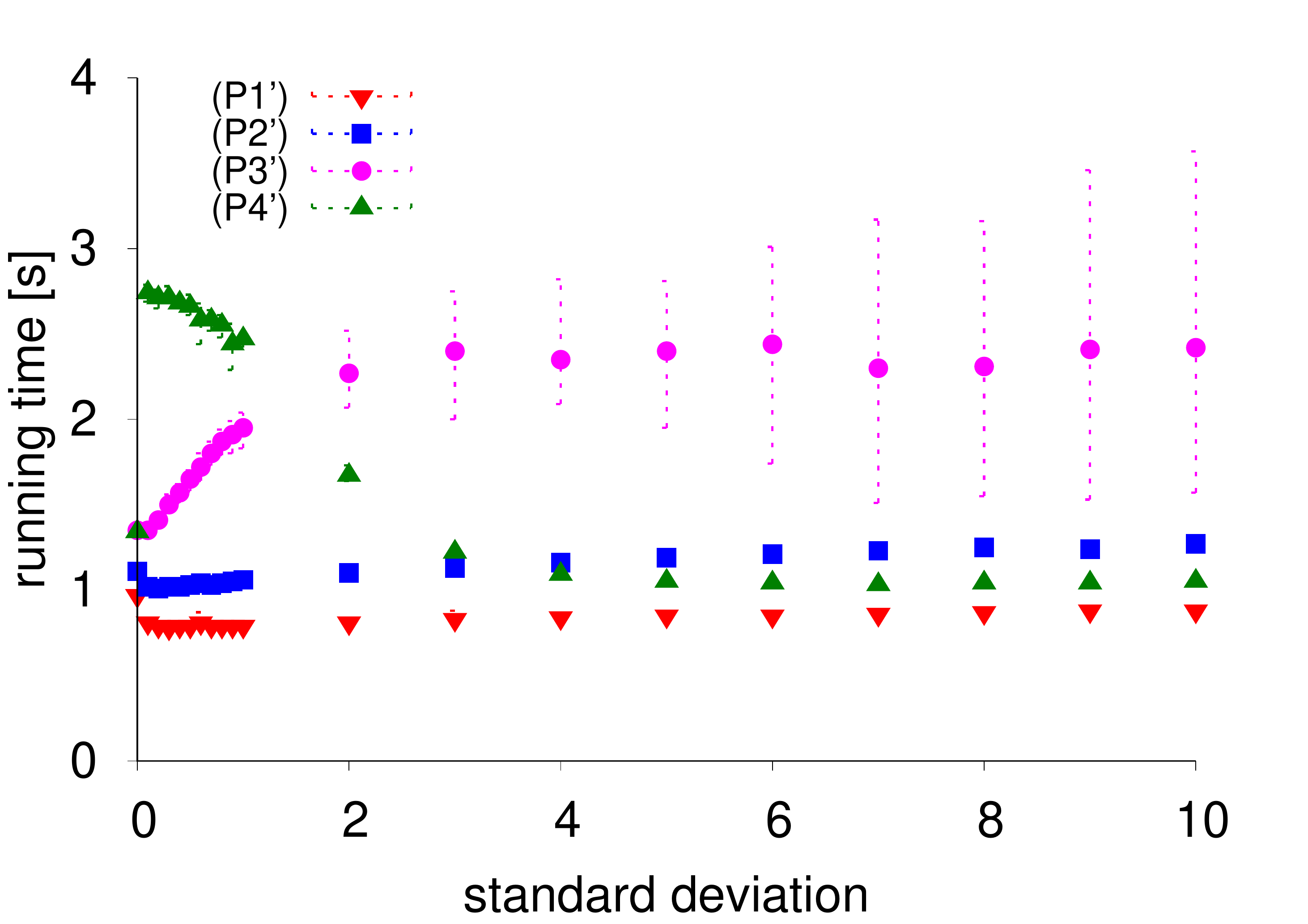}}

  \caption{Running times. Here each data point shows the average over five
    logs together with the minimum and maximum, which are very close
    to the average, except for (P3$'$) in (d).}
  \label{fig:performance}
\end{figure}

\subsubsection*{In-order Delivery}

In our first setting, messages are received ordered by their
timestamps and are never lost. Namely, all events of the log are
processed in the order of their timestamps. 
Figure~\ref{fig:performance}(a) shows the prototype's running times
for different event rates.  Note that each log spans 60~seconds and a
running time below 60 seconds essentially means that the events in the
log could have been processed online.  The dashed horizontal lines
mark this border. Memory usage does not exceed~50\,MB, except for (P4)
where it increases to around 300\,MB.

\subsubsection*{Out-of-order Delivery}

In our second setting, messages can arrive out of order, but they are
not lost. We control the degree of message arrival disruption as
follows.
For the events in a generated log file, we choose their arrival times,
which determine the order in which the monitor processes them.  An
event's arrival time is derived from the event's timestamp by
offsetting it by a random delay with respect to the normal
distribution with a mean of $\mu$ time units and a standard
deviation~$\sigma$.
Intuitively, the degree of ``out-of-orderness'' increases for larger
standard deviations.  For the degenerate case $\sigma=0$, the random
delay is $0$ and the reordered log is identical to the original log.
For $\sigma>0$, the random delay is, for example, between $\mu-\sigma$
and $\mu+\sigma$ with probability~$0.68$ and with probability~$0.95$
between $\mu-2\sigma$ and $\mu+2\sigma$. This means that for different
standard deviations $\sigma,\sigma'>0$, the random delays for~$\sigma$
are more likely spread over a larger range than for~$\sigma'$ when
$\sigma>\sigma'$, which in the end results in reordered logs where
the events are less ordered.  Finally, we remark that the choice of
$\mu$ does not impact the event reordering; with a large enough mean
$\mu$, the random delays are (most likely) positive.

Figure~\ref{fig:performance}(b) shows the prototype's running times on
logs with the fixed event rate 100 for different deviations, where
$\mu$ is fixed to $10$ and $\sigma$ ranges over different values
between $0$ and $10$.
For instance, for (P1), the logs are processed in under a second for
$\sigma=0$ and around two seconds for $\sigma=10$.  Memory usage stays
moderate for small deviations (below 100\,MB for $\sigma<1$), but can
increase significantly for larger deviations (almost 1\,GB for (P3)
with $\sigma=5$ and (P4) with $\sigma=1$).  Reasons for this are the
larger data structure and also the queued messages, since messages
arrive faster than they can be processed by the monitor.

\subsubsection*{Interpretation}

For~(P1) to~(P4), the running times are nonlinear in the event rate.
This is expected from Theorem~\ref{thm:decidable}.  The growth is
mainly caused by the data values occurring in the events.  A log with
a higher event rate contains more different data values and the
monitor's state must account for these. In particular, the graph-based
data structure contains multiple nodes for a subformula~$\gamma$ and
an interval~$J$, but different partial valuations~$\nu$.
As expected, (P1) is the easiest to monitor. In addition to the
outermost temporal connective $\always$, it only has a single temporal
connective with a three second bound and a single block of freeze
quantifiers.  (P4) is hardest to monitor, since it has two blocks of
freeze quantifiers and two bounded temporal connectives, which are
nested, resulting in a time window of nine seconds.
The running times increase when messages are received out of order,
which is also expected. For (P1) and (P2), however, the increase is
almost insignificant. In contrast, for (P3) and (P4), the running
times increase rapidly. This can be traced back to the formulas'
larger time window and the two blocks of freeze quantifiers.

In the propositional setting, the running times only increase linearly
with respect to the event rate and logs are processed significantly
faster.  Furthermore, the out-of-order delivery of events has only a minor
impact of the running times.  See Figures~\ref{fig:performance}(c)
and (d), where the event rate is one order of magnitude higher. Our
prototype processes most events in a fraction of a millisecond, and a
noticeable amount of the computation time is actually spent in parsing
the events.
However, some care must be taken when comparing the figures of the
propositional setting with the setting with data values.  First, the
formulas express different policies. For instance, in (P1$'$) and
(P4$'$) a report might discharge multiple transactions.  Second, the
logs for the propositional settings differ from the logs for the
formulas~(P1) to~(P4). In particular, the events in the log files
generated for the propositional settings do not account for different
customers.
Overall, one pays a price at runtime for the expressivity gain given
by the freeze quantifier.  This price can be traced back to the number
of nodes in the graph-based data structure that the monitor
maintains. For MTL, the number of nodes in the data structure for an
interval is bounded by the number of subformulas, whereas for
\MTLdata, the number of nodes for an interval is dominated by the
different data values that occur in the messages.

To put the experimental results in perspective, we also compare our
prototype with the MONPOLY tool~\cite{monpoly}.
MONPOLY's specification language is, like \MTLdata, a point-based
real-time logic.  It is richer than \MTLdata in that it admits
existential and universal quantification over domain
elements. However, MONPOLY specifications are syntactically restricted
in that temporal future connectives must be bounded (except for the
outermost connective $\always$).
Thus, (P3) does not have a counterpart in MONPOLY's specification
language.
MONPOLY handles the counterparts of (P1), (P2), and (P4) significantly
faster, up to three orders of magnitude.  In the propositional
setting, the running times only differ by a factor less than five.
Comparing the performance of both tools should, however, be taken with
a grain of salt.  First, while MONPOLY has undergone several rounds of
optimizations, our prototype is fairly unoptimized.  More significant,
MONPOLY only handles the restrictive setting where messages must be
received in order, and MONPOLY outputs violations for specifications
with (bounded) future only after all events in the relevant time
window are available, whereas our prototype outputs verdicts
promptly. For instance, for the formula $\always_{[0,3]} p$, if $p$
does not hold at the time point $i$ with timestamp~$\tau$, then our
prototype outputs the corresponding verdict directly after processing
the time point~$i$, whereas MONPOLY reports this violation at the
first time point with a timestamp larger than $\tau+3$.

In summary, our experimental evaluation shows that one pays a high
price to handle an expressive specification language together with
message delays. Nevertheless, our prototype's performance is
sufficient to monitor systems that generate hundreds of events per
second; in a propositional setting, the prototype already handles
several thousand events per second.  Furthermore, the prototype can be
used as a starting point for more efficient implementations.

\section{Monitoring Application}
\label{sec:app}

In this section, we describe a deployment of the online algorithms
presented for verifying distributed systems at runtime.  We first
describe the system design and the underlying system assumptions.  We
also discuss some practical aspects and consequences of our
deployment.

\subsection{Deployment}

We target distributed systems with multiple interacting components.
The objective is to determine at runtime whether the system's
behavior, as observed and reported by the components, satisfies or
violates a given specification~$\phi$ at some or all time points.

We sketch our system design, which extends the original system with an
additional monitoring component for~$\phi$, where $\phi$ is a closed
\MTLdata formula.  The original system components are instrumented
such that they report their performed actions to the monitoring
component by sending dedicated messages over a unidirectional channel.
Each such message also names the performing component and the
time. Furthermore, the message contains a sequence number.  That is,
each component maintains a counter, which counts the actions it has
performed so far, and includes the counter's value with every message
sent to the monitor.  With these numbers, the monitor can determine if
no action has been performed in a given interval (see
Section~\ref{subsubsec:sequencenumber} for details).
In addition to the messages that describe the performed actions, a
component can send ``alive'' messages.  They inform the monitoring
component that the respective component has not performed any action
for a while.
In summary, there are two types of messages:
$\mathit{action}(C,\tau,s,d)$ and $\mathit{alive}(C,\tau,s)$, where
$C$ is the component name, $\tau\in\Qpos$ the timestamp, $s$ the
component's sequence number, and $d$ a description of the performed
action.

Before providing further details and discussing the consequences of
this deployment, we list and comment on the assumptions on the
underlying system model.

\paragraph{{\sf A1}: The system is static.}  

This means that no system components are created or removed at
runtime.  Furthermore, the monitor is aware of the existence of all
the system components.  Note that this assumption can easily be
eliminated by building into our approach a mechanism to register
components before they become active and unsubscribing them when they
become inactive.  To register components we can, for example, use a
simple protocol where a component sends a registration request and
waits until it receives a message that confirms the registration.

\paragraph{{\sf A2}: Communication between components is asynchronous and
  unreliable.  However, messages are neither tampered with nor
  delivered to wrong components.}

Asynchronous, unreliable communication means that messages may be
received in an order different from which they were sent, and some
messages may be lost and therefore never received.  Note that message
loss covers the case where a system component crashes without
recovery.  A component that stops executing is indistinguishable to
other processes from one that stops sending messages or none of its
messages are received.  We explain in Section~\ref{subsubsec:recovery}
that it is also straightforward to handle the case where crashed
components can recover.  The assumption ruling out tampering and
improper delivery can be discharged in practice by adding information
to each message, such as a recipient identifier and a cryptographic
hash value, which are checked when receiving the message.

\paragraph{{\sf A3}: System components, including the monitor, are
  trustworthy.}

This means, in particular, that the components correctly report their
observations and do not send bogus messages.

\paragraph{{\sf A4}: Reported actions are consistent.}

This means that messages from components to the monitor do not
contradict each other.  For instance, there are never two messages to
the monitor such that one is saying that a proposition $p$ is true at a
time~$\tau\in\Qpos$ and the other one is saying that $p$ is false at
$\tau$.

\paragraph{{\sf A5}: The system components perform infinitely many
  actions in the limit.}

This guarantees that the observable system behavior is in the limit a
timed word. Note that \MTLdata specifies properties about infinite
system behavior. In particular, \MTLdata's three-valued semantics over
observations approximates infinite behavior as the interval of an
observation's last letter is unbounded and can always be refined.  We
would need to use another specification language if we want to express
properties about finite system behavior.  However, note that a monitor
is always aware of only a finite part of the observed system behavior.
Furthermore, since channels are unreliable and messages can be lost, a
monitor might even, in the limit, be aware only of a finite part of
the infinite system behavior.

\subsection{Discussion}
\label{subsec:discussion}

\subsubsection{State Updates}
\label{subsubsec:sequencenumber}

Each message may result in multiple updates of the monitor's state. A
message $\mathit{action}(C,\tau,s,d)$ results in adding a time point
with the timestamp $\tau$, the propagation of data and truth values,
and also the removal of nonsingleton intervals.
A message $\mathit{alive}(C,\tau,s)$ may result in the removal of
nonsingleton intervals, which in turn may trigger the propagation of
truth values.

With the messages' sequence numbers, the monitor can infer which
intervals can be removed.  When monitoring a single system component,
this inference is obvious.
We sketch the general case when monitoring a system with the
components $C_0,\dots,C_m$.  Let $J_0,\dots,J_n$ be the nonsingleton
intervals of the letters in an observation. The monitor labels each of
these intervals with a set $S_{J_j}$ of the components from which it
may receive an $\mathit{action}$ message with a timestamp in $J_j$ in
the future.
Additionally, the monitor maintains for each component $C_i$ triples
of the form $(s,I,s')$, where $I$ is an interval and $s,s'\in\Nat$
with $s\leq s'$. The intuition is that all $\mathit{action}$ messages
from $C_i$ with a timestamp in $I$ have been received by the monitor,
and $s$ and $s'$ are the smallest and largest sequence number of these
messages, respectively.  The monitor adds a triple $(t,\{\tau\},t)$
when receiving from $C_i$ a message with the timestamp $\tau$ and the
sequence number~$t$.  The monitor also merges triples when
possible. For example, the triples $(s,I,s')$ and $(t,\{\tau\},t)$
with $t=s-1$ or $t=s'+1$ are merged into the triple
$(\min\{t,s\},I\Cup \{\tau\},\max\{t,s'\})$, where $I\Cup\{\tau\}$ is
the smallest interval that contains $I$ and $\{\tau\}$.
Whenever one of the intervals~$J_j$ is a subset of the interval of
such a triple, the monitor removes $C_i$ from the set~$S_{J_j}$.  When
$S_{J_j}$ becomes empty, the monitor removes the letter with the
interval~$J_j$ from the observation.
Note that the intervals $J_0,\dots,J_n$ can be ordered and stored in a
balanced search tree. Analogously, the triples can be ordered and also
stored in balanced search trees with pointers to their predecessors
and successors.

\subsubsection{Accuracy of Timestamps}
\label{subsubsec:accuracy}

The monitor's verdicts are computed with respect to the information in
messages that the monitor receives from the system components.  Even
though we assume trustworthy system components (A3), their
observations might not match with the actual system behavior.  In
particular, the timestamp~$\tau$ in a message may be inaccurate
because $\tau$ comes from the clock of a system component that has
drifted from the actual time.  One may wonder in what sense are the
verdicts meaningful.

Consider first the guarantees we have under the additional system
assumption that timestamps are precise and from the domain~$\Qpos$.
Under this assumption, $w_i\sqsubseteq w$, for all $i\in\Nat$, where
the $w_i$'s are observations describing the reported system behavior
and $w$ is a timed word that represents the real system behavior.  It
follows from Theorem~\ref{thm:conservative} that the verdicts computed
from the reported system behavior~$w_i$ are also valid for the system
behavior~$w$.

Assuming precise timestamps is, however, a strong assumption, which does
not hold in practice, since real clocks are imprecise.  Moreover, each
system component uses its local clock to timestamp its messages, and
these clocks might differ due to clock drifts.  In fact, assuming
synchronized clocks boils down to having a synchronized system at
hand.
Nevertheless, we argue that for many kinds of specifications and
systems, relying on timestamps from existing clocks in monitoring is
good enough in practice. First, under stable conditions (like
temperature), state-of-the-art hardware clocks already achieve a high
accuracy and their drifts are, even over a longer time period, rather
small~\cite{Cristian_Fetzer:timed_model}.  Moreover, there are
protocols like the Network Time Protocol (NTP)~(see \url{www.ntp.org})
for synchronizing clocks in distributed systems that work well in
practice. For local area networks, NTP can maintain synchronization of
clocks within one millisecond~\cite{Mills:NTP_improved}.  Overall,
with state-of-the-art techniques, we can obtain timestamps that are
``accurate enough'' for many monitoring applications, for instance,
for checking whether deadlines are met when the deadlines are in the
order of seconds or even milliseconds.  Furthermore, if the monitored
system guarantees an upper bound on the imprecision of timestamps, we
can often account for this imprecision in the specification.  For
example, for checking at runtime that requests are acknowledged within
100 milliseconds, when the imprecision between two clocks is always
less than a millisecond, we can use the formula
$\always \mathit{req}\rightarrow
\once_{[0,1)}\eventually_{[0,101)}\mathit{ack}$ to avoid false alarms.

\subsubsection{Component Crashes}
\label{subsubsec:recovery}

When a system component crashes, its state is lost. For recovery, we
must bring the component into a state that is safe for the system.
To safely restart a system component that is not the monitor, we must
restore its sequence number.  We can use any persistent storage
available to store this number. In case the component crashes while
storing this number, we can increment the restored number by one. This
might result in knowledge gaps for the monitor, since some intervals
will never be identified as complete. However, the computed verdicts
are still sound.
For the recovery of a crashed monitor, we just need to initialize it.
A recovered monitor corresponds to a monitor that has not yet received
any message.  This is safe in the sense that the recovered monitor
will only output sound verdicts.  When the monitor also logs received
messages in a persistent storage, it can replay them to close some of
its knowledge gaps. Note that the order in which these messages are
replayed is irrelevant, and they can even be replayed whenever the
recovered monitor is idle.

\section{Related Work}
\label{sec:related}

In this section, we examine related work. Our focus is on system
verification, in particular, runtime verification, a well-established
area for checking at runtime whether a system's execution fulfills a
given specification.  We structure our discussion along the aspects of
multiple truth values, data values, and distributed systems.

\subsubsection*{Multi-valued Semantics}

Multi-valued semantics for temporal logics are widely used in runtime
verification, see for
example,~\cite{Bauer_etal:rv_tltl,Bauer_Falcone:decentralised_monitor,Scheffel_Schmitz:asynchronous_distributed_rv,Mostafa_Bonakdarbour:decentralized_rv}.
Their semantics extend the classical LTL semantics by also assigning
non-Boolean truth values to finite prefixes of infinite
words~\cite{Bauer_etal:ltl_rv}.  The additional truth values
differentiate when neither some nor all extensions of a finite word
satisfy a formula.  However, in contrast to the presented three-valued
semantics of \MTLdata used in this paper, the Boolean and temporal
connectives are not extended over the additional truth values.
Furthermore, the partial order $\prec$ on the truth values, which
orders them in knowledge, is not considered.  Note that having the
third truth value~$\bot$ at the logic's object level and the partial
order~$\prec$ is at the core of our three-valued semantics for
\MTLdata and our monitoring approach; namely, it is used to account
for a monitor's knowledge gaps.  Another difference is that a
formula's truth value is not defined by the possible extensions of a
finite word. As pointed out in Remark~\ref{rem:correctness}, including
the possible extensions can render monitoring infeasible.

The monitoring approaches by \citeetal{Garg_etal:policy_incomplete}
and \citeetal{BasinKMZ-RV12}, both targeting the auditing of policies
on system logs, also account for knowledge gaps, that is, logs that
may not contain all the actions performed by a system.  Both
approaches handle rich policy specification languages with first-order
quantification and a three-valued semantics.
Garg et al.'s approach~\citeyearpar{Garg_etal:policy_incomplete},
which is based on formula rewriting, is, however, not suited for
online use, since it does not process logs incrementally.  It also
only accounts for knowledge gaps in a limited way, namely, the
interpretation of a predicate symbol cannot be partially unknown, for
example, for certain time periods. Furthermore, their approach is not
complete.
Basin et al.'s approach~\citeyearpar{BasinKMZ-RV12}, which is based on
their prior work~\cite{Basin_etal:rv_mfotl}, can be used online.
However, the problem of how to output verdicts incrementally as prior
knowledge gaps are resolved is not addressed, and thus it does not
deal with out-of-order events.  Moreover, the semantics of the
specification language handled does not reflect a monitor's partial
view about the system behavior. Instead, it is given for infinite data
streams that represent system behavior in the limit.
The runtime-verification approach by
\citeetal{Stoller_etal:RV_stateest} also accounts for gaps in
traces. These gaps are, however, caused by sampling the state of the
monitored system to reduce the runtime-verification overhead and trace
elements are processed ordered.  Furthermore, their approach is not
based on a multi-valued semantics for a temporal logic. Instead, an
{a priori} trained model (namely, a hidden Markov model) for
estimating the likelihood of missing trace elements is used to compute
the probability of the specification's satisfaction.

Multi-valued semantics for temporal logics have also been considered
in other areas of system verification.  For instance,
\citeetal{Chechik_etal:mv_mc} describe a model-checking approach for a
multi-valued extension for the branching-time temporal logic CTL.
Their CTL extension is similar to our  \MTLdata extension in 
that it allows one to reason about uncertainty at the logic's
object level. However, the task they consider is different from
ours. Namely, in model checking, the system model is given---usually
finite-state---and correctness is checked offline with respect to the
model's described executions; in contrast, in runtime verification,
one checks online the correctness of the observed system behavior.
The three-valued semantics for LTL provided by
\citeetal{Godefroid_Piterman:gmc} is also related to our three-valued
semantics for \MTLdata.  It is, however, based on infinite words, not
observations (Definition~\ref{def:observation}).  Similar to
(T\ref{enum:observation_data}) of Definition~\ref{def:observation}, a
proposition with the truth value~$\unknown$ at a position can be
refined by $\true$ or $\false$.  In contrast, their semantics does not
support refinements that add and delete letters
(cf.~(T\ref{enum:observation_split})
and~(T\ref{enum:observation_removal}) of
Definition~\ref{def:observation}).

\subsubsection*{Data Values}

\citeetal{Havelund_etal:rvbook_data} overview and compare different
runtime-verification approaches that allow one to reason online about
data values in event streams. Among them are parametric
runtime-verification
approaches~\cite{Rosu_Chen:parametric_rv,Barringer_etal:qea} and
approaches that handle first-order extensions of temporal
logics~\cite{Basin_etal:rv_mfotl,Halle_Villemaire:fo_ltl}. Those
approaches share some similarities to our approach, in particular, how
the freeze quantifier is used to reason about data values.
As explained in Example~\ref{ex:mtl}, the freeze quantifier can be seen
as a weak form of the standard first-order quantifiers.
Although the first-order extensions are more expressive than \MTLdata,
the expressiveness of \MTLdata seems sufficient for many
runtime-verification applications because the data values often appear
uniquely in the events.  Handling specification languages with
first-order quantification like MFOTL~\cite{Basin_etal:rv_mfotl} in
settings with only partial knowledge and out-of-order event streams is
nontrivial and various restrictions seem to be
necessary~\cite{BasinKMZ-RV12}.
In a nutshell, in parametric runtime verification, one slices a single
event stream according to the events' data values in multiple streams,
which are then monitored separately and checked against propositional
specifications~\cite{Rosu_Chen:parametric_rv} or nonpropositional
specifications, as for instance, quantified event
automata~\cite{Barringer_etal:qea}.  The bindings of the data values
within the sliced event streams are implicit in most of those
approaches, and the slicing criteria is hard-coded in the monitoring
algorithm. In contrast, the freeze quantifier explicitly binds the
data values to logical variables.  Furthermore, our monitoring
algorithm for \MTLdata processes a single event stream.

\citeetal{Feng_etal:MTL_data} define a similar extension of MTL with
the freeze quantifier as in \MTLdata.  Their analysis focuses on the
computational complexity of the path-checking problem. However, they
use a finite trace semantics, which is less suitable for runtime
verification.
\citeetal{Brim_etal:stlstar},
and~\citeetal{Ryckbosch_Diwan:performancetraces} also provide
extensions of LTL with the freeze quantifier together with monitoring
algorithms.
Note that the rule-based runtime-verification approach
EAGLE~\cite{Barringer_etal:eagle} already allowed one, similar to the
freeze quantifier, to freeze data values in events to variables.
Neither \citeetal{Feng_etal:MTL_data}, \citeetal{Brim_etal:stlstar},
\citeetal{Ryckbosch_Diwan:performancetraces}, nor
\citeetal{Barringer_etal:eagle} consider out-of-order messages and
knowledge gaps in event streams.
Finally, \citeetal{Demri:2009} analyze the complexity of the
satisfiability problem of LTL extended with the freeze quantifier. In
particular, they provide translations of restricted fragments to
register automata. Applications to runtime verification are not
explored.

\subsubsection*{Distributed Systems}

Several runtime-verification approaches have been developed for
distributed systems.  \citeetal{Francalanza_etal:rvbook_dist} provide
an overview and we limit ourselves here to those approaches that are
closely related to ours.  Overall, all of them make different
assumptions on the system model and thus target different kinds of
distributed systems.  Furthermore, they handle different specification
languages.  We are not aware of any approach in the literature that
handles specifications with real-time constraints or accounts for
network failures.

\citeetal{ColomboFalcone:global_clock} propose a runtime-verification
approach, based on formula rewriting, that also allows the monitor to
receive messages out of order.  Their approach only handles the
propositional temporal logic LTL with the three-valued semantics
proposed by \citeetal{Bauer_etal:ltl_rv}.  In a nutshell, their
approach unfolds temporal connectives as time progresses and special
propositions act as placeholders for subformulas.  The subsequent
assignment of these placeholders to Boolean truth values triggers the
reevaluation and simplification of the formula.  Their approach only
guarantees soundness but not completeness, since the simplification
rules used for formula rewriting are incomplete.  Finally, its
performance with respect to out-of-order messages is not evaluated.

\citeetal{Sen_etal:decentralized_distributed_monitoring} use an LTL
variant with epistemic operators to express distributed knowledge.
The verdicts output by the monitors are correct with respect to the
local knowledge the monitors obtained about the systems' behavior.
Since their LTL variant only has temporal connectives that refer to
the past, only safety properties are expressible.
\citeetal{Scheffel_Schmitz:asynchronous_distributed_rv} extend this
work to handle also some liveness properties by working with a richer
fragment of LTL that includes temporal connectives that refer to the
future.  The algorithm by
\citeetal{Bauer_Falcone:decentralised_monitor} assumes a lock-step
semantics and thus only applies to synchronous systems.
\citeetal{Falcone_etal:decentralized_monitor_regular} weaken this
assumption. However, each component must still output its observations
at each time point, which is determined by a global clock.  The
observations are then received by the monitors at possibly later time
points.  The algorithm by
\citeetal{Mostafa_Bonakdarbour:decentralized_rv} assumes lossless FIFO
channels for asynchronous communication.  Logical clocks are used to
partially order messages.

\subsubsection*{Miscellaneous}

The problem of processing streams in which events may appear
out of order has also been considered in contexts other than runtime
verification, namely, in stream processing.  For example,
\citeetal{Srivastava_Widom:datastreams} use buffering and heartbeats
so that continuous queries are evaluated correctly under the
assumption that the heartbeats are sufficiently large.  Various
parameters are considered to generate the heartbeats.  However,
queries are not processed promptly, but always with a delay.
\citeetal{Li_etal:out-of-order} propose a stream-processing
architecture with a global mechanism that reports progress and allows
one to finalize a partial evaluation of a query on a time window.
Events are processed promptly.  The messages' sequence numbers, which
we use to determine whether the monitor may be missing a message from
a system component in some time period, can been seen as such a global
mechanism.

\section{Conclusion}
\label{sec:concl}

We have presented a runtime-verification approach based on three truth
values to checking real-time specifications given as \MTLdata
formulas.  Our approach targets distributed systems and handles the
practically relevant setting where the messages sent to monitors can
be delayed, reordered, or lost, and it provides soundness and
completeness guarantees.
Although our experimental evaluation is promising, our approach does
not yet scale to monitoring systems that generate thousands or even
millions of events per second.  This requires additional research,
including algorithmic optimizations.  We plan to investigate this in
future work, as well as to deploy and evaluate our approach in
realistic, large-scale case studies.

\begin{acks}
  This work received funding from the European Union's Horizon~2020
  research and innovation programme under the grant agreement No~779852.
\end{acks}


\begin{thebibliography}{42}


\ifx \showCODEN    \undefined \def \showCODEN     #1{\unskip}     \fi
\ifx \showDOI      \undefined \def \showDOI       #1{#1}\fi
\ifx \showISBNx    \undefined \def \showISBNx     #1{\unskip}     \fi
\ifx \showISBNxiii \undefined \def \showISBNxiii  #1{\unskip}     \fi
\ifx \showISSN     \undefined \def \showISSN      #1{\unskip}     \fi
\ifx \showLCCN     \undefined \def \showLCCN      #1{\unskip}     \fi
\ifx \shownote     \undefined \def \shownote      #1{#1}          \fi
\ifx \showarticletitle \undefined \def \showarticletitle #1{#1}   \fi
\ifx \showURL      \undefined \def \showURL       {\relax}        \fi
\providecommand\bibfield[2]{#2}
\providecommand\bibinfo[2]{#2}
\providecommand\natexlab[1]{#1}
\providecommand\showeprint[2][]{arXiv:#2}

\bibitem[\protect\citeauthoryear{Alur and Henzinger}{Alur and
  Henzinger}{1992}]%
        {AlurHenzinger:realtimelogics_survey}
\bibfield{author}{\bibinfo{person}{Rajeev Alur} {and}
  \bibinfo{person}{Thomas~A. Henzinger}.} \bibinfo{year}{1992}\natexlab{}.
\newblock \showarticletitle{Logics and Models of Real Time: A Survey}. In
  \bibinfo{booktitle}{\emph{Proceedings of the 1991 REX Workshop on Real Time:
  Theory in Practice}} \emph{(\bibinfo{series}{Lect.\@ Notes Comput.\@
  Sci.\@})}, Vol.~\bibinfo{volume}{600}. \bibinfo{publisher}{Springer},
  \bibinfo{address}{Berlin, Heidelberg}, \bibinfo{pages}{74--106}.
\newblock


\bibitem[\protect\citeauthoryear{Barringer, Falcone, Havelund, Reger, and
  Rydeheard}{Barringer et~al\mbox{.}}{2012}]%
        {Barringer_etal:qea}
\bibfield{author}{\bibinfo{person}{Howard Barringer},
  \bibinfo{person}{Yli{\`e}s Falcone}, \bibinfo{person}{Klaus Havelund},
  \bibinfo{person}{Giles Reger}, {and} \bibinfo{person}{David~E. Rydeheard}.}
  \bibinfo{year}{2012}\natexlab{}.
\newblock \showarticletitle{Quantified Event Automata: Towards Expressive and
  Efficient Runtime Monitors}. In \bibinfo{booktitle}{\emph{Proceedings of the
  18th International Symposium on Formal Methods (FM)}}
  \emph{(\bibinfo{series}{Lect.\@ Notes Comput.\@ Sci.\@})},
  Vol.~\bibinfo{volume}{7436}. \bibinfo{publisher}{Springer},
  \bibinfo{address}{Berlin, Heidelberg}, \bibinfo{pages}{68--84}.
\newblock


\bibitem[\protect\citeauthoryear{Barringer, Goldberg, Havelund, and
  Sen}{Barringer et~al\mbox{.}}{2004}]%
        {Barringer_etal:eagle}
\bibfield{author}{\bibinfo{person}{Howard Barringer}, \bibinfo{person}{Allen
  Goldberg}, \bibinfo{person}{Klaus Havelund}, {and} \bibinfo{person}{Koushik
  Sen}.} \bibinfo{year}{2004}\natexlab{}.
\newblock \showarticletitle{Rule-Based Runtime Verification}. In
  \bibinfo{booktitle}{\emph{Proceedings of the 5th International Conference on
  Verification, Model Checking and Abstract Interpretation (VMCAI)}}
  \emph{(\bibinfo{series}{Lect.\@ Notes Comput.\@ Sci.\@})},
  Vol.~\bibinfo{volume}{2937}. \bibinfo{publisher}{Springer},
  \bibinfo{address}{Berlin, Heidelberg}, \bibinfo{pages}{44--57}.
\newblock


\bibitem[\protect\citeauthoryear{Basin, Harvan, Klaedtke, and
  Z\u{a}linescu}{Basin et~al\mbox{.}}{2012}]%
        {monpoly}
\bibfield{author}{\bibinfo{person}{David Basin}, \bibinfo{person}{Mat\'u\v{s}
  Harvan}, \bibinfo{person}{Felix Klaedtke}, {and} \bibinfo{person}{Eugen
  Z\u{a}linescu}.} \bibinfo{year}{2012}\natexlab{}.
\newblock \showarticletitle{{MONPOLY}: Monitoring Usage-control Policies}. In
  \bibinfo{booktitle}{\emph{Proceedings of the 2nd International Conference on
  Runtime Verification (RV)}} \emph{(\bibinfo{series}{Lect.\@ Notes Comput.\@
  Sci.\@})}, Vol.~\bibinfo{volume}{7186}. \bibinfo{publisher}{Springer},
  \bibinfo{address}{Berlin, Heidelberg}, \bibinfo{pages}{360--364}.
\newblock


\bibitem[\protect\citeauthoryear{Basin, Klaedtke, Marinovic, and
  Z\u{a}linescu}{Basin et~al\mbox{.}}{2013}]%
        {BasinKMZ-RV12}
\bibfield{author}{\bibinfo{person}{David Basin}, \bibinfo{person}{Felix
  Klaedtke}, \bibinfo{person}{Srdjan Marinovic}, {and} \bibinfo{person}{Eugen
  Z\u{a}linescu}.} \bibinfo{year}{2013}\natexlab{}.
\newblock \showarticletitle{Monitoring Compliance Policies over Incomplete and
  Disagreeing Logs}. In \bibinfo{booktitle}{\emph{Proceedings of the 3rd
  International Conference on Runtime Verification (RV)}}
  \emph{(\bibinfo{series}{Lect.\@ Notes Comput.\@ Sci.\@})},
  Vol.~\bibinfo{volume}{7687}. \bibinfo{publisher}{Springer},
  \bibinfo{address}{Berlin, Heidelberg}, \bibinfo{pages}{151--167}.
\newblock


\bibitem[\protect\citeauthoryear{Basin, Klaedtke, M\"uller, and
  Z\u{a}linescu}{Basin et~al\mbox{.}}{2015b}]%
        {Basin_etal:rv_mfotl}
\bibfield{author}{\bibinfo{person}{David Basin}, \bibinfo{person}{Felix
  Klaedtke}, \bibinfo{person}{Samuel M\"uller}, {and} \bibinfo{person}{Eugen
  Z\u{a}linescu}.} \bibinfo{year}{2015}\natexlab{b}.
\newblock \showarticletitle{Monitoring Metric First-Order Temporal Properties}.
\newblock \bibinfo{journal}{\emph{J.\@ ACM}} \bibinfo{volume}{62},
  \bibinfo{number}{2}, Article \bibinfo{articleno}{15} (\bibinfo{year}{2015}),
  \bibinfo{numpages}{45}~pages.
\newblock


\bibitem[\protect\citeauthoryear{Basin, Klaedtke, and Z\u{a}linescu}{Basin
  et~al\mbox{.}}{2015a}]%
        {Basin_etal:failureaware_rv}
\bibfield{author}{\bibinfo{person}{David Basin}, \bibinfo{person}{Felix
  Klaedtke}, {and} \bibinfo{person}{Eugen Z\u{a}linescu}.}
  \bibinfo{year}{2015}\natexlab{a}.
\newblock \showarticletitle{Failure-aware Runtime Verification of Distributed
  Systems}. In \bibinfo{booktitle}{\emph{Proceedings of the 35th International
  Conference on Foundations of Software Technology and Theoretical Computer
  Science (FSTTCS)}} \emph{(\bibinfo{series}{Leibniz International Proceedings
  in Informatics (LIPIcs)})}, Vol.~\bibinfo{volume}{45}.
  \bibinfo{publisher}{Leibniz Center for Informatics},
  \bibinfo{address}{Schloss Dagstuhl}, \bibinfo{pages}{590--603}.
\newblock


\bibitem[\protect\citeauthoryear{Basin, Klaedtke, and Z\u{a}linescu}{Basin
  et~al\mbox{.}}{2017}]%
        {Basin_etal:outoforder}
\bibfield{author}{\bibinfo{person}{David Basin}, \bibinfo{person}{Felix
  Klaedtke}, {and} \bibinfo{person}{Eugen Z\u{a}linescu}.}
  \bibinfo{year}{2017}\natexlab{}.
\newblock \showarticletitle{Runtime Verification of Temporal Properties over
  Out-of-order Data Streams}. In \bibinfo{booktitle}{\emph{Proceedings of the
  29th International Conference on Computer Aided Verification (CAV)}}
  \emph{(\bibinfo{series}{Lect.\@ Notes Comput.\@ Sci.\@})},
  Vol.~\bibinfo{volume}{10426}. \bibinfo{publisher}{Springer},
  \bibinfo{address}{Cham}, \bibinfo{pages}{356--376}.
\newblock


\bibitem[\protect\citeauthoryear{Bauer and Falcone}{Bauer and Falcone}{2016}]%
        {Bauer_Falcone:decentralised_monitor}
\bibfield{author}{\bibinfo{person}{Andreas Bauer} {and}
  \bibinfo{person}{Yli{\`e}s Falcone}.} \bibinfo{year}{2016}\natexlab{}.
\newblock \showarticletitle{Decentralised {LTL} Monitoring}.
\newblock \bibinfo{journal}{\emph{Form.\@ Methods Syst.\@ Des.\@}}
  \bibinfo{volume}{48}, \bibinfo{number}{1--2} (\bibinfo{year}{2016}),
  \bibinfo{pages}{46--93}.
\newblock


\bibitem[\protect\citeauthoryear{Bauer, K{\"{u}}ster, and Vegliach}{Bauer
  et~al\mbox{.}}{2015}]%
        {Bauer-FMSD15}
\bibfield{author}{\bibinfo{person}{Andreas Bauer},
  \bibinfo{person}{Jan{-}Christoph K{\"{u}}ster}, {and} \bibinfo{person}{Gil
  Vegliach}.} \bibinfo{year}{2015}\natexlab{}.
\newblock \showarticletitle{The ins and outs of first-order runtime
  verification}.
\newblock \bibinfo{journal}{\emph{Form.\@ Methods Syst.\@ Des.\@}}
  \bibinfo{volume}{46}, \bibinfo{number}{3} (\bibinfo{year}{2015}),
  \bibinfo{pages}{286--316}.
\newblock


\bibitem[\protect\citeauthoryear{Bauer, Leucker, and Schallhart}{Bauer
  et~al\mbox{.}}{2010}]%
        {Bauer_etal:ltl_rv}
\bibfield{author}{\bibinfo{person}{Andreas Bauer}, \bibinfo{person}{Martin
  Leucker}, {and} \bibinfo{person}{Christian Schallhart}.}
  \bibinfo{year}{2010}\natexlab{}.
\newblock \showarticletitle{Comparing {LTL} Semantics for Runtime
  Verification}.
\newblock \bibinfo{journal}{\emph{J.\@ Logic Comput.\@}} \bibinfo{volume}{20},
  \bibinfo{number}{3} (\bibinfo{year}{2010}), \bibinfo{pages}{651--674}.
\newblock


\bibitem[\protect\citeauthoryear{Bauer, Leucker, and Schallhart}{Bauer
  et~al\mbox{.}}{2011}]%
        {Bauer_etal:rv_tltl}
\bibfield{author}{\bibinfo{person}{Andreas Bauer}, \bibinfo{person}{Martin
  Leucker}, {and} \bibinfo{person}{Christian Schallhart}.}
  \bibinfo{year}{2011}\natexlab{}.
\newblock \showarticletitle{Runtime Verification for {LTL} and {TLTL}}.
\newblock \bibinfo{journal}{\emph{ACM Trans.\@ Softw.\@ Eng.\@ Meth.\@}}
  \bibinfo{volume}{20}, \bibinfo{number}{4}, Article \bibinfo{articleno}{14}
  (\bibinfo{year}{2011}), \bibinfo{numpages}{64}~pages.
\newblock


\bibitem[\protect\citeauthoryear{Brim, Dluhos, Safr{\'a}nek, and
  Vejpustek}{Brim et~al\mbox{.}}{2014}]%
        {Brim_etal:stlstar}
\bibfield{author}{\bibinfo{person}{Lubos Brim}, \bibinfo{person}{Petr Dluhos},
  \bibinfo{person}{David Safr{\'a}nek}, {and} \bibinfo{person}{Thomas
  Vejpustek}.} \bibinfo{year}{2014}\natexlab{}.
\newblock \showarticletitle{{STL*}: Extending signal temporal logic with
  signal-value freezing operator}.
\newblock \bibinfo{journal}{\emph{Inf.\@ Comput.\@}}  \bibinfo{volume}{236}
  (\bibinfo{year}{2014}), \bibinfo{pages}{52--67}.
\newblock


\bibitem[\protect\citeauthoryear{Chechik, Devereux, Easterbrook, and
  Gurfinkel}{Chechik et~al\mbox{.}}{2003}]%
        {Chechik_etal:mv_mc}
\bibfield{author}{\bibinfo{person}{Marsha Chechik}, \bibinfo{person}{Benet
  Devereux}, \bibinfo{person}{Steve Easterbrook}, {and} \bibinfo{person}{Arie
  Gurfinkel}.} \bibinfo{year}{2003}\natexlab{}.
\newblock \showarticletitle{Multi-valued symbolic model-checking}.
\newblock \bibinfo{journal}{\emph{ACM Trans.\@ Softw.\@ Eng.\@ Meth.\@}}
  \bibinfo{volume}{12}, \bibinfo{number}{4} (\bibinfo{year}{2003}),
  \bibinfo{pages}{371--408}.
\newblock


\bibitem[\protect\citeauthoryear{Colombo and Falcone}{Colombo and
  Falcone}{2016}]%
        {ColomboFalcone:global_clock}
\bibfield{author}{\bibinfo{person}{Christian Colombo} {and}
  \bibinfo{person}{Yli{\`{e}}s Falcone}.} \bibinfo{year}{2016}\natexlab{}.
\newblock \showarticletitle{Organising {LTL} Monitors over Distributed Systems
  with a Global Clock}.
\newblock \bibinfo{journal}{\emph{Form.\@ Methods Syst.\@ Des.\@}}
  \bibinfo{volume}{49}, \bibinfo{number}{1} (\bibinfo{year}{2016}),
  \bibinfo{pages}{109--158}.
\newblock


\bibitem[\protect\citeauthoryear{Cristian and Fetzer}{Cristian and
  Fetzer}{1999}]%
        {Cristian_Fetzer:timed_model}
\bibfield{author}{\bibinfo{person}{Flaviu Cristian} {and}
  \bibinfo{person}{Christof Fetzer}.} \bibinfo{year}{1999}\natexlab{}.
\newblock \showarticletitle{The Timed Asynchronous Distributed System Model}.
\newblock \bibinfo{journal}{\emph{IEEE Trans. Parallel Distrib. Syst.}}
  \bibinfo{volume}{10}, \bibinfo{number}{6} (\bibinfo{year}{1999}),
  \bibinfo{pages}{642--657}.
\newblock


\bibitem[\protect\citeauthoryear{Demri and Lazi\'{c}}{Demri and
  Lazi\'{c}}{2009}]%
        {Demri:2009}
\bibfield{author}{\bibinfo{person}{St{\'e}phane Demri} {and}
  \bibinfo{person}{Ranko Lazi\'{c}}.} \bibinfo{year}{2009}\natexlab{}.
\newblock \showarticletitle{{LTL} with the Freeze Quantifier and Register
  Automata}.
\newblock \bibinfo{journal}{\emph{ACM Trans.\@ Comput.\@ Log.\@}}
  \bibinfo{volume}{10}, \bibinfo{number}{3}, Article \bibinfo{articleno}{16}
  (\bibinfo{year}{2009}), \bibinfo{numpages}{30}~pages.
\newblock


\bibitem[\protect\citeauthoryear{Falcone, Cornebize, and Fernandez}{Falcone
  et~al\mbox{.}}{2014}]%
        {Falcone_etal:decentralized_monitor_regular}
\bibfield{author}{\bibinfo{person}{Yli{\`e}s Falcone}, \bibinfo{person}{Tom
  Cornebize}, {and} \bibinfo{person}{Jean-Claude Fernandez}.}
  \bibinfo{year}{2014}\natexlab{}.
\newblock \showarticletitle{Efficient and Generalized Decentralized Monitoring
  of Regular Languages}. In \bibinfo{booktitle}{\emph{Proceedings of the 34th
  IFIP WG 6.1 International Conference on Formal Techniques for Distributed
  Objects, Components, and Systems (FORTE)}} \emph{(\bibinfo{series}{Lect.\@
  Notes Comput.\@ Sci.\@})}, Vol.~\bibinfo{volume}{8461}.
  \bibinfo{publisher}{Springer}, \bibinfo{address}{Berlin, Heidelberg},
  \bibinfo{pages}{66--83}.
\newblock


\bibitem[\protect\citeauthoryear{Feng, Lohrey, and Quaas}{Feng
  et~al\mbox{.}}{2017}]%
        {Feng_etal:MTL_data}
\bibfield{author}{\bibinfo{person}{Shiguang Feng}, \bibinfo{person}{Markus
  Lohrey}, {and} \bibinfo{person}{Karin Quaas}.}
  \bibinfo{year}{2017}\natexlab{}.
\newblock \showarticletitle{Path Checking for {MTL} and {TPTL} over Data
  Words}.
\newblock \bibinfo{journal}{\emph{Log.\@ Methods Comput.\@ Sci.\@}}
  \bibinfo{volume}{13}, \bibinfo{number}{3}, Article \bibinfo{articleno}{19}
  (\bibinfo{year}{2017}), \bibinfo{numpages}{34}~pages.
\newblock


\bibitem[\protect\citeauthoryear{Fischer, Lynch, and Paterson}{Fischer
  et~al\mbox{.}}{1985}]%
        {Fischer_etal:consensus}
\bibfield{author}{\bibinfo{person}{Michael~J. Fischer},
  \bibinfo{person}{Nancy~A. Lynch}, {and} \bibinfo{person}{Michael~S.
  Paterson}.} \bibinfo{year}{1985}\natexlab{}.
\newblock \showarticletitle{Impossibility of Distributed Consensus with One
  Faulty Process}.
\newblock \bibinfo{journal}{\emph{J.\@ ACM}} \bibinfo{volume}{32},
  \bibinfo{number}{2} (\bibinfo{year}{1985}), \bibinfo{pages}{374--382}.
\newblock


\bibitem[\protect\citeauthoryear{Francalanza, P{\'e}rez, and
  S{\'a}nchez}{Francalanza et~al\mbox{.}}{2018}]%
        {Francalanza_etal:rvbook_dist}
\bibfield{author}{\bibinfo{person}{Adrian Francalanza},
  \bibinfo{person}{Jorge~A. P{\'e}rez}, {and} \bibinfo{person}{C{\'e}sar
  S{\'a}nchez}.} \bibinfo{year}{2018}\natexlab{}.
\newblock \showarticletitle{Runtime Verification for Decentralised and
  Distributed Systems}.
\newblock In \bibinfo{booktitle}{\emph{Lectures on Runtime Verification -
  Introductory and Advanced Topics}}, \bibfield{editor}{\bibinfo{person}{Ezio
  Bartocci} {and} \bibinfo{person}{Yli{\`e}s Falcone}} (Eds.).
  \bibinfo{series}{Lect.\@ Notes Comput.\@ Sci.\@},
  Vol.~\bibinfo{volume}{10457}. \bibinfo{publisher}{Springer},
  \bibinfo{address}{Cham}, Chapter~6, \bibinfo{pages}{176--210}.
\newblock


\bibitem[\protect\citeauthoryear{Garg, Jia, and Datta}{Garg
  et~al\mbox{.}}{2011}]%
        {Garg_etal:policy_incomplete}
\bibfield{author}{\bibinfo{person}{Deepak Garg}, \bibinfo{person}{Limin Jia},
  {and} \bibinfo{person}{Anupam Datta}.} \bibinfo{year}{2011}\natexlab{}.
\newblock \showarticletitle{Policy Auditing over Incomplete Logs: Theory,
  Implementation and Applications}. In \bibinfo{booktitle}{\emph{Proceedings of
  the 18th ACM Conference on Computer and Communications Security (CCS)}}.
  \bibinfo{publisher}{ACM Press}, \bibinfo{address}{New York},
  \bibinfo{pages}{151--162}.
\newblock


\bibitem[\protect\citeauthoryear{Godefroid and Piterman}{Godefroid and
  Piterman}{2011}]%
        {Godefroid_Piterman:gmc}
\bibfield{author}{\bibinfo{person}{Patrice Godefroid} {and}
  \bibinfo{person}{Nir Piterman}.} \bibinfo{year}{2011}\natexlab{}.
\newblock \showarticletitle{{LTL} generalized model checking revisited}.
\newblock \bibinfo{journal}{\emph{Int.\@ J.\@ Softw.\@ Tools Technol.\@
  Trans.\@}} \bibinfo{volume}{13}, \bibinfo{number}{6} (\bibinfo{year}{2011}),
  \bibinfo{pages}{571--584}.
\newblock


\bibitem[\protect\citeauthoryear{Hall{\'e} and Villemaire}{Hall{\'e} and
  Villemaire}{2012}]%
        {Halle_Villemaire:fo_ltl}
\bibfield{author}{\bibinfo{person}{Sylvain Hall{\'e}} {and}
  \bibinfo{person}{Roger Villemaire}.} \bibinfo{year}{2012}\natexlab{}.
\newblock \showarticletitle{Runtime Enforcement of Web Service Message
  Contracts with Data}.
\newblock \bibinfo{journal}{\emph{IEEE Trans.\@ Serv.\@ Comput.\@}}
  \bibinfo{volume}{5}, \bibinfo{number}{2} (\bibinfo{year}{2012}),
  \bibinfo{pages}{192--206}.
\newblock


\bibitem[\protect\citeauthoryear{Havelund, Reger, Thoma, and Z{\u
  a}linescu}{Havelund et~al\mbox{.}}{2018}]%
        {Havelund_etal:rvbook_data}
\bibfield{author}{\bibinfo{person}{Klaus Havelund}, \bibinfo{person}{Giles
  Reger}, \bibinfo{person}{Daniel Thoma}, {and} \bibinfo{person}{Eugen Z{\u
  a}linescu}.} \bibinfo{year}{2018}\natexlab{}.
\newblock \showarticletitle{Monitoring Events that Carry Data}.
\newblock In \bibinfo{booktitle}{\emph{Lectures on Runtime Verification -
  Introductory and Advanced Topics}}, \bibfield{editor}{\bibinfo{person}{Ezio
  Bartocci} {and} \bibinfo{person}{Yli{\`e}s Falcone}} (Eds.).
  \bibinfo{series}{Lect.\@ Notes Comput.\@ Sci.\@},
  Vol.~\bibinfo{volume}{10457}. \bibinfo{publisher}{Springer},
  \bibinfo{address}{Cham}, Chapter~3, \bibinfo{pages}{61--102}.
\newblock


\bibitem[\protect\citeauthoryear{Henzinger}{Henzinger}{1990}]%
        {Henzinger:half_order}
\bibfield{author}{\bibinfo{person}{Thomas~A. Henzinger}.}
  \bibinfo{year}{1990}\natexlab{}.
\newblock \showarticletitle{Half-order modal logic: how to prove real-time
  properties}. In \bibinfo{booktitle}{\emph{Proceedings of the 9th Annual ACM
  Symposium on Principles of Distributed Computing (PODC)}}.
  \bibinfo{publisher}{ACM Press}, \bibinfo{address}{New York},
  \bibinfo{pages}{281--296}.
\newblock


\bibitem[\protect\citeauthoryear{Kleene}{Kleene}{1950}]%
        {Kleene:1950}
\bibfield{author}{\bibinfo{person}{Stephen~C. Kleene}.}
  \bibinfo{year}{1950}\natexlab{}.
\newblock \bibinfo{booktitle}{\emph{Introduction to Metamathematics}}.
\newblock \bibinfo{publisher}{D. Van Nostrand}, \bibinfo{address}{Princeton}.
\newblock


\bibitem[\protect\citeauthoryear{Koymans}{Koymans}{1990}]%
        {Koymans:realtime_properties}
\bibfield{author}{\bibinfo{person}{Ron Koymans}.}
  \bibinfo{year}{1990}\natexlab{}.
\newblock \showarticletitle{Specifying Real-Time Properties with Metric
  Temporal Logic}.
\newblock \bibinfo{journal}{\emph{Real-Time Syst.\@}} \bibinfo{volume}{2},
  \bibinfo{number}{4} (\bibinfo{year}{1990}), \bibinfo{pages}{255--299}.
\newblock


\bibitem[\protect\citeauthoryear{Lamport}{Lamport}{1978}]%
        {Lamport:clocks}
\bibfield{author}{\bibinfo{person}{Leslie Lamport}.}
  \bibinfo{year}{1978}\natexlab{}.
\newblock \showarticletitle{Time, clocks, and the ordering of events in a
  distributed system}.
\newblock \bibinfo{journal}{\emph{Commun.\@ ACM}} \bibinfo{volume}{21},
  \bibinfo{number}{7} (\bibinfo{year}{1978}), \bibinfo{pages}{558--565}.
\newblock


\bibitem[\protect\citeauthoryear{Li, Tufte, Shkapenyuk, Papadimos, Johnson, and
  Maier}{Li et~al\mbox{.}}{2008}]%
        {Li_etal:out-of-order}
\bibfield{author}{\bibinfo{person}{Jin Li}, \bibinfo{person}{Kristin Tufte},
  \bibinfo{person}{Vladislav Shkapenyuk}, \bibinfo{person}{Vassilis Papadimos},
  \bibinfo{person}{Theodore Johnson}, {and} \bibinfo{person}{David Maier}.}
  \bibinfo{year}{2008}\natexlab{}.
\newblock \showarticletitle{Out-of-order processing: a new architecture for
  high-performance stream systems}.
\newblock \bibinfo{journal}{\emph{Proc.\@ VLDB Endow.\@}} \bibinfo{volume}{1},
  \bibinfo{number}{1} (\bibinfo{year}{2008}), \bibinfo{pages}{274--288}.
\newblock


\bibitem[\protect\citeauthoryear{Maler and Nickovic}{Maler and
  Nickovic}{2004}]%
        {MN04-signals}
\bibfield{author}{\bibinfo{person}{Oded Maler} {and} \bibinfo{person}{Dejan
  Nickovic}.} \bibinfo{year}{2004}\natexlab{}.
\newblock \showarticletitle{Monitoring Temporal Properties of Continuous
  Signals}. In \bibinfo{booktitle}{\emph{Proceedings of the Joint International
  Conferences on Formal Modelling and Analysis of Timed Systems (FORMATS) and
  on Formal Techniques in Real-Time and Fault-Tolerant Systems (FTRTFT)}}
  \emph{(\bibinfo{series}{Lect.\@ Notes Comput.\@ Sci.\@})},
  Vol.~\bibinfo{volume}{3253}. \bibinfo{publisher}{Springer},
  \bibinfo{address}{Berlin, Heidelberg}, \bibinfo{pages}{152--166}.
\newblock


\bibitem[\protect\citeauthoryear{Meredith, Jin, Griffith, Chen, and
  Ro{\cb{s}}u}{Meredith et~al\mbox{.}}{2012}]%
        {Meredith_etal:mop}
\bibfield{author}{\bibinfo{person}{Patrick~O'Neil Meredith},
  \bibinfo{person}{Dongyun Jin}, \bibinfo{person}{Dennis Griffith},
  \bibinfo{person}{Feng Chen}, {and} \bibinfo{person}{Grigore Ro{\cb{s}}u}.}
  \bibinfo{year}{2012}\natexlab{}.
\newblock \showarticletitle{An overview of the {MOP} runtime verification
  framework}.
\newblock \bibinfo{journal}{\emph{Int.\@ J.\@ Softw.\@ Tools Technol.\@
  Trans.\@}} \bibinfo{volume}{14}, \bibinfo{number}{3} (\bibinfo{year}{2012}),
  \bibinfo{pages}{249--289}.
\newblock


\bibitem[\protect\citeauthoryear{Mills}{Mills}{1995}]%
        {Mills:NTP_improved}
\bibfield{author}{\bibinfo{person}{David~L. Mills}.}
  \bibinfo{year}{1995}\natexlab{}.
\newblock \showarticletitle{Improved algorithms for synchronizing computer
  network clocks}.
\newblock \bibinfo{journal}{\emph{IEEE/ACM Trans. Netw.}} \bibinfo{volume}{3},
  \bibinfo{number}{3} (\bibinfo{year}{1995}), \bibinfo{pages}{245--254}.
\newblock


\bibitem[\protect\citeauthoryear{Mostafa and Bonakdarbour}{Mostafa and
  Bonakdarbour}{2015}]%
        {Mostafa_Bonakdarbour:decentralized_rv}
\bibfield{author}{\bibinfo{person}{Menna Mostafa} {and} \bibinfo{person}{Borzoo
  Bonakdarbour}.} \bibinfo{year}{2015}\natexlab{}.
\newblock \showarticletitle{Decentralized Runtime Verification of {LTL}
  Specifications in Distributed Systems}. In
  \bibinfo{booktitle}{\emph{Proceedings of the 29th IEEE International Parallel
  and Distributed Processing Symposium (IPDPS)}}. \bibinfo{publisher}{IEEE
  Computer Society}, \bibinfo{address}{Los Alamitos},
  \bibinfo{pages}{494--503}.
\newblock


\bibitem[\protect\citeauthoryear{Ouaknine and Worrell}{Ouaknine and
  Worrell}{2006}]%
        {OuaknineW06}
\bibfield{author}{\bibinfo{person}{Jo{\"{e}}l Ouaknine} {and}
  \bibinfo{person}{James Worrell}.} \bibinfo{year}{2006}\natexlab{}.
\newblock \showarticletitle{On Metric Temporal Logic and Faulty Turing
  Machines}. In \bibinfo{booktitle}{\emph{Proceedings of the 9th International
  Conference on Foundations of Software Science and Computation Structures
  (FOSSACS)}} \emph{(\bibinfo{series}{Lect.\@ Notes Comput.\@ Sci.\@})},
  Vol.~\bibinfo{volume}{3921}. \bibinfo{publisher}{Springer},
  \bibinfo{address}{Berlin, Heidelberg}, \bibinfo{pages}{217--230}.
\newblock


\bibitem[\protect\citeauthoryear{Ro\cb{s}u and Chen}{Ro\cb{s}u and
  Chen}{2012}]%
        {Rosu_Chen:parametric_rv}
\bibfield{author}{\bibinfo{person}{Grigore Ro\cb{s}u} {and}
  \bibinfo{person}{Feng Chen}.} \bibinfo{year}{2012}\natexlab{}.
\newblock \showarticletitle{Semantics and Algorithms for Parametric
  Monitoring}.
\newblock \bibinfo{journal}{\emph{Log.\@ Methods Comput.\@ Sci.\@}}
  \bibinfo{volume}{8}, \bibinfo{number}{1}, Article \bibinfo{articleno}{9}
  (\bibinfo{year}{2012}), \bibinfo{numpages}{47}~pages.
\newblock


\bibitem[\protect\citeauthoryear{Ryckbosch and Diwan}{Ryckbosch and
  Diwan}{2014}]%
        {Ryckbosch_Diwan:performancetraces}
\bibfield{author}{\bibinfo{person}{Frederick Ryckbosch} {and}
  \bibinfo{person}{Amer Diwan}.} \bibinfo{year}{2014}\natexlab{}.
\newblock \showarticletitle{Analyzing performance traces using temporal
  formulas}.
\newblock \bibinfo{journal}{\emph{Softw. Pract. Exper.}} \bibinfo{volume}{44},
  \bibinfo{number}{7} (\bibinfo{year}{2014}), \bibinfo{pages}{777--792}.
\newblock


\bibitem[\protect\citeauthoryear{Scheffel and Schmitz}{Scheffel and
  Schmitz}{2014}]%
        {Scheffel_Schmitz:asynchronous_distributed_rv}
\bibfield{author}{\bibinfo{person}{Torben Scheffel} {and}
  \bibinfo{person}{Malte Schmitz}.} \bibinfo{year}{2014}\natexlab{}.
\newblock \showarticletitle{Three-valued asynchronous distributed runtime
  verification}. In \bibinfo{booktitle}{\emph{Proceedings of the 12th ACM/IEEE
  International Conference on Formal Methods and Models for Codesign
  (MEMCODE)}}. \bibinfo{publisher}{IEEE Computer Society},
  \bibinfo{address}{Los Alamitos}, \bibinfo{pages}{52--61}.
\newblock


\bibitem[\protect\citeauthoryear{Sen, Vardhan, Agha, and Ro\cb{s}u}{Sen
  et~al\mbox{.}}{2004}]%
        {Sen_etal:decentralized_distributed_monitoring}
\bibfield{author}{\bibinfo{person}{Koushik Sen}, \bibinfo{person}{Abhay
  Vardhan}, \bibinfo{person}{Gul Agha}, {and} \bibinfo{person}{Grigore
  Ro\cb{s}u}.} \bibinfo{year}{2004}\natexlab{}.
\newblock \showarticletitle{Efficient decentralized monitoring of safety in
  distributed systems}. In \bibinfo{booktitle}{\emph{Proceedings of the 26th
  International Conference on Software Engineering (ICSE)}}.
  \bibinfo{publisher}{IEEE Computer Society}, \bibinfo{address}{Los Alamitos},
  \bibinfo{pages}{418--427}.
\newblock


\bibitem[\protect\citeauthoryear{Sistla and Clarke}{Sistla and Clarke}{1985}]%
        {Sistla_Clarke:complexity_ltl}
\bibfield{author}{\bibinfo{person}{A.~Prasad Sistla} {and}
  \bibinfo{person}{Edmund~M. Clarke}.} \bibinfo{year}{1985}\natexlab{}.
\newblock \showarticletitle{The complexity of propositional linear temporal
  logics}.
\newblock \bibinfo{journal}{\emph{J.\@ ACM}} \bibinfo{volume}{32},
  \bibinfo{number}{3} (\bibinfo{year}{1985}), \bibinfo{pages}{733--749}.
\newblock


\bibitem[\protect\citeauthoryear{Srivastava and Widom}{Srivastava and
  Widom}{2004}]%
        {Srivastava_Widom:datastreams}
\bibfield{author}{\bibinfo{person}{Utkarsh Srivastava} {and}
  \bibinfo{person}{Jennifer Widom}.} \bibinfo{year}{2004}\natexlab{}.
\newblock \showarticletitle{Flexible Time Management in Data Stream Systems}.
  In \bibinfo{booktitle}{\emph{Proceedings of the 23rd ACM Symposium on
  Principles of Database Systems (PODS)}}. \bibinfo{publisher}{ACM Press},
  \bibinfo{address}{New York}, \bibinfo{pages}{263--274}.
\newblock


\bibitem[\protect\citeauthoryear{Stoller, Bartocci, Seyster, Grosu, Havelund,
  Smolka, and Zadok}{Stoller et~al\mbox{.}}{2011}]%
        {Stoller_etal:RV_stateest}
\bibfield{author}{\bibinfo{person}{Scott~D. Stoller}, \bibinfo{person}{Ezio
  Bartocci}, \bibinfo{person}{Justin Seyster}, \bibinfo{person}{Radu Grosu},
  \bibinfo{person}{Klaus Havelund}, \bibinfo{person}{Scott~A. Smolka}, {and}
  \bibinfo{person}{Erez Zadok}.} \bibinfo{year}{2011}\natexlab{}.
\newblock \showarticletitle{Runtime Verification with State Estimation}. In
  \bibinfo{booktitle}{\emph{Proceedings of the 2nd International Conference on
  Runtime Verification (RV)}} \emph{(\bibinfo{series}{Lect.\@ Notes Comput.\@
  Sci.\@})}, Vol.~\bibinfo{volume}{7186}. \bibinfo{publisher}{Springer},
  \bibinfo{address}{Berlin, Heidelberg}, \bibinfo{pages}{193--207}.
\newblock


\end{thebibliography}


\end{document}